\documentclass[twoside,11pt]{article}

%

%
%
%

\usepackage{jmlr2e}


\usepackage{amssymb}
\usepackage{amsfonts}
\usepackage{mathtools}
\usepackage{algorithm}
\usepackage{algcompatible}
\usepackage{thmtools,thm-restate}
\usepackage{comment}
\usepackage{ifpdf}
\usepackage{bm}
\usepackage{color}
\usepackage[displaymath,mathlines]{lineno}
\usepackage{scalerel}
\usepackage{stackengine,wasysym}
\usepackage{url}
\usepackage{hyperref}
\urldef\myhomepage\url{https://math.la.asu.edu/~slan}

\usepackage[stable]{footmisc}
\usepackage{enumerate}
\usepackage{enumitem}
\usepackage{tikz-cd}
\usetikzlibrary{arrows,positioning}
\tikzset{
  shift left/.style ={commutative diagrams/shift left={#1}},
  shift right/.style={commutative diagrams/shift right={#1}}
}
\usepackage{subcaption}
\usepackage{enumitem}
\usepackage{booktabs}
\usepackage{pifont}
\newcommand{\cmark}{\ding{51}}%
\newcommand{\xmark}{\ding{55}}%


\newtheorem{lem}{Lemma}
\newtheorem{prop}{Proposition}

\newtheorem{dfn}{Definition}
\newtheorem{asump}{Assumption}

\newtheorem{rk}{Remark}

\newcommand{\vect}[1]{\boldsymbol{#1}}
\newcommand{\tp}[1]{{#1}^{\mathsf T}}
\renewcommand{\bar}{\overline}
\newcommand{\eps}{\epsilon}
\newcommand{\pa}{\partial}

\renewcommand{\eps}{\varepsilon}
\renewcommand{\epsilon}{\varepsilon}
\renewcommand{\Sigma}{\varSigma}

\newcommand{\bbP}{\mathbb P}
\newcommand{\bbE}{\mathbb E}
\newcommand{\E}{\mathrm E}
\newcommand{\V}{\mathrm{Var}}
\newcommand{\Cov}{\mathrm{Cov}}

\newcommand{\tr}{\mathrm{tr}}
\newcommand{\VEC}{\mathrm{vec}}

\DeclareMathAlphabet\mathbfcal{OMS}{cmsy}{b}{n}
\DeclareMathOperator{\diag}{diag}
\newcommand{\half}{\frac12}

\newcommand{\bzero}{{\bf 0}}
\newcommand{\by}{{\bf y}}
\newcommand{\bY}{{\bf Y}}
\newcommand{\bx}{{\bf x}}
\newcommand{\bX}{{\bf X}}
\newcommand{\bt}{{\bf t}}
\newcommand{\bz}{{\bf z}}
\newcommand{\bZ}{{\bf Z}}
\newcommand{\bdm}{{\bf m}}
\newcommand{\bM}{{\bf M}}
\newcommand{\bC}{{\bf C}}
\newcommand{\bI}{{\bf I}}
\newcommand{\GP}{\mathcal{GP}}
\newcommand{\mC}{{\mathcal C}}
\newcommand{\mI}{{\mathcal I}}
\newcommand{\mL}{{\mathcal L}}
\newcommand{\mN}{{\mathcal N}}
\newcommand{\mMN}{\mathcal{MN}}
\newcommand{\mX}{{\mathcal X}}
\newcommand{\mT}{{\mathcal T}}
\newcommand{\mZ}{{\mathcal Z}}
\newcommand{\mD}{{\mathcal D}}

\newcommand{\mbB}{{\mathbb B}}
\newcommand{\mbH}{{\mathbb H}}
\newcommand{\bU}{{\bf U}}
\newcommand{\qprod}{{\,\dot\otimes\,}}
\newcommand{\qsum}{{\,\dot\oplus\,}}

\ifpdf
   \graphicspath{{./figure/PNG/}{./figure/PDF/}{./figure/}}
\else
   \graphicspath{{./figure/EPS/}{./figure/}}
\fi


\usepackage{lastpage}
\jmlrheading{23}{2022}{1-\pageref{LastPage}}{8/21; Revised
6/22}{8/22}{21-0952}{Shiwei Lan}


\ShortHeadings{Learning Temporal Evolution of Spatial Dependence}{Lan}
\firstpageno{1}

\begin{document}

\title{Learning Temporal Evolution of Spatial Dependence with Generalized Spatiotemporal Gaussian Process Models}

\author{\name Shiwei Lan \email slan@asu.edu \\
       \addr School of Mathematical and Statistical Sciences\\
       Arizona State University\\
       Tempe, AZ 85287, USA}

\editor{Matthias Seeger}

\maketitle

\begin{abstract}
A large number of scientific studies involve high-dimensional spatiotemporal data with complicated relationships. 
In this paper, we focus on a type of space-time interaction named \emph{temporal evolution of spatial dependence (TESD)}, which is a zero time-lag spatiotemporal covariance.
For this purpose, we propose a novel Bayesian nonparametric method based on non-stationary spatiotemporal Gaussian process (STGP).
The classic STGP has a covariance kernel separable in space and time, failed to characterize TESD.
More recent works on non-separable STGP treat location and time together as a joint variable, which is unnecessarily inefficient.
We generalize STGP (gSTGP) to introduce time-dependence to the spatial kernel by varying its eigenvalues over time in the Mercer's representation.
The resulting non-stationary non-separable covariance model bares a quasi Kronecker sum structure. 
Finally, a hierarchical Bayesian model for the joint covariance is proposed to allow for full flexibility in learning TESD.
A simulation study and a longitudinal neuroimaging analysis on Alzheimer's patients demonstrate that the proposed methodology is (statistically) effective and (computationally) efficient in characterizing TESD.
Theoretic properties of gSTGP including posterior contraction (for covariance) are also studied.
\end{abstract}

\begin{keywords}
 Temporal Evolution of Spatial Dependence, Spatiotemporal Gaussian process, Non-stationary Non-separable Kernel, Quasi Kronecker Product/Sum Structure, Nonparametric Spatiotemporal Covariance Model
\end{keywords}

\section{Introduction}
Spatiotemporal data are ubiquitous in our daily life. For example, the climate data manifest a trend of global warming, and the traffic data feature a network structure in space and a periodic pattern in time. 
There is usually intricate interaction between space and time in such spatiotemporal process $y(\bx, t)$. In particular, the dependence among spatial locations, $\Cov(y(\bx, \cdot), y(\bx', \cdot))$, may change over time. For example, in the study of dynamic brain connectivity in neuroscience \citep{cribben12,fiecas16,lan_2019}, the spatial dependence among multi-site brain signals varies along certain cognitive processes. In the longitudinal analysis of brain images \citep{hyun2016}, different brain regions also have changing connection in the progression of diseases such as Alzheimer. 
We formally define such \emph{temporal evolution of spatial dependence (TESD)} for process $y(\bx, t)$ as a zero time-lag ($t=t'$) covariance, i.e. $\Cov(y(\bx, t), y(\bx', t))$.
In general, TESD is an important subject to understand complex relationships, to predict their progress, and to extrapolate them to unknown territory. 
In this work, we propose a novel fully Bayesian nonparametric model based on a generalized spatiotemporal Gaussian process (STGP) to characterize TESD in spatiotemporal data. 

STGP is a special type of Gaussian process that can model both spatial and temporal information simultaneously. 
There is a rich literature on STGP, of which a large body \citep{Cressie_1999, Gneiting_2002, sarkka2012,sarkka2013,niu2016} imposes stationarity condition on the covariance $\mC(\bx-\bx', t-t')$.
However, stationary STGP would not work for learning TESD ($t=t'$) because temporal stationarity directly induces TESD constant in time, i.e. $\mC(\bx-\bx', 0)$.
Consequently, ``independent and identically distributed (i.i.d.)" \citep{marco2015} is not an appropriate assumption for spatiotemporal observations with changing TESD.
From a multivariate time series point of view, these observations can be seen as a realization of a vector process $\by(t)=(y(\bx_1,t), \cdots, y(\bx_I, t))$ that has time-invariant covariance (TESD).

Classical STGP often assumes a separable structure for the kernel such as $\mC_\bx\otimes \mC_t$, e.g. \cite{Paciorek_2003,Paciorek_2006,hartikainen2011,sarkka2012,sarkka2013,niu2016,kuzin2018,TODESCATO2020}.
Separable covariance in this setting ($t=t'$) reduces to a scalar factor by a spatial kernel, i.e. $\mC_\bx(\bx,\bx') \cdot \sigma^2_t$, thus loses full freedom in describing TESD (See more details in Section \ref{sec:STGP}).
To better reflect the space-time interactions, a large class of non-separable STGP models have been proposed by parametric construction \citep{Cressie_1999, Gneiting_2002}, by spectral representation \citep{Fuentes_2008}, and by kernel convolution \citep{marco2015,Wang_2020,das2020} or mixing \citep{Fonseca_2011}, etc.
Some works \citep{datta2016,hyun2016} treat spatial and temporal variables with no difference in the joint kernel, leading to a high-dimensional dense covariance matrix which is inefficient for learning TESD (See Section \ref{sec:comparison}).

Even among non-stationary non-separable covariance models \citep{Fuentes_2008,singh2010,cressie2011,luttinen2012,marco2015,datta2016,hyun2016,senanayake2016,Zhang_2020,Wang_2020,das2020},
most of them focus on modeling and predicting mean functions but are not designed for efficient and flexible learning of TESD.

In the majority of existing works, the joint covariance kernel is built from parametric covariance functions such as exponential, Mat\'ern, or their derivations. For example, \cite{Cressie_1999, Gneiting_2002} give conditions on admissible functions for such parametric construction of stationary non-separable models and \cite{Fuentes_2008} generalize to the non-stationary non-separable case.
Although regarded as nonparametric models for mean functions, they are limited in their capability of learning TESD (See Section \ref{sec:simulation}).
Even in the works of more recently proposed deep GP \citep{Damianou_2013,Salimbeni_2017,Dunlop_2018,Zhao_2021}, covariance kernels bare very flexible structures but are parameterized by neural networks.
In general, there is a lack of flexible and efficient Bayesian non-parametric models particularly for \emph{spatiotemporal covariance learning} (e.g. TESD). This paper aims to fill the blank in the literature.

To learn TESD, we propose a time-dependent spatial kernel, $\mC_{\bx|t}$, to generalize classical separable STGP. Based on the Mercer's representation of the spatial kernel, $\mC_\bx$, we introduce the time-dependence by varying the eigenvalues of $\mC_\bx$ in time. 
Moreover, we endow an independent GP (hyper-)prior on these dynamic eigenvalues and obtain a nonparametric model to flexibly capture complex TESD (See Section \ref{sec:Td-Skernel} and Section \ref{sec:simulation}).
To respect the time-changing nature of TESD, we construct a novel non-separable kernel with a quasi Kronecker sum structure (See model II in Section \ref{sec:gSTGP}). This structure results in a highly sparse joint covariance matrix, making it amenable for efficient inference. 
As a reference, we also consider a quasi Kronecker product generalization more suitable for i.i.d. observations (See model I in Section \ref{sec:gSTGP}). This is done to justify the careful design of covariance kernel, very essential in learning TESD with STGP. 

The Mercer's representation and the Karhunen-Lo\'eve expansion were considered in the literature of spatiotemporal modeling \citep{West_1997,Wikle_1999,Wikle_2002,Fontanella_2003,cressie2011}.
Similar ideas of mixture of basis functions have recently been explored in \citep[coregionalization,][]{Banerjee_2015}, \citep[Dirichlet mixture,][]{Gelfand_2005,das2020}.
The novelty of our construction lies in:
1) the quasi Kronecker sum formulation with a balanced structure designed for efficient covariance learning (Section \ref{sec:comparison}),
2) random construction of dynamic spatial kernel flexible for modeling TESD (Section \ref{sec:rand_construction}), and
3) theoretic guarantee for the Bayesian learning of TESD (Section \ref{sec:post_contr}).
The proposed methodology also generalizes the semi-parametric approaches \citep{wilson11,fox15,lan_2019}.

Focusing on TESD in the spatiotemporal covariance learning, this work makes multi-fold contributions:
\begin{enumerate}[nosep]
\item This is a novel non-separable, non-stationary, and fully nonparametric covariance model dedicated to learning TESD in spatiotemporal analysis;
\item The separable STGP is generalized by introducing the time-dependence to the spatial kernel via the Mercer's representation;
\item It provides a systematic comparison among multiple model structures of STGP based on (quasi) Kronecker product and sum both theoretically and numerically.
\end{enumerate}
The proposed methodology sheds light on brain degradation of Alzheimer's patients in a longitudinal neuroimaging analysis.
The potential utility of this work can be found in multiple areas including the genome-wide association study (GWAS), climate change, and investment management etc.

The rest of the paper is organized as follows. In Section \ref{sec:gen_STGP}, the classical separable STGP (baseline model 0) is reviewed and generalized by introducing the time-dependence to the spatial kernel. 
Two model structures, quasi Kronecker product (referenced model I) and quasi Kronecker sum (proposed model II), in the generalized STGP (gSTGP) are proposed and compared for learning TESD. 
Then we systematically investigate various theoretic properties of gSTGP in Section \ref{sec:theory} and briefly discuss the posterior inference and the prediction of TESD in Section \ref{sec:infpred}. 
In Section \ref{sec:numerics}, we conduct a simulation study (Section \ref{sec:simulation}) and apply the proposed methodology to analyze a series of positron emission tomography (PET) brain images of Alzheimer's patients (Section \ref{sec:ADPET}) obtained from the Alzheimer's Disease Neuroimaging Initiative \citep{ADNI}.
By comparing with various non-stationary/non-separable models, we
illustrate the effectiveness and efficiency of the proposed gSTGP in modeling and predicting TESD of spatiotemporal processes. Finally we conclude in Section \ref{sec:conclusion} with a few comments on the methodology and some discussions of future directions.

\section{Generalizing Spatiotemporal Gaussian Processes}\label{sec:gen_STGP}
In this section, we first define the (separable) STGP using the matrix normal distribution and explain why it fails to characterize TESD.
This motivates the generalization of STGP to introduce the time-dependent spatial kernel. 
Two model structures based on quasi Kronecker product and sum are constructed and compared but the latter is found to be more theoretically effective and computationally efficient in learning TESD.

Let $\mX\subset \mathbb R^d$ be a bounded spatial domain and let $\mT\subset \mathbb R_+$ be a bounded temporal domain. Denote $\mZ :=\mX\times \mT$ as the joint domain and $\bz:=(\bx,t)$ as the joint variable.
The spatiotemporal data $\{y_{ij}\,| i=1,\cdots,I;\, j=1,\cdots,J\}$ are taken on a grid of points $\{\bz_{ij}=(\bx_i,t_j)\, |\bx_i\in\mX, t_j\in\mT\}$ with the spatial discrete size $I$ and the temporal discrete size $J$.
A (centered) STGP is uniquely determined by its covariance kernel $\mC_\bz : \mZ\times\mZ \rightarrow \mathbb R$, a bilinear symmetric positive-definite function.
$\mC_\bz$ could be defined by 
exploring structures in space and time, to be detailed below.

\subsection{Spatiotemporal Gaussian Process}\label{sec:STGP}
The spatiotemporal data $\{y_{ij}\}$ have noise usually assumed i.i.d. and are often modeled using the standard STGP model:
\begin{equation}\label{eq:stgpm}
\begin{aligned}
y_{ij} &= f(\bx_i,t_j) + \eps_{ij}, \quad \eps_{ij} \overset{iid}{\sim} \mN(0,\sigma^2_\eps) \\
f(\bz) &\sim \GP(0, \mC_\bz)
\end{aligned}
\end{equation}
where the joint spatiotemporal kernel $\mC_\bz$ 
is in the form of a Kronecker product of spatial kernel $\mC_\bx$ and temporal kernel $\mC_t$:
\begin{equation}\label{eq:sepkern}
\textrm{model 0 (separable)}: \qquad \mC_\bz = \mC_\bx \otimes \mC_t : \mZ\times\mZ \rightarrow \mathbb R,\quad (\bz, \bz') \mapsto \mC_\bx(\bx, \bx') \cdot \mC_t(t, t')
\end{equation}

With such separable kernel, we define the classical STGP through the matrix normal distribution as follows.
\begin{dfn}[STGP]
A stochastic process $f(\bx,t)$ is called (separable) spatiotemporal Gaussian process with a mean function $m(\bx,t)$, a spatial kernel $\mC_\bx$ and a temporal kernel $\mC_t$ 
if for any finite collection of locations $\bX=\{\bx_i\}_{i=1}^I$ and times $\bt=\{t_j\}_{j=1}^J$,
\begin{equation}
{\bf F} = f(\bX,\bt) = [f(\bx_i, t_j)]_{I\times J} \sim \mMN_{I\times J}( \bM, \bC_\bx, \bC_t)
\end{equation}
where $\bM = m(\bX, \bt) = [m(\bx_i, t_j)]_{I\times J}$, $\bC_\bx = \mC_\bx(\bX,\bX) = [\mC(\bx_i, \bx_{i'})]_{I\times I}$,  $\bC_t = \mC_t(\bt,\bt) = [\mC(t_j, t_{j'})]_{J\times J}$, and the matrix normal distribution $\mMN$ is interpreted through the vectorization as in the following remark.
We denote $f\sim \GP(m,\mC_\bx,\mC_t)$.
\end{dfn}
\begin{rk}
If we vectorize the matrix ${\bf F}_{I \times J}$, then we have
\begin{equation*}
\VEC({\bf F}) \sim \mN(\VEC(\bM), \bC_t\otimes\bC_\bx),\quad
\VEC(\tp{\bf F}) \sim \mN(\VEC(\tp\bM), \bC_\bx\otimes\bC_t)
\end{equation*}
where $\otimes$ is the regular Kronecker product for matrices.
For this reason, we also denote a (centered) STGP as $f\sim \GP(0,\mC_\bx \otimes \mC_t)$.
\end{rk}

Denote $\sigma^2_t := \mC_t(t,t)$.
For any fixed time $t\in \mT$, the covariance of the separable STGP $f(\bx, t)$ in the space domain is reduced to
\begin{equation}
\Cov[f(\bx,t), f(\bx',t)] = \mC_\bx(\bx,\bx') \cdot \sigma^2_t, \quad \forall t\in \mT
\end{equation}
which at best changes by a scalar factor in time (without full freedom in describing the time change of spatial dependence), or at worst is constant in time (when a stationary $\mC_t$ is adopted).
Such drawback of the separable kernel makes the corresponding STGP unable to characterize TESD, defined as follows.
\begin{dfn}[TESD]
The temporal evolution of spatial dependence (TESD) of a spatiotemporal process $y(\bx, t)$ is the spatial covariance conditioned on a common time $t=t'\in \mT$, denoted as $\mC_{y|t}$:
\begin{equation}\label{eq:TESD}
\mC_{y|t}(\bx, \bx') :=\Cov[y(\bx, t), y(\bx', t)], \quad for \; \bx, \bx' \in \mX
\end{equation}
\end{dfn}
\begin{rk}
TESD can be viewed as a zero time-lag covariance in the spatiotemporal analysis. Therefore, the (weak) stationarity (in time) is an inappropriate assumption as it induces constant (hence trivial) TESD regardless of the kernel structure.
In the following we always assume non-stationarity for the process of interest unless specified otherwise.
\end{rk}
Because of the failure of separable kernel for learning TESD, we are motivated to remedy the classical STGP with time-dependence in the spatial covariance.

\subsection{Generalized Spatiotemporal Gaussian Process}\label{sec:gSTGP}
In this subsection, we generalize the classical STGP by introducing time-dependence to the spatial kernel $\mC_\bx$
while keeping the desirable structure in the joint kernel $\mC_\bz$.
Two structures are constructed based on quasi Kronecker product (model I) and sum (model II) respectively.
We promote model II as an effective and efficient tool for learning TESD and use model I as a reference to emphasize the importance of kernel design in STGP.

First, it is intuitive to replace the spatial $\mC_\bx$ kernel with a time-dependent analogy $\mC_{\bx|t}$ in the separable kernel \eqref{eq:sepkern} to derive the following non-separable joint kernel
\begin{equation}
\mC_\bz = \mC_{\bx|t} \qprod \mC_t
\end{equation}
where $\mC_{\bx|t}$ and the quasi Kronecker product $\qprod$ will be defined in Section \ref{sec:Td-Skernel}.
If we view $\{y_{ij}\}$ as observations with i.i.d. noise taken from a spatiotemporal process $y(\bz)$,
then the marginal covariance for $y$ in \eqref{eq:stgpm} becomes
\begin{equation}\label{eq:margcov1}
\textrm{model I (qKron-prod)}: \qquad \mC_y^\text{I} = \underbrace{\mC_{\bx|t} \qprod \mC_t}_{prior} \; + \; \underbrace{\sigma^2_\eps \mI_\bx \otimes \mI_t}_{likelihood}
\end{equation}
where $\mI_\bx(\bx,\bx')=\delta(\bx=\bx')$, and $\mI_t(t,t')=\delta(t=t')$ with $\delta(\cdot)$ being the Dirac function.

Alternatively, \cite{lan_2019} take a perspective from multivariate time series and model the noise in $\{y_{ij}\}$ as independent but not identically distributed (i.n.i.d.) (a more reasonable assumption for nonconstant TESD):
\begin{equation}\label{eq:vGPR}
\begin{aligned}
\by(t_j) &:= \{y(\bx_i, t_j)\}_{i=1}^I = {\bf f}(t_j) + \vect\eps_j, \quad \vect\eps_j \overset{inid}{\sim} \mN({\bf 0},\vect\Sigma_{t_j}) \\
f_i(t) &\overset{iid}{\sim} \GP(0, \mC_t), \quad i=1,\cdots, I.
\end{aligned}
\end{equation}
where the parametric covariance matrix $\vect\Sigma_t$ encodes the spatial dependence among multiple time series changing with time $t$.
If we replace $\vect\Sigma_t$ with a time-dependent spatial kernel $\mC_{\bx|t}$, then \eqref{eq:vGPR} can be generalized to a fully non-parametric model:
\begin{equation}\label{eq:fullBayesCOV}
\begin{aligned}
y(\bx, t)| m,\mC_{\bx|t} &\sim \GP_\bx(m, \mC_{\bx|t} \qprod \mI_t) \\
m(\bx, t) &\sim \GP_t(0, \mI_\bx \otimes \mC_t)
\end{aligned}
\end{equation}
which has the marginal covariance for $y$ in a form of quasi Kronecker sum (see Section \ref{sec:Td-Skernel}):
\begin{equation}\label{eq:margcov2}
\textrm{model II (qKron-sum)}: \qquad \mC_y^\text{II} = \underbrace{\mI_\bx \otimes \mC_t}_{prior} \; + \; \underbrace{\mC_{\bx|t} \qprod \mI_t}_{likelihood} =: \mC_{\bx|t} \qsum \mC_t
\end{equation}


\begin{figure}[t] 
   \centering
   \includegraphics[width=1\textwidth,height=.35\textwidth]{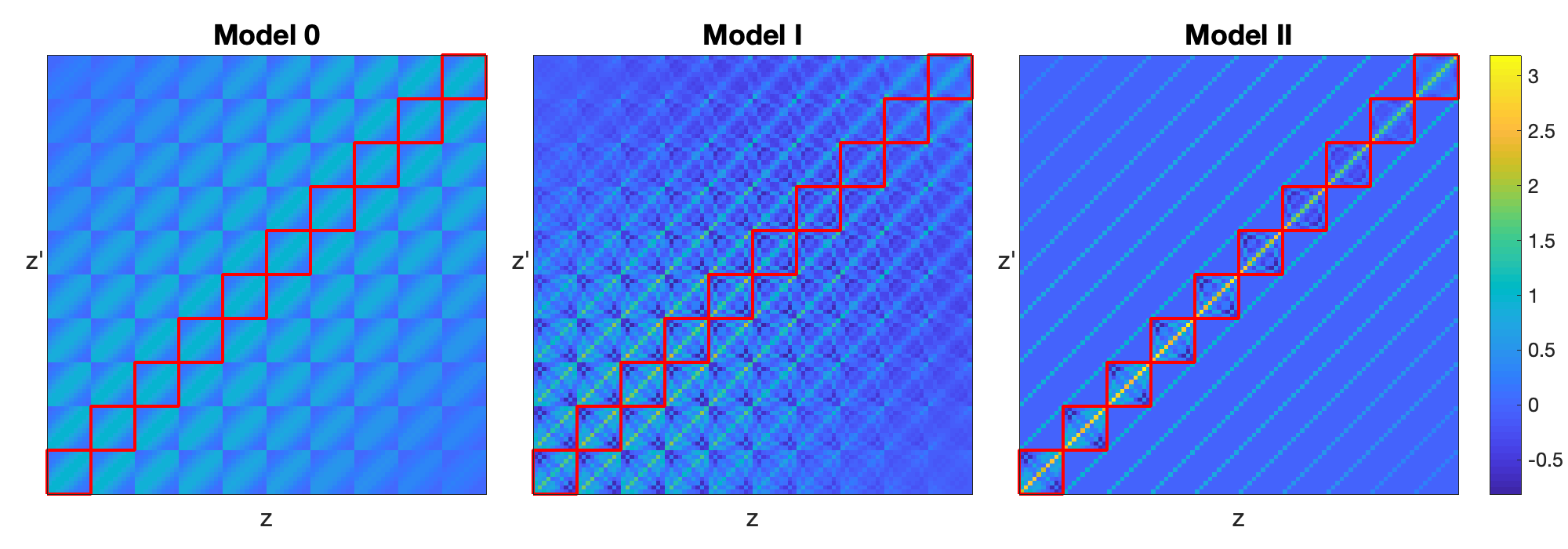} 
   \caption{Joint kernels $\mC_y$ specified by model 0 with separable structure $\mC_\bx\otimes \mC_t$ (left), model I with quasi Kronecker product structure $\mC_{\bx|t}\qprod\mC_t$ (middle) and model II with quasi Kronecker sum structure $\mC_{\bx|t}\qsum\mC_t$ (right).
   Blocks in red frame illustrate the temporal evolution of spatial kernel $\mC_{y|t}$ (TESD). Note $\bz=(\bx,t)$.}
   \label{fig:jtkern}
\end{figure}
Figure \ref{fig:jtkern} illustrates the structures of $\mC_y$ in the three models. 
If we arrange the joint state as $\bZ=\bX\otimes{\bf t}$, each of the blocks in red frame indicates a covariance matrix in the space $\bX$ at a time $t_j$, i.e. $\Cov[y(\bX,t_j), y(\bX,t_j)]_{I\times I}$, and together they describe how $\mC_{y|t}$ evolves in time (TESD).
From the time horizon, TESD can be viewed as a kernel (matrix) valued function (blocks in red frame) of time $t$.
Model 0 has constant TESD $\mC_{y|t}$ (same blocks). On the other hand, both models I and II incorporate the time-dependence for the spatial kernel through $\mC_{\bx|t}$ (changing blocks).
However, by allowing for a complex structure in the blocks off the main diagonal (not in red frame), model I would become overly complicated and expensive to fit. By contrast, model II provides enough flexibility in the main diagonal blocks (in red frame) for TESD, and yet results in a highly sparse joint kernel. See more comparison in Section \ref{sec:comparison}.
The following proposition states that the time-dependent spatial kernel $\mC_{\bx|t}$ is the \emph{essence} of TESD $\mC_{y|t}$ in these models.

\begin{prop}\label{prop:essence}
If $y$ is a spatiotemporal process according to one of the three models 
\eqref{eq:sepkern} \eqref{eq:margcov1} \eqref{eq:margcov2},
then we have the following TESD's 
\begin{equation}
\mC_{y|t}^\mathrm{0} \equiv\underbrace{\mC_\bx\sigma^2_t}_{prior}+\underbrace{\sigma^2_\eps\mI_\bx}_{likelihood}, \quad 
\mC_{y|t}^\mathrm{I} =\underbrace{\mC_{\bx|t}\sigma^2_t}_{prior}+\underbrace{\sigma^2_\eps\mI_\bx}_{likelihood}, \quad 
\mC_{y|t}^\mathrm{II} =\underbrace{\mI_\bx \sigma^2_t}_{prior} + \underbrace{\mC_{\bx|t}}_{likelihood}\\
\end{equation}
\end{prop}

\subsection{Construction of Time-Dependent Spatial Kernel}\label{sec:Td-Skernel}
In both models I \eqref{eq:margcov1} and II \eqref{eq:margcov2}, the time-dependent spatial kernel $\mC_{\bx|t}$ is the key to model TESD.
In this subsection, we construct $\mC_{\bx|t}$ by dynamically varying eigenvalues of the spatial kernel $\mC_\bx$ in the Mercer's theorem.

First, the centered (spatial) GP $\GP(0,\mC_\bx)$ is determined by its covariance kernel $\mC_\bx$
which defines a Hilbert-Schmidt integral operator on $L^2(\mX)$ as follows:
\begin{equation}\label{eq:intop}
T_{\mC_\bx}: L^2(\mX) \rightarrow L^2(\mX), \quad \phi(\cdot) \mapsto \int \mC_\bx(\cdot, \bx') \phi(\bx') d\bx'
\end{equation}
Denote $\{\lambda_\ell^2,\phi_\ell(\bx)\}$ as the eigen-pairs of $T_{\mC_\bx}$ such that $T_{\mC_\bx} \phi_\ell(\bx)=\lambda_\ell^2 \phi_\ell(\bx)$.
Then $\{\phi_\ell(\bx)\}$ serves as an orthonormal basis for $L^2(\mX)$. 
By the Mercer's theorem, we have the following representation of the spatial kernel $\mC_\bx$:
\begin{equation}\label{eq:mercer}
\mC_\bx(\bx, \bx') = \sum_{\ell=1}^\infty \lambda_\ell^2 \phi_\ell(\bx) \phi_\ell(\bx') 
\end{equation}
where the series converges in $L^2(\mX)$ norm.

\subsubsection{Deterministic Construction}
To introduce the time-dependence to the spatial kernel, thus denoted as $\mC_{\bx|t}$, we let the eigenvalues $\{\lambda_\ell^2\}$ change with time and denote them as $\{\lambda_\ell^2(t)\}$. 
That is, we define
\begin{equation}
\lambda_\ell^2(t):=\langle \phi_\ell(\bx), \mC_{\bx|t} \phi_\ell(\bx) \rangle, \quad T_{\mC_{\bx|t}} \phi_\ell(\bx)=\lambda_\ell^2(t) \phi_\ell(\bx)
\end{equation}
Let $\lambda(t):=\{\lambda_\ell(t)\}_{\ell=1}^\infty$ for $\forall t\in \mT$. 
To ensure the well-definedness of the generalization, we make the following assumption 
which essentially requires $\mC_{\bx|t}$ to be a trace ($\tr(\mC_{\bx|t})=\sum_{\ell=1}^\infty \lambda_\ell^2(\cdot)$) class operator in $L^1(\mT)$.
\begin{asump}\label{asmp:dyntrace}
We assume $\lambda_\ell\in L^2(\mT)$ for each $\ell\in \mathbb N$ and the infinite sequence $\lambda$ satisfy
\begin{equation}\label{eq:dyntrace}
\lambda \in \ell^2(L^2(\mT)), \quad i.e. \quad \Vert\lambda\Vert_{2,2}^2:= \sum_{\ell=1}^\infty \Vert\lambda_\ell(\cdot)\Vert_2^2<+\infty
\end{equation}
\end{asump}

Under Assumption \ref{asmp:dyntrace} we can have the following time-dependent spatial kernel $\mC_{\bx|t}$ well-defined through the series representation as in the Mercer's theorem (See Theorem \ref{thm:wellpose}):
\begin{equation}\label{eq:spatkern_t1}
\mC_{\bx|t}(\bx, \bx') = \sum_{\ell=1}^\infty \lambda_\ell^2(t) \cdot \phi_\ell(\bx) \phi_\ell(\bx') 
\end{equation}
%
where the eigen-basis $\{\phi_\ell(\bx)\}$ can be chosen as the eigen-functions of the spatial operator $\mC_\bx$ as in \eqref{eq:mercer} or other (e.g. Fourier) basis functions.
Such construction has some similarity to the ``coregionalization" model \citep{Banerjee_2015} for which the spatial process $\bY(\bx)={\bf A} {\bf w}(\bx)$ has a temporal covariance ${\bf T}={\bf A}\tp{\bf A}$ and independent component processes $w_\ell(\cdot) \sim \GP(0,\rho_\ell)$.
Denote ${\bf T}_\ell={\bf a}_\ell \tp{\bf a}_j$ with ${\bf a}_\ell$ as the $\ell$-th column of ${\bf A}$. Thus ${\bf T}=\sum_{\ell=1}^L {\bf T}_\ell$ is finite-dimensional. The resulting covariance $\Cov(\bY(\bx),\bY(\bx'))=\sum_{\ell=1}^L {\bf T}_\ell \cdot \rho_\ell(\bx-\bx')$ is analogous to a finite truncation of \eqref{eq:spatkern_t1}.

With $\mC_{\bx|t}$ in \eqref{eq:spatkern_t1} we can define the quasi Kronecker product for the prior kernel $\mC_m=\mC_{\bx|t} \qprod \mC_t$ in model I and the likelihood kernel $\mC_{y|m}=\mC_{\bx|t} \qprod \mI_t$ in model II repsectively:
\begin{align}
\mC_m^\text{I}(\bz,\bz') &= \mC_{\bx|t}^\half \mC_{\bx|t'}^\half \qprod \mC_t (\bz,\bz') = \sum_{\ell=1}^\infty \lambda_\ell(t) \mC_t(t,t') \lambda_\ell(t') \phi_\ell(\bx) \phi_\ell(\bx') \label{eq:jtkern} \\
\mC_{y|m}^\text{II}(\bz,\bz') &= \mC_{\bx|t} \qprod \mI_t (\bz,\bz') =\sum_{\ell=1}^\infty \lambda_\ell(t) \mI_t(t,t') \lambda_\ell(t')  \phi_\ell(\bx) \phi_\ell(\bx') \label{eq:likern}
\end{align}

\subsubsection{Random Construction}\label{sec:rand_construction}
\begin{figure}[t] 
   \centering
   \includegraphics[width=1\textwidth,height=.6\textwidth]{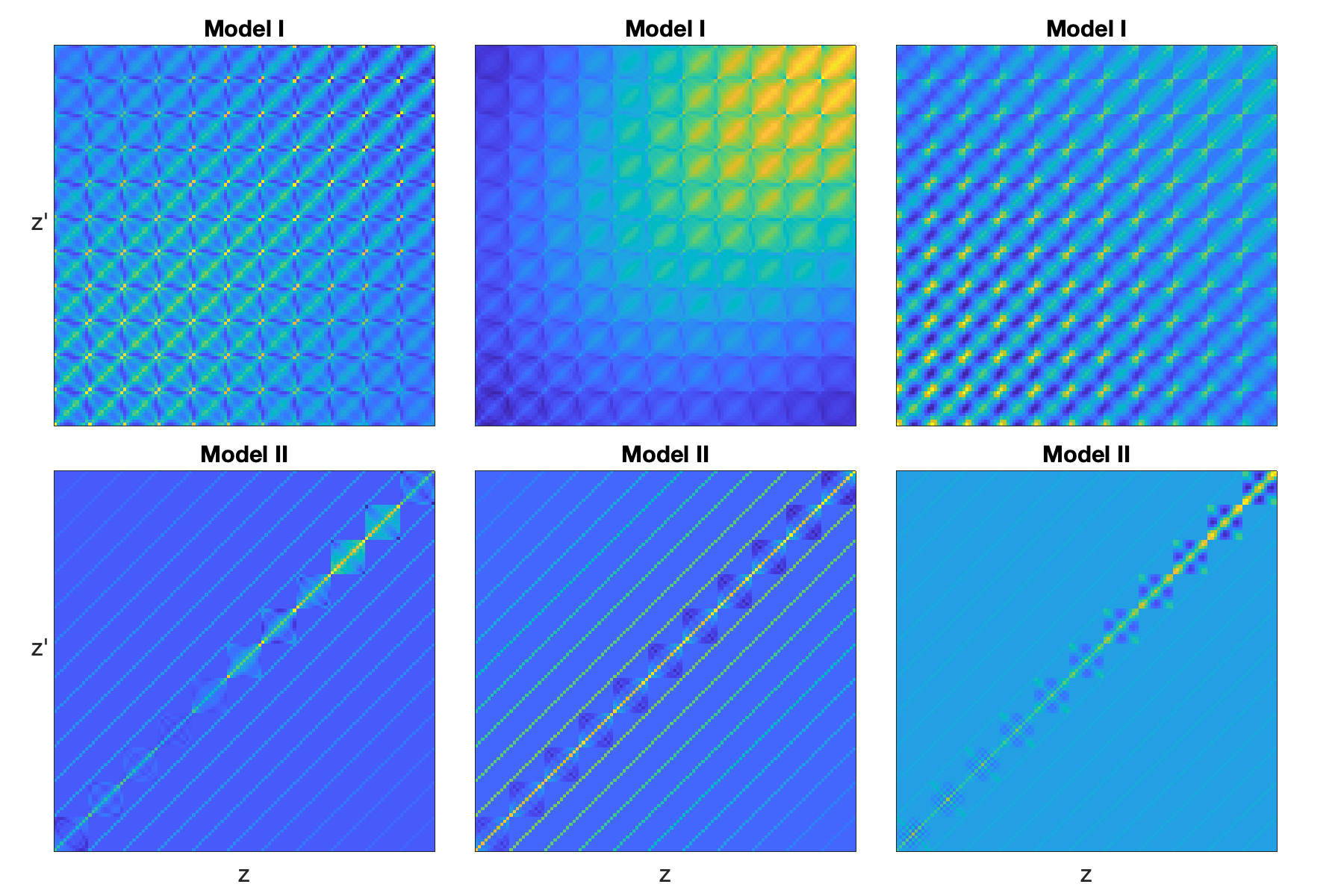} 
   \caption{Random draws of the joint covariance $\mC_y$ by model I (upper row) and model II (lower row) respectively.}
   \label{fig:randcov}
\end{figure}
To avoid parametric construction of $\lambda_\ell$, we adopt a full Bayesian approach and endow a (hyper) GP prior $\lambda_\ell(\cdot) \sim \GP(0, \mC_{\lambda,\ell})$ for each $\ell\in \mathbb N$, similarly as what is done for the mean function $m$ in standard GP regression.
Now we consider random $\lambda$ in the probability space $(\Omega, \mathcal B(\Omega), \bbP)$ with $\Omega=\ell^2(L^2(\mT))$, $\sigma$-algebra $\mathcal B(\Omega)$ and probability measure $\bbP$
defined as finite product of Gaussian measures $\{\GP(0, \mC_{\lambda,\ell})\}$ consistently extended to infinite product by Kolmogorov extension theorem \citep[c.f. Theorem 29 in section A.2.1 of][]{Dashti_2017}.
To reduce the modeling parameters, we let $\{\mC_{\lambda,\ell}\}$ share a common factor $\mC_u$ and model them with decaying magnitude, as detailed in the following assumption.
\begin{asump}\label{asmp:randeig}
Let the random infinite sequence $\lambda\in (\ell^2(L^2(\mT)), \mathcal B, \bbP)$ satisfy
\begin{equation}
\lambda_\ell(\cdot) \sim \GP(0, \mC_{\lambda,\ell}), \quad \mC_{\lambda,\ell} = \gamma_\ell^2 \mC_u, \qquad \sum_{\ell=1}^\infty \gamma_\ell^2 < \infty
\end{equation}
In Practice, we could set for each $\ell \in \mathbb N$,
\begin{equation}\label{eq:randcoeff}
\lambda_\ell(t) = \gamma_\ell u_\ell(t), \quad u_\ell(\cdot) \overset{iid}{\sim} \GP(0, \mC_u), \quad \gamma_\ell=\ell^{-\kappa/2} \;\textrm{for some}\; \kappa>1
\end{equation}
\end{asump}
Under Assumption \ref{asmp:randeig}, Assumption \ref{asmp:dyntrace} holds in $\bbP$ because $\bbE[\Vert \lambda\Vert_{2,2}^2] = \sum_{\ell=1}^\infty \gamma_\ell^2 \bbE[\Vert u_\ell\Vert_2^2] = \tr(\mC_u) \sum_{\ell=1}^\infty \gamma_\ell^2 <\infty$. A few random joint kernels $\mC_y$ by models I (upper row) and II (lower row) in Figure \ref{fig:randcov} illustrate the flexible prior candidates for the covariance $\mC_y$ produced by gSTGP to capture complicated TESD (main diagonal blocks).

\begin{figure}[t]
\centering
\resizebox{.9\textwidth}{0.45\textwidth}{
\begin{tikzpicture}
\begin{scope}
\node[shape=circle,draw,fill=gray!60] (Y) at (0,0) {$y_{ijk}$};
\node[shape=circle,draw,scale=1.3] (m) at (-1,1.5) {$m$};
\node[shape=circle,draw,scale=.8] (Cym) at (1,1.5) {$\mC_{y|m}^0$};
\node[shape=circle,draw] (Cm) at (-1,3) {$\mC_m^0$};
\path[-angle 90,thick]
(m) edge (Y)
(Cym) edge (Y)
(Cm) edge (m);
\draw (-1.4,-.75) rectangle (1.4,.75) node[pos=.85] {\small{IJK}};
\draw (0, -1.2) node {Model 0};
\end{scope}
\begin{scope}[xshift=2in]
\node[shape=circle,draw,fill=gray!60] (Y) at (0,0) {$y_{ijk}$};
\node[shape=circle,draw,scale=1.3] (m) at (-1,1.5) {$m$};
\node[shape=circle,draw,scale=.8] (Cym) at (1,1.5) {$\mC_{y|m}^\textrm{I}$};
\node[shape=circle,draw] (Cm) at (-1,3) {$\mC_m^\textrm{I}$};
\node[shape=circle,draw,scale=1.2] (u) at (-1,4.5) {$u_\ell$};
\node[shape=circle,draw,scale=1.1] (Cu) at (-1,6) {$\mC_u$};
\path[-angle 90,thick]
(m) edge (Y)
(Cym) edge (Y)
(Cm) edge (m)
(u) edge (Cm)
(Cu) edge (u);
\draw (-1.4,-.75) rectangle (1.4,.75) node[pos=.85] {\small{IJK}};
\draw (-1.7,3.9) rectangle ++(1.4,1.25) node[pos=.85] {\small{L}};
\draw (0, -1.2) node {Model I};
\end{scope}
\begin{scope}[xshift=4in]
\node[shape=circle,draw,,scale=1.1,fill=gray!60] (Y) at (0,0) {$\bY_k$};
\node[shape=circle,draw,scale=1.3] (m) at (-1,1.5) {$m$};
\node[shape=circle,draw,scale=.8] (Cym) at (1,1.5) {$\mC_{y|m}^\textrm{II}$};
\node[shape=circle,draw] (Cm) at (-1,3) {$\mC_m^\textrm{II}$};
\node[shape=circle,draw,scale=1.2] (u) at (1,3) {$u_\ell$};
\node[shape=circle,draw,scale=1.1] (Cu) at (1,4.5) {$\mC_u$};
\path[-angle 90,thick]
(m) edge (Y)
(Cym) edge (Y)
(Cm) edge (m)
(u) edge (Cym)
(Cu) edge (u);
\draw (-1.4,-.75) rectangle (1.4,.75) node[pos=.85] {\small{K}};
\draw (.3,2.4) rectangle ++(1.4,1.25) node[pos=.85] {\small{L}};
\draw (0, -1.2) node {Model II};
\end{scope}
\end{tikzpicture}}
\caption{Graphical structures of STGP Model 0 (left), I (middle) and II (right) in \eqref{eq:gstgpm}.}
\label{fig:graph_stgp}
\end{figure}

With all these definitions, we summarize the foregoing STGP models in the following unified form and illustrate their graphic structures in Figure \ref{fig:graph_stgp}.
\begin{equation}\label{eq:gstgpm}
\begin{aligned}
&& y(\bz)| m, \mC_{y|m} &\sim \GP( m,  \mC_{y|m}) && \\
&& m(\bz) &\sim \GP(0, \mC_m) && \\
\textrm{model 0 (baseline)}:&& \mC_{y|m}^0 &= \sigma^2_\eps \mI_\bx \otimes \mI_t, \quad &\mC_m^0 &= \mC_\bx \otimes \mC_t  \\
\textrm{model I (reference)}:&& \mC_{y|m}^\textrm{I} &= \sigma^2_\eps \mI_\bx \otimes \mI_t, \quad &\mC_m^\textrm{I} &= \mC_{\bx|t}(\lambda) \qprod \mC_t  \\
\textrm{model II (proposed)}:&& \mC_{y|m}^\textrm{II} &= \mC_{\bx|t}(\lambda) \qprod \mI_t, \quad &\mC_m^\textrm{II} &= \mI_\bx \otimes \mC_t \\
&& \lambda_\ell(t) &= \gamma_\ell u_\ell(t), \quad &u_\ell(\cdot) &\overset{iid}{\sim} \GP(0, \mC_u) \; \textrm{for} \; \ell \in \mathbb N
\end{aligned}
\end{equation}

\subsection{Comparison of Two Models}\label{sec:comparison}
Before concluding this section, we compare the generalized STGP models I \eqref{eq:margcov1} and II \eqref{eq:margcov2} and explain the superiority of model II compared with model I for learning TESD.

From the modeling perspective, model I puts much more weight on mean regression than on covariance learning (See Figure \ref{fig:graph_stgp}).
The STGP model \eqref{eq:gstgpm} can be equivalently viewed as decomposing the process into mean and residual: $y(\bz) = m(\bz) + \eps(\bz)$ with $m(\bz)\sim \GP(0, \mC_m)$ and $\eps(\bz)\sim \GP(0, \mC_{y|m})$.
The residual process $\eps(\bz)$ could possibly be inter-correlated in space over time but this is not correctly reflected with the likelihood kernel $\mC_{y|m}^\textrm{I}=\sigma^2_\eps \mI_\bx\otimes \mI_t$ in model I under i.i.d. assumption.
To make it worse, with more and more data, the posterior of the covariance function may not contract to the true value as it becomes dominated by the likelihood not sufficiently modeled. 
On the other hand, model II with a balanced structure (Figure \ref{fig:graph_stgp}) between the prior kernel $\mC_m^\textrm{II}$ and the likelihood kernel $\mC_{y|m}^\textrm{II}$ achieves a good trade-off.
Moreover, the posterior concentrates on the likelihood with the kernel structure $\mC_{y|m}^\textrm{II}=\mC_{\bx|t} \qprod \mI_t$ that could correctly capture TESD given enough data (See Figure \ref{fig:jtkern} and more numerical evidence in Section \ref{sec:sim_fit}).

From the computing perspective, model II has significantly lower computational complexity than model I.
As illustrated in Figure \ref{fig:jtkern}, discretizing $\mC_y^\text{II}$ leads to a highly sparse covariance matrix formed by a block-diagonal matrix $\bC_{\bx|t}$ and a sparse matrix $\bC_t\otimes \bI_\bx$.
On the contrary, the discretized $\mC_y^\text{I}$ is in general a dense matrix that involves intensive computation.
Therefore, model II is more computationally efficient for learning TESD.
See more details in Appendix \ref{apx:compadv} and Section \ref{sec:sim_fit}.

\section{Theories}\label{sec:theory}
Now we study the theoretic properties of the time-dependent spatial kernel $\mC_{\bx|t}$ and the generalized STGP.
The key elements are the spatial basis $\{\phi_\ell(\bx)\}$ and the dynamic eigenvalues $\{\lambda_\ell(t)\}$ (could be random) for which we will make a few more assumptions.
Readers can skip or postpone reading this section without interruption.

\subsection{Well-definedness}
First we prove the well-definedness of kernels \eqref{eq:jtkern} and \eqref{eq:likern} in the Mercer's representation in the following theorem.
\begin{restatable}{thm}{wellpose}
[Mercer's Kernels]
\label{thm:wellpose}
Under Assumption \ref{asmp:dyntrace}, both $\mC_m^\text{I}=\mC_{\bx|t}^\half \mC_{\bx|t'}^\half \qprod \mC_t$ and $\mC_{y|m}^\text{II}=\mC_{\bx|t} \qprod \mI_t$ are well defined non-negative definite kernels on $\mZ$.
\end{restatable}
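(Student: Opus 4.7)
The plan is to verify, for each of the two kernels, (a) convergence of its Mercer-type series in a suitable sense and (b) non-negative definiteness of the resulting bilinear form. I would handle non-negative definiteness first because it is conceptually the cleanest, and then address the more delicate convergence question.

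For non-negative definiteness, the key observation is that each kernel can be written as a sum of elementary non-negative definite kernels. For model I, setting $g_\ell(\bz) := \lambda_\ell(t)\phi_\ell(\bx)$, the extension of $\mC_t$ (depending only on $(t,t')$) is non-negative definite on $\mZ\times\mZ$ and each rank-one factor $g_\ell(\bz)g_\ell(\bz')$ is non-negative definite; taking a sum over $\ell$ and a Schur product with $\mC_t$ preserves the property. Concretely, for any finite collection $\{\bz_i,c_i\}_{i=1}^n$, setting $a_{i,\ell}=c_i g_\ell(\bz_i)$ yields
$$\sum_{i,j}c_i c_j\,\mC_m^\text{I}(\bz_i,\bz_j)=\sum_\ell\sum_{i,j}a_{i,\ell}\,\mC_t(t_i,t_j)\,a_{j,\ell}\geq 0$$
by the non-negative definiteness of $\mC_t$. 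For model II, the factor $\delta(t=t')$ restricts the sum to pairs with common time; grouping the test points by their time value into sets $S_t$, the quadratic form reduces to $\sum_\ell\sum_t\lambda_\ell^2(t)\bigl(\sum_{i\in S_t}c_i\phi_\ell(\bx_i)\bigr)^2\geq 0$.

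For convergence, I would estimate the tail of the series using Cauchy-Schwarz in the index $\ell$:
$$\Bigl|\sum_{\ell=M}^N\lambda_\ell(t)\lambda_\ell(t')\phi_\ell(\bx)\phi_\ell(\bx')\Bigr|\leq \Bigl(\sum_{\ell=M}^N\lambda_\ell^2(t)\phi_\ell^2(\bx)\Bigr)^{1/2}\Bigl(\sum_{\ell=M}^N\lambda_\ell^2(t')\phi_\ell^2(\bx')\Bigr)^{1/2}.$$
It thus suffices to show $\sum_\ell\lambda_\ell^2(t)\phi_\ell^2(\bx)<\infty$ almost surely in $\lambda$ at each fixed $(\bx,t)$. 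Substituting $\lambda_\ell(t)=\gamma_\ell u_\ell(t)$ and taking expectation gives $\bbE\bigl[\sum_\ell\lambda_\ell^2(t)\phi_\ell^2(\bx)\bigr]=\mC_u(t,t)\sum_\ell\gamma_\ell^2\phi_\ell^2(\bx)$. For the canonical choice $\gamma_\ell=\lambda_\ell^0$, this equals $\mC_u(t,t)\,\mC_\bx(\bx,\bx)$ by the original Mercer expansion of $\mC_\bx$, hence finite for continuous $\mC_\bx$ on compact $\mX$; for the alternative $\gamma_\ell=\ell^{-\kappa/2}$, uniform boundedness of a continuous orthonormal basis yields $\mC_u(t,t)\,M^2\sum_\ell\gamma_\ell^2<\infty$ via Assumption \ref{asmp:dyntrace}. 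In either case the partial sums converge almost surely pointwise, and the bounded factor $\mC_t(t,t')$ does not disturb this convergence for model I.

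Combining the two steps, each partial sum $S_N$ is non-negative definite and for any fixed finite collection of test points $\{\bz_i\}$ the associated quadratic form $\sum_{i,j}c_i c_j S_N(\bz_i,\bz_j)$ converges almost surely to its full-sum counterpart, so non-negativity passes to the limit. The main obstacle, and the subtlety to watch closely, is model II: the symbol $\delta(t=t')$ is not a pointwise function but should be interpreted either as a measure-valued kernel with operator action $\varphi\mapsto\bigl(\bz\mapsto\int\mC_{\bx|t}(\bx,\bx')\varphi(\bx',t)\,d\bx'\bigr)$ on $L^2(\mZ)$, or -- for the purpose of non-negative definiteness tested on finite collections of distinct $(\bx_i,t_i)$ -- via the Kronecker convention $\delta(t=t')=\mathbf{1}[t=t']$; both interpretations lead to the same conclusion.
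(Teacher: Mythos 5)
Your non-negative-definiteness argument is correct and is the classical finite-collection version: the Schur-product observation for model I and the grouping-by-time reduction for model II both work. Note, though, that the paper proves non-negativity in a different sense --- as an integral operator on $L^2(\mZ)$, i.e.\ $\langle f, \mC f\rangle \geq 0$ for $f \in L^2(\mZ)$, by expanding $f$ in the spatial basis and reducing to $\sum_\ell \langle f_\ell\lambda_\ell, \mC_t\lambda_\ell f_\ell\rangle \geq 0$. The two notions coincide for continuous kernels, but the operator version is the one compatible with the paper's choice of convergence mode (see below), so the two halves of your proof do not quite fit together under the stated hypotheses.

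The genuine gap is in the convergence step. The theorem is stated ``under Assumption \ref{asmp:dyntrace}'' only, i.e.\ $\Vert\lambda\Vert_{2,2}^2 = \sum_\ell \Vert\lambda_\ell\Vert_{L^2(\mT)}^2 < \infty$, together with $\Vert\phi_\ell\Vert_{L^2(\mX)} = 1$. The paper's proof establishes convergence of both series in $L^1(\mZ\times\mZ)$, which follows from exactly these two facts via $\int_\mX |\phi_\ell| \, d\bx \lesssim \Vert\phi_\ell\Vert_2$ on the bounded domain $\mX$ and dominated convergence. Your pointwise argument instead needs $\sum_\ell \lambda_\ell^2(t)\phi_\ell^2(\bx) < \infty$ at fixed $(\bx,t)$, and to get this you are forced to import (a) the specific random model $\lambda_\ell = \gamma_\ell u_\ell$ of \eqref{eq:randcoeff} rather than the bare Assumption \ref{asmp:dyntrace}, and (b) either continuity of $\mC_\bx$ (to invoke Mercer pointwise) or a uniform sup-norm bound on the basis --- the latter is the paper's Assumption \ref{asmp:lipscond}, which is introduced only later for the regularity results and is not available here. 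Under Assumption \ref{asmp:dyntrace} alone, $\lambda_\ell(t)$ at a fixed $t$ and $\phi_\ell(\bx)$ at a fixed $\bx$ are controlled only in $L^2$, so the pointwise sum can diverge on null sets and your ``almost surely pointwise'' conclusion does not follow from the stated hypotheses. What you have proved is a stronger conclusion (pointwise convergence) under strictly stronger hypotheses; to prove the theorem as stated you should weaken the target to $L^1(\mZ\times\mZ)$ convergence, after which your Cauchy--Schwarz tail estimate is no longer needed and the non-negativity should be phrased at the operator level. Your closing remark on the correct interpretation of $\delta(t=t')$ in model II is a good catch and is consistent with how the paper implicitly treats that kernel.
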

\begin{proof}
See Appendix \ref{apx:wellpose}.
\end{proof}

With the Mercer's kernels \eqref{eq:jtkern} and \eqref{eq:likern}, we could represent STGP in the spatial basis with random time-varying coefficients similarly as in the Karhunen-Lo\'eve theorem \citep{FUKUNAGA1990,Dong_2006}.
\begin{restatable}{thm}{KLexpan}
[Karhunen-Lo\'eve Expansion]
\label{thm:KLexpan}
Under Assumption \ref{asmp:dyntrace}, STGP $f(\bx, t)\sim \GP(0, \mC_\bz)$ has the following representation of series expansion:
\begin{equation}\label{eq:KLexpan}
f(\bx, t) = \sum_{\ell=1}^\infty f_\ell(t) \phi_\ell(\bx), \quad f_\ell(t) = \int_\mX f(\bx, t) \phi_\ell(\bx) d\bx
\end{equation}
where $\{f_\ell\}_{\ell=1}^\infty$ are random processes with 
mean functions $\bbE[f_\ell(t)] = 0$ and covariance functions as follows
\begin{itemize}
\item if $\mC_\bz=\mC_{\bx|t}^\half \mC_{\bx|t'}^\half \qprod \mC_t$, then $\bbE[f_\ell(t) f_{\ell'}(t')] = \lambda_\ell(t)\mC_t(t,t') \lambda_\ell(t') \delta_{\ell\ell'}$.
\item if $\mC_\bz=\mC_{\bx|t} \qprod \mI_t$, then $\bbE[f_\ell(t) f_{\ell'}(t')] = \lambda_\ell^2(t) \delta(t=t') \delta_{\ell\ell'}$.
\end{itemize}
\end{restatable}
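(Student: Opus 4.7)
The plan is to mimic the classical Karhunen--Lo\`eve argument but with the spatial orthonormal basis $\{\phi_\ell\}$ fixed once and for all (since $\phi_\ell$ does not depend on $t$), and then read off the covariance structure of the coefficient processes from the two given Mercer-type representations.

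First I would fix $t\in\mT$ and expand $f(\cdot,t)\in L^2(\mX)$ (almost surely, by Assumption \ref{asmp:dyntrace} together with Theorem \ref{thm:wellpose}, which guarantees a finite spatial trace) in the orthonormal basis $\{\phi_\ell\}$ of $L^2(\mX)$, obtaining
\begin{equation*}
f(\bx,t) \;=\; \sum_{\ell=1}^\infty f_\ell(t)\,\phi_\ell(\bx),\qquad f_\ell(t) \;:=\; \int_\mX f(\bx,t)\phi_\ell(\bx)\,d\bx,
\end{equation*}
with convergence in $L^2(\mX)$ for each $t$. Being linear functionals of a centered Gaussian process, the $f_\ell(t)$ are themselves centered Gaussian random variables, hence $\bbE[f_\ell(t)]=0$ follows immediately from Fubini after checking integrability via Assumption \ref{asmp:dyntrace}.

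Next I would compute the second moments. By Fubini (again justified through the trace-class assumption and \eqref{eq:dyntrace}),
\begin{equation*}
\bbE[f_\ell(t)\,f_{\ell'}(t')] \;=\; \int_\mX\!\!\int_\mX \mC_\bz\!\big((\bx,t),(\bx',t')\big)\,\phi_\ell(\bx)\phi_{\ell'}(\bx')\,d\bx\,d\bx'.
\end{equation*}
For the model~I kernel \eqref{eq:jtkern}, I substitute $\mC_\bz=\sum_k \lambda_k(t)\mC_t(t,t')\lambda_k(t')\phi_k(\bx)\phi_k(\bx')$, interchange the (absolutely convergent) sum with the double integral, and invoke orthonormality $\langle\phi_k,\phi_\ell\rangle=\delta_{k\ell}$ twice to collapse the series to a single term $\lambda_\ell(t)\mC_t(t,t')\lambda_\ell(t')\delta_{\ell\ell'}$. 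For model~II with kernel \eqref{eq:likern}, the same substitution yields $\sum_k \lambda_k^2(t)\delta(t=t')\delta_{k\ell}\delta_{k\ell'} = \lambda_\ell^2(t)\delta(t=t')\delta_{\ell\ell'}$, where the temporal $\delta$ is understood in the distributional/white-noise sense consistent with the definition of $\mI_t$ in \eqref{eq:margcov2}.

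The main obstacle I anticipate is not the algebra but rigorously justifying the two interchanges (sum with integral, and integral with expectation). For this I would lean on Assumption \ref{asmp:dyntrace}: conditionally on $\lambda$, the series in \eqref{eq:jtkern} and \eqref{eq:likern} converges in the appropriate Hilbert--Schmidt/$L^2$ sense (as established in the proof of Theorem \ref{thm:wellpose}); integrating against the bounded factor $\phi_\ell(\bx)\phi_{\ell'}(\bx')$ preserves absolute convergence, which legitimizes Fubini and term-by-term integration. Model~II needs slightly extra care because of the temporal $\delta$: I would either interpret the identity weakly (integrating both sides against $L^2(\mT)$ test functions) or state it as the covariance of a spatial white-in-time process, mirroring the construction in \eqref{eq:fullBayesCOV}.
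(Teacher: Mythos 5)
Your computation of the mean and covariance functions is essentially the paper's own argument: push the expectation inside the double spatial integral, substitute the Mercer-type representation of $\mC_\bz$, and collapse the double series by orthonormality of $\{\phi_\ell\}$. Where you genuinely diverge is in how the convergence of the expansion $f=\sum_\ell f_\ell(t)\phi_\ell(\bx)$ is established. The paper proves $L^2_{\bbP}$ (mean-square) convergence pointwise in $(\bx,t)$ by directly expanding $\bbE[|f-F_L|^2]$ for the partial sum $F_L$ and showing it reduces to the tail $\sum_{\ell>L}\lambda_\ell^2(t)\mC_t(t,t)\phi_\ell^2(\bx)\to 0$ --- the classical Karhunen--Lo\`eve convergence mode, and arguably the substantive half of the proof. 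You instead invoke almost-sure membership $f(\cdot,t)\in L^2(\mX)$ and Parseval to get $L^2(\mX)$-convergence of the basis expansion for each fixed $t$; this is a legitimate but different mode of convergence, and if you take this route you should say explicitly which one the theorem is asserting (the paper's statement is most naturally read as the mean-square one). Your route is cheaper once a.s.\ square-integrability of sample paths is checked, while the paper's yields the quantitative tail bound for free. One caveat applies to both arguments equally: for the model~II kernel $\mC_{\bx|t}\otimes\mI_t$ the process is white in time, so $\bbE[f(\bx,t)^2]$ involves $\delta(0)$ and neither a pointwise-in-$t$ mean-square statement nor a.s.\ $L^2(\mX)$ sample paths at fixed $t$ is literally available; the paper glosses this by ``replacing $\mC_t(t,t')$ with $\delta(t=t')$,'' whereas you at least flag it and propose the weak (test-function) interpretation, which is the more careful treatment.
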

\begin{proof}
See Appendix \ref{apx:KLexpan}.
\end{proof}

\subsection{Regularity of Random Functions}
For the convenience of discussion, we introduce the following general $(k,s,p)$-norm \footnote{When $k=2$, this is related to Sobolev norm in the frequency domain and Hilbert scales.} to the infinite-sequence functions $\lambda=\{\lambda_\ell\}_{\ell=1}^\infty$ for $k,s>0$ and $0<p\leq \infty$.
\begin{equation}\label{eq:kspnorm}
\Vert \lambda \Vert_{k,s,p} = \left( \sum_{\ell=1}^\infty \ell^{ks} \Vert \lambda_\ell \Vert_p^k \right)^{\frac1k}
\end{equation}
And we denote the space $\ell^{k,s}(L^p(\mT)):= \{\lambda | \Vert\lambda\Vert_{k,s,p}<+\infty\}$.
Note, the norm in Assumption \ref{asmp:dyntrace} corresponds to the special case $k=2, s=0, p=2$.
For a given spatial basis $\{\phi_\ell(\bx)\}_{\ell=1}^\infty$, there is one-one correspondence $f(\bx, t)\leftrightarrow \{f_\ell(t)\}_{\ell=1}^\infty$ in \eqref{eq:KLexpan}.
Therefore, we could also define $(k,s,p)$-norm \eqref{eq:kspnorm} for $f\in \ell^{k,s}(L^p(\mT))$.
Note, when $p=2$, $f\in \ell^{k,s}(L^2(\mT))$ with a fixed spatial basis $\{\phi_\ell(\bx)\}_{\ell=1}^\infty$ also implies $f\in \ell^{k,s}(L^2(\mZ))$ regardless of spatial basis (normalized in $L^2(\mX)$) because
$\left( \sum_{\ell=1}^\infty \ell^{ks} \Vert f_\ell(t)\phi_\ell(\bx) \Vert_2^k \right)^{\frac1k}=\left( \sum_{\ell=1}^\infty \ell^{ks} \Vert f_\ell(t) \Vert_2^k \Vert \phi_\ell(\bx) \Vert_2^k \right)^{\frac1k} = \Vert f \Vert_{k,s,2}$.
For the rest of this section, we consider the case $k=p=2$.
In the following, notation $\lesssim$ ($\gtrsim$) means ``smaller (greater) than or equal to a universal constant times".

If the dynamic eigenvalues $\lambda$ decay in order $\kappa>1$, the following proposition states that they fall in a subset of $\ell^2(L^2(\mT))$.
\begin{prop}\label{prop:regdyneig}
Under Assumption \ref{asmp:randeig}-\eqref{eq:randcoeff}, 
$\lambda\in \ell^{2,s}(L^2(\mT))$ in $\bbP$ for $s<(\kappa-1)/2$.
\end{prop}
\begin{proof}
It is straightforward to verify that
\begin{equation*}
\bbE[\Vert\lambda\Vert_{2,s,2}^2] = \sum_{\ell=1}^\infty \ell^{2s} \bbE[\Vert \lambda_\ell\Vert_2^2]
= \sum_{\ell=1}^\infty \ell^{2s} \gamma_\ell^2 \bbE[\Vert u_\ell\Vert_2^2] \lesssim \tr(\mC_u) \sum_{\ell=1}^\infty \ell^{2s-\kappa} <\infty
\end{equation*}
if $2s-\kappa<-1$, i.e. $s<(\kappa-1)/2$.
\end{proof}

To discuss the regularity of random functions drawn from STGP, we need the following assumptions on the spatial basis $\{\phi_\ell(\bx)\}_{\ell=1}^\infty$ 
and the dynamic eigenvalues $\lambda$.
\begin{asump}\label{asmp:lipscond}
We assume the spatial basis $\{\phi_\ell(\bx)\}_{\ell=1}^\infty$ are bounded in $L^\infty(\mX)$ and are Lipschitz with controlled growth rate in the Lipschitz constants $\mathrm{Lip}(\phi_\ell)$:
\begin{equation}\label{eq:lipspat}
\sup_{\ell\in\mathbb N} \Vert \phi_\ell\Vert_\infty + \ell^{-1} \mathrm{Lip}(\phi_\ell) \leq C, \quad for \; some \; C>0
\end{equation}
Define $Q_{\lambda, \mC}(t,t'):=\lambda^2(t) \mC(t,t)-2\lambda(t)\mC(t,t')\lambda(t')+\lambda^2(t)\mC(t,t)$.
We need the following additional assumption on $\lambda$ for the regularity of the full function $f(\bx, t)$
\begin{equation}\label{eq:liptemp}
\lambda \in \ell^{2,s}(L^\infty(\mT)), \qquad \sup_{\ell\in \mathbb N} \ell^{-2}\sup_{t,t'\in\mT}\frac{Q_{\lambda_\ell, \mC}(t,t')}{\Vert\lambda\Vert_\infty^2|t-t'|^2} \leq C, \quad for \; some \; C>0
\end{equation}
\end{asump}
The following theorem 
states that the regularity of random function $f$ in \eqref{eq:KLexpan} depends on the decay rate $s$ of dynamic eigenvalues $\lambda$.
\begin{restatable}{thm}{regularity}
[Regularity of Random Functions]
\label{thm:regularity}
Assume $\lambda\in \ell^{2,s}(L^2(\mT))$. 
If $f(\bx, t)\sim \GP(0, \mC_\bz)$ as in Theorem \ref{thm:KLexpan}, then 
$f=\sum_{\ell=1}^\infty f_\ell(t) \phi_\ell(\bx) \in \ell^{2,s}(L^2(\mZ))$ in probability.

Moreover, under Assumption \ref{asmp:lipscond}-\eqref{eq:lipspat}, there is a version \footnote{A version/modification of stochastic process $\tilde f(\bx)$ of $f(\bx)$ means $\bbP[\tilde f(\bx)=f(\bx)]=1$ for $\forall \bx\in \mX$.}  $\tilde f(\bx)$ of $f(\bx):=\int_\mT f(\bx,t)dt$ in $C^{0,s'}(\mX)$ for $s'<s$.
If further $\mC_\bz=\mC_{\bx|t}^\half \mC_{\bx|t'}^\half \qprod \mC_t$ and $\{Q_{\lambda_\ell, \mC_t}\}$ satisfies Assumption \ref{asmp:lipscond}-\eqref{eq:liptemp},
then there is a version $\tilde f(\bz)$ of $f(\bz)$ in $C^{0,s'}(\mZ)$ for $s'<s$.
\end{restatable}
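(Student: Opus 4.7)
The strategy is to exploit the spectral expansion from Theorem \ref{thm:KLexpan} so that every second-moment computation collapses to an explicit expression in $\{\lambda_\ell\}$, and then to invoke Markov's inequality for the first assertion, and Kolmogorov's continuity test---amplified by Gaussian hypercontractivity---for the regularity assertions.

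For the first assertion, I would compute $\bbE[\Vert f\Vert_{2,s,2}^2]=\sum_\ell \ell^{2s}\,\bbE[\Vert f_\ell\Vert_2^2]$ by Fubini. Using Theorem \ref{thm:KLexpan} in the Mercer-kernel case (model I) and the boundedness of $\mC_t(t,t)$ on the bounded $\mT$, the inner expectation is a constant multiple of $\Vert\lambda_\ell\Vert_2^2$; for the Dirac-contraction case (model II) the analogous bound is obtained distributionally against Lebesgue measure on $\mT$. Summing yields $\bbE[\Vert f\Vert_{2,s,2}^2]\lesssim\Vert\lambda\Vert_{2,s,2}^2<\infty$, and Markov's inequality gives $f\in\ell^{2,s}(L^2(\mZ))$ in $\bbP$.

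For the second assertion, I would realize $\tilde f(\bx):=\int_\mT f(\bx,t)\,dt$ as a Gaussian series $\sum_\ell F_\ell\phi_\ell(\bx)$ with $F_\ell:=\int_\mT f_\ell(t)\,dt$ centered Gaussian and $\bbE[F_\ell^2]\lesssim\Vert\lambda_\ell\Vert_2^2$ (Cauchy--Schwarz for model I, direct evaluation of the Dirac pairing for model II). The key step is to bound the squared increment $\bbE[(\tilde f(\bx)-\tilde f(\bx'))^2]=\sum_\ell \bbE[F_\ell^2]\,(\phi_\ell(\bx)-\phi_\ell(\bx'))^2$ by interpolating, at each $\ell$, between the uniform bound $|\phi_\ell(\bx)-\phi_\ell(\bx')|\lesssim 1$ and the Lipschitz bound $|\phi_\ell(\bx)-\phi_\ell(\bx')|\lesssim\ell|\bx-\bx'|$ from Assumption \ref{asmp:lipscond}-\eqref{eq:lipspat}. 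The interpolated estimate $|\phi_\ell(\bx)-\phi_\ell(\bx')|^2\lesssim\ell^{2\alpha}|\bx-\bx'|^{2\alpha}$ gives $\bbE[(\tilde f(\bx)-\tilde f(\bx'))^2]\lesssim|\bx-\bx'|^{2\alpha}\sum_\ell \ell^{2\alpha}\Vert\lambda_\ell\Vert_2^2$, finite for every $\alpha<s$. Gaussian hypercontractivity then lifts this to even moments of all orders, and Kolmogorov's test produces a version in $C^{0,s'}(\mX)$ for any $s'<\alpha-d/(2p)$; letting $p\to\infty$ and $\alpha\uparrow s$ exhausts every $s'<s$.

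For the third assertion (model I), I would decompose $f(\bz)-f(\bz')=\sum_\ell(f_\ell(t)-f_\ell(t'))\phi_\ell(\bx)+\sum_\ell f_\ell(t')(\phi_\ell(\bx)-\phi_\ell(\bx'))$, take a second moment, and use independence across $\ell$. The pure-space piece is controlled exactly as in the previous paragraph, after replacing $\Vert\lambda_\ell\Vert_2^2$ by the uniform bound $\bbE[f_\ell^2(t')]\lesssim\Vert\lambda_\ell\Vert_\infty^2$. The pure-time piece introduces $\bbE[(f_\ell(t)-f_\ell(t'))^2]=Q_{\lambda_\ell,\mC_t}(t,t')$; combining the trivial uniform bound $Q_{\lambda_\ell,\mC_t}\lesssim\Vert\lambda_\ell\Vert_\infty^2$ with the Lipschitz-type bound $Q_{\lambda_\ell,\mC_t}(t,t')\lesssim\ell^{2}\Vert\lambda_\ell\Vert_\infty^{2}|t-t'|^2$ from Assumption \ref{asmp:lipscond}-\eqref{eq:liptemp} and interpolating yields $Q_{\lambda_\ell,\mC_t}(t,t')\lesssim\ell^{2\beta}\Vert\lambda_\ell\Vert_\infty^{2}|t-t'|^{2\beta}$ for every $\beta\in[0,1]$. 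Summing in $\ell$ using $\lambda\in\ell^{2,s}(L^\infty(\mT))$ and combining the two pieces gives $\bbE[|f(\bz)-f(\bz')|^2]\lesssim|\bx-\bx'|^{2\alpha}+|t-t'|^{2\beta}\lesssim|\bz-\bz'|^{2\min(\alpha,\beta)}$ for any $\alpha,\beta<s$; Kolmogorov's test on $\mZ$ (of dimension $d+1$) together with Gaussian moments then produces $\tilde f\in C^{0,s'}(\mZ)$ for every $s'<s$.

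The main technical obstacle I anticipate is the interpolation at each eigenlevel---trading a uniform bound for a Lipschitz-type bound in just the right proportion so that the per-level contribution becomes $\ell^{2s^-}$-summable under only the $\ell^{2,s}$ hypothesis---and then coupling the resulting H\"older-in-mean-square estimate with a sufficiently high Kolmogorov moment (via Gaussian hypercontractivity) to recover H\"older regularity arbitrarily close to $s$ in the final version. A secondary but non-trivial bookkeeping issue is the distributional interpretation needed to make Parts 1 and 2 cover the white-noise-in-time kernel of model II alongside the smoother model I case.
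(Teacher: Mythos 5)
Your proposal is correct and follows essentially the same route as the paper's proof: the first assertion via the expected $(2,s,2)$-norm computed from the Karhunen--Lo\'eve covariances plus Markov's inequality, and the regularity assertions via second-moment increment bounds obtained by interpolating between the uniform and Lipschitz bounds at each eigenlevel (the paper's inequality $\min\{a,bx^2\}\leq a^{1-\delta/2}b^{\delta/2}|x|^\delta$ is exactly your interpolation), followed by the Gaussian form of Kolmogorov's continuity test. The only cosmetic differences are that the paper handles the time integral in $f(\bx)=\int_\mT f(\bx,t)\,dt$ via Jensen's inequality rather than a direct Cauchy--Schwarz bound on $\bbE[F_\ell^2]$, and it invokes the Gaussian Kolmogorov test directly rather than spelling out the moment-lifting via hypercontractivity.
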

\begin{proof}
See Appendix \ref{apx:regularity}.
\end{proof}
\begin{rk}
For random $\lambda$ satisfying Assumption \ref{asmp:randeig}-\eqref{eq:randcoeff},
the above results still hold for $s<(\kappa-1)/2$ by Proposition \ref{prop:regdyneig}.
\end{rk}

\begin{restatable}{cor}{mgGP}
\label{cor:mgGP}
If $f(\bx, t)\sim \GP(0, \mC_\bz)$ has a continuous version, then $\{f_\ell\}_{\ell=1}^\infty$ as in Theorem \ref{thm:KLexpan} are GP's defined on $\mT$.
\end{restatable}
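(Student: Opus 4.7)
The plan is to verify that the finite-dimensional distributions of each $f_\ell$ are Gaussian and read off the mean/covariance from Theorem \ref{thm:KLexpan}. By the Cram\'er--Wold device, it suffices to show that for any $n$, coefficients $c_1,\dots,c_n$, and times $t_1,\dots,t_n\in\mT$, the linear combination $\sum_{j=1}^n c_j f_\ell(t_j)$ is a univariate Gaussian random variable; joint Gaussianity across the $t_j$'s follows immediately, and this is all that is needed because Theorem \ref{thm:KLexpan} already supplies $\bbE[f_\ell(t)]=0$ and the covariance.

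By linearity of the defining integral,
\[
\sum_{j=1}^n c_j f_\ell(t_j) \;=\; \int_\mX g(\bx)\,\phi_\ell(\bx)\,d\bx, \qquad g(\bx):=\sum_{j=1}^n c_j\, f(\bx, t_j).
\]
Since by hypothesis $f$ admits a continuous version on the compact joint domain $\mZ$, $g$ is (almost surely) continuous on the compact $\mX$, and $\phi_\ell$ is uniformly bounded (a consequence of Mercer's theorem on a compact domain, and explicitly part of Assumption \ref{asmp:lipscond}). Along any sequence of partitions of $\mX$ with vanishing mesh, the quadrature sums
\[
S_N \;:=\; \sum_{i=1}^{N}\omega_i\,\phi_\ell(\bx_i)\,g(\bx_i)
\]
converge almost surely to $\int_\mX g\,\phi_\ell\,d\bx$. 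Each $S_N$ is a finite linear combination of values of the Gaussian process $f$ on $\mZ$, and hence is a centered Gaussian random variable. The continuous version of $f$ supplies a (random but finite) deterministic pointwise bound $|g(\bx)|\leq \Vert g\Vert_\infty$ on the compact $\mX$, so dominated convergence promotes almost-sure convergence of $S_N$ to convergence in $L^2(\bbP)$. Since $L^2$ limits of centered Gaussian random variables are Gaussian, $\sum_j c_j f_\ell(t_j)$ is Gaussian. The Cram\'er--Wold conclusion then yields joint Gaussianity of $(f_\ell(t_1),\dots,f_\ell(t_n))$, and combined with the mean and covariance formulas from Theorem \ref{thm:KLexpan} this identifies $f_\ell$ as a Gaussian process on $\mT$.

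The main obstacle is the clean passage from pathwise Riemann sums to the integral, specifically justifying $L^2(\bbP)$ convergence so that Gaussianity transfers to the limit. The continuous-version hypothesis is exactly what makes this step transparent: it provides pointwise continuity of $g$ and, via compactness of $\mZ$, a pathwise supremum bound used for dominated convergence. Without such a modification one would have to approximate sample paths by bounded continuous functions and invoke $L^2$ density arguments to extract Gaussianity, so the assumed existence of the continuous version is doing essentially all of the heavy lifting.
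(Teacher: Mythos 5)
Your proof is correct and takes essentially the same route as the paper's: both approximate the defining integral $\int_\mX f(\cdot,t)\phi_\ell\,d\bx$ by finite linear combinations of values of the Gaussian process $f$ (Riemann/quadrature sums) and pass Gaussianity to the limit, handling the joint distributions by working with several times at once (your Cram\'er--Wold packaging versus the paper's remark that the same argument applies with $\bt=(t_1,\dots,t_k)$ in place of $t$). The only loose point is the promotion of almost-sure convergence to $L^2(\bbP)$ convergence by dominated convergence, which requires $\bbE[\Vert g\Vert_\infty^2]<\infty$ (true by Fernique's theorem for a continuous Gaussian process on a compact set, but your dominating bound is a random variable, not a constant); you could bypass $L^2$ entirely, since an almost-sure (hence in-distribution) limit of Gaussian random variables is automatically Gaussian by a characteristic-function argument.
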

\begin{proof}
See Appendix \ref{apx:mgGP}.
\end{proof}

\subsection{Posterior Contraction}\label{sec:post_contr}
Now we consider the posterior properties.
On the separable Banach space $(\mbB=\ell^2(L^2(\mT)), \Vert\cdot\Vert_{2,2})$,
we consider a Gaussian random element $\lambda$ satisfying Assumption \ref{asmp:randeig}-\eqref{eq:randcoeff} and denote its associated reproducing kernel Hilbert space (RKHS) as $(\mbH,\Vert\cdot\Vert_\mbH)$.
We assume $\tr(\mC_u)=1$ by rescaling in \eqref{eq:randcoeff}. Then RKHS is $\mbH=\ell^{2,\kappa/2}(L^2(\mT))$ with the following inner product and norm
\begin{equation}
\langle h, h' \rangle_\mbH = \sum_{\ell=1}^\infty \langle \gamma_\ell^{-1} h_\ell, \gamma_\ell^{-1} h'_\ell \rangle, \, \forall h, h'\in \mbH, \qquad \Vert \cdot \Vert_\mbH = \langle \cdot, \cdot\rangle_\mbH^\half
\end{equation}
Define the contraction rate of $\lambda$ at $\lambda_0$ as follows
\begin{equation}\label{eq:contrate1}
\varphi_{\lambda_0}(\eps) = \inf_{h\in\mbH:\Vert h-\lambda_0\Vert_{2,2}\leq \eps} \half \Vert h\Vert_\mbH^2 - \log \Pi(\Vert \lambda\Vert_{2,2} <\eps)
\end{equation}
where $\Pi$ is the prior measure on $\lambda$.
Let $p$ be a centered (assume $m\equiv0$ for simplicity) Gaussian model, which is uniquely determined by its covariance $\mC_{\bx|t}=\sum_{\ell=1}^\infty \lambda_\ell^2(t)\phi_\ell \otimes \phi_\ell$.
For a fixed spatial basis $\{\phi_\ell\}$, the model density $p$ is parametrized by $\lambda$, hence denoted as $p_\lambda$.
Let $n=I\wedge J$.
Denote $P_\lambda^{(n)}:=\bigotimes_{j=1}^n P_{\lambda,j}$ as the product measure on $\bigotimes_{j=1}^n(\mX_j,\mathcal B_j,\mu_j)$.
Each $P_{\lambda,j}$ has a density $p_{\lambda_j}$ with respect to the $\sigma$-finite measure $\mu_j$.
Define the average Hellinger distance as $d_{n,H}^2(\lambda,\lambda')=\frac1n\sum_{j=1}^n\int (\sqrt{p_{\lambda,j}}-\sqrt{p_{\lambda',j}})^2 d\mu_j$.
To bound the Hellinger distance between the modeling parameter $\lambda$ and its true value $\lambda_0$, we make the following assumption.
\begin{asump}\label{asmp:eigbound}
Let $\lambda\in \ell^{1,s}(L^\infty(\mT))$ 
with some $s>0$. Assume $\lambda$ satisfy the following bounds
\begin{equation}
c_\ell:= \inf_{t\in\mT} |\lambda_\ell(t)| \gtrsim \ell^{-s/2}, \quad C:= \sup_{\ell\in\mathbb N} \Vert\lambda_\ell\Vert_\infty <+\infty
\end{equation}
\end{asump}
Denote the observations $Y^{(n)}=\{Y_j\}_{j=1}^n$ with $Y_j=y(\bX,t_j)$. Note they are i.i.d. in model I and i.n.i.d. in model II conditioned on the mean.
Now we prove the following posterior contraction about $\mC_{\bx|t}$ in model II, which generalizes Theorem 2.2 of \cite{lan_2019}.
\begin{restatable}{thm}{postcontrCII}
[Posterior Contraction of $\mC_{\bx|t}$ in model II]
\label{thm:postcontrCII}
Let $\lambda$ be a Borel measurable, zero-mean, tight Gaussian random element in $\Theta=\ell^2(L^2(\mT))$ satisfying Assumption \ref{asmp:eigbound} and $P_\lambda^{(n)}=\bigotimes_{j=1}^n P_{\lambda,j}$ be the product measure of $Y^{(n)}$ parametrized by $\lambda$.
If the true value $\lambda_0\in \Theta$ is in the support of $\lambda$, and $\eps_n$ satisfies the rate equation $\varphi_{\lambda_0}(\eps_n)\leq n \eps_n^2$ with $\eps_n\geq n^{-\half}$,
then there exists $\Theta_n\subset\Theta$ such that $\Pi_n(\lambda\in \Theta_n:d_{n,H}(\lambda,\lambda_{n,0})>M_n\eps_n|Y^{(n)})\rightarrow 0$ in $P_{\lambda_{n,0}}^{(n)}$-probability for every $M_n\rightarrow \infty$.
\end{restatable}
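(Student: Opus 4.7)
The plan is to apply the general posterior contraction framework of Ghosal and van der Vaart (2007) for independent but not identically distributed observations, combined with the Gaussian process concentration function technology of van der Vaart and van Zanten (2008), to the Gaussian random element $\lambda \in \mbB = \ell^2(L^2(\mT))$ with RKHS $\mbH = \ell^{2,\kappa/2}(L^2(\mT))$. The theorem reduces to verifying the three standard hypotheses:
\begin{equation*}
\textrm{(i) } \Pi\bigl(d_{n,H}(\lambda,\lambda_0)\leq \eps_n\bigr)\geq e^{-n\eps_n^2}, \quad \textrm{(ii) } \Pi(\Theta\setminus\Theta_n)\leq e^{-(C+4)n\eps_n^2}, \quad \textrm{(iii) } \log N(\eps_n,\Theta_n,d_{n,H})\leq n\eps_n^2.
\end{equation*}

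First, I would translate $d_{n,H}$ into an intrinsic norm on $\lambda$. Under model II, each $P_{\lambda,j}$ is Gaussian with covariance $\bC_{\bx|t_j}=\sum_\ell \lambda_\ell^2(t_j)\phi_\ell\otimes\phi_\ell$, so the per-observation Hellinger affinity has a closed form in terms of the eigenvalue ratios $\lambda_\ell(t_j)/\lambda_{0,\ell}(t_j)$. Using Assumption \ref{asmp:eigbound}, namely the uniform upper bound $\Vert\lambda_\ell\Vert_\infty\leq C$ and the lower bound $c_\ell\gtrsim \ell^{-s/2}$ (valid in a small neighbourhood of $\lambda_0$ as well since $\lambda_0\in\ell^{1,s}(L^\infty(\mT))$ with the same type of lower bound), a first-order expansion of the Hellinger functional of two Gaussians yields
\begin{equation*}
d_{n,H}^2(\lambda,\lambda_0) \asymp \frac1n\sum_{j=1}^n \sum_{\ell=1}^\infty \bigl(\lambda_\ell(t_j)-\lambda_{0,\ell}(t_j)\bigr)^2,
\end{equation*}
which, as $n=I\wedge J\to\infty$ and the grid $\{t_j\}$ fills $\mT$, is equivalent up to constants to $\Vert\lambda-\lambda_0\Vert_{2,2}^2$. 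This step effectively converts the statistical distance into the Banach-space norm for which the concentration function $\varphi_{\lambda_0}$ is defined.

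Second, with $d_{n,H}\asymp\Vert\cdot\Vert_{2,2}$ on a neighbourhood, the prior mass condition (i) becomes $\Pi(\Vert\lambda-\lambda_0\Vert_{2,2}\leq \eps_n)\geq e^{-n\eps_n^2}$. By the Kuelbs-Li theorem (Lemma 5.3 of van der Vaart-van Zanten 2008), this lower bound is exactly $\varphi_{\lambda_0}(\eps_n)\leq n\eps_n^2$, which is the rate equation we assume. For (ii) and (iii), I would construct the sieve
\begin{equation*}
\Theta_n = M_n \mbH_1 + \eps_n \mbB_1,
\end{equation*}
where $\mbH_1$ and $\mbB_1$ are the unit balls of $\mbH$ and $\mbB$. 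The Borell-TIS (concentration) inequality, combined with the small-ball exponent $-\log\Pi(\Vert\lambda\Vert_{2,2}<\eps_n)\leq \varphi_{\lambda_0}(\eps_n)\leq n\eps_n^2$, yields $\Pi(\Theta\setminus\Theta_n)\leq 1-\Phi(\Phi^{-1}(e^{-n\eps_n^2})+M_n)$, which is $\leq e^{-(C+4)n\eps_n^2}$ for $M_n$ of order $\sqrt{n}\eps_n$ chosen sufficiently large. The metric entropy estimate (iii) follows from the dual Kuelbs-Li bound $\log N(2\eps_n,M_n\mbH_1,\Vert\cdot\Vert_{2,2})\lesssim M_n^2/\eps_n^2 \cdot \varphi_{\lambda_0}(\eps_n)/\eps_n^2$, which is $\leq n\eps_n^2$ under the same calibration of $M_n$ and $\eps_n$.

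The main obstacle is the first step: producing sharp two-sided bounds $d_{n,H}(\lambda,\lambda_0)\asymp \Vert\lambda-\lambda_0\Vert_{2,2}$ that are uniform over a neighbourhood large enough to contain the sieve. The closed form for the Hellinger affinity between the Mercer-structured Gaussian covariances involves an infinite-series eigenvalue mismatch, and a naive bound loses the summability provided by Assumption \ref{asmp:dyntrace}. The uniform lower bound $c_\ell\gtrsim \ell^{-s/2}$ is used precisely to ensure that the denominators $\lambda_\ell(t)\lambda_{0,\ell}(t)$ do not degenerate as $\ell\to\infty$, while the finiteness of $\Vert\lambda\Vert_{1,s,\infty}$ guarantees that the series of squared differences converges and matches $\Vert\lambda-\lambda_0\Vert_{2,2}^2$ up to a constant. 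Once this equivalence is secured, the remainder of the proof is a direct application of the van der Vaart-van Zanten/Ghosal-van der Vaart machinery, as in Theorem 2.2 of \cite{lan17}. The detailed verifications are deferred to Appendix.
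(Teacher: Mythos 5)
There is a genuine gap, and it sits exactly where you flagged the ``main obstacle'': the two-sided equivalence $d_{n,H}(\lambda,\lambda_0)\asymp\Vert\lambda-\lambda_0\Vert_{2,2}$ does not hold, and the rest of your argument is built on it. For two mean-zero Gaussians sharing the eigenbasis $\{\phi_\ell\}$, the squared Hellinger distance expands to first order as a sum of \emph{relative} eigenvalue perturbations, roughly $\sum_\ell \bigl(\lambda_\ell(t)-\lambda_{0,\ell}(t)\bigr)^2/\bigl(\lambda_\ell^2(t)+\lambda_{0,\ell}^2(t)\bigr)$, not the unweighted sum of squared differences. Since the eigenvalues must decay (Assumption \ref{asmp:eigbound} only gives the lower bound $c_\ell\gtrsim\ell^{-s/2}$, which itself goes to zero), the denominators degenerate and each term is inflated by a factor of order $\ell^{s}$. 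The correct relation is therefore one-sided and weighted: the paper's Lemma \ref{lem:hKVbd} gives $d_H^2(p_0,p_1)\lesssim\Vert\lambda_0-\lambda_1\Vert_{1,s,\infty}$, an upper bound in the $(1,s,\infty)$-norm, with no matching lower bound (Hellinger is bounded by $1$ while the norm is not). Consequently your prior-mass step (i), which reads the rate equation for $\varphi_{\lambda_0}$ in the $\Vert\cdot\Vert_{2,2}$ geometry as a statement about $d_{n,H}$-balls, and your entropy step (iii), which transfers a $\Vert\cdot\Vert_{2,2}$-covering of the sieve $M_n\mbH_1+\eps_n\mbB_1$ to a $d_{n,H}$-covering, both fail: covering numbers and small-ball probabilities in the unweighted norm do not control the Hellinger geometry.

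The paper's proof is organized precisely around this asymmetry and is structurally different from what you propose. It does not work with the Gaussian element $\lambda$ as a whole; instead it treats each coordinate $\lambda_\ell$ as a separate Gaussian random element of $L^\infty(\mT)$, invokes the van der Vaart--van Zanten construction (entropy, complement, and prior-mass bounds) coordinate-wise with geometrically decaying radii $\eps_{n,\ell}=2^{-\ell}\ell^{-s}\eps_n^2$, and defines the sieve $\Theta_n$ as the product of the resulting sets intersected with $\ell^{1,s}(L^\infty(\mT))$. The radii are calibrated so that the weighted sum $\sum_\ell\ell^{s}\eps_{n,\ell}$ is of order $\eps_n^2$, which is what lets Lemma \ref{lem:hKVbd} convert coordinate-wise closeness into a bound on $d_{n,H}$, $K$, and $V$. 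Crucially, because only the upper bound on $d_{n,H}$ is available, the paper does \emph{not} verify the complement condition $\Pi(\Theta\setminus\Theta_n)\leq e^{-Cn\eps_n^2}$ in the form needed for an unrestricted statement; this is why the theorem asserts contraction only for $\lambda\in\Theta_n$, as the paper's own remark after the proof makes explicit. Your step (ii) via Borell--TIS would, if it worked, prove a stronger unrestricted statement than the one claimed --- a signal that the route is not the intended one. To repair your argument you would need to replace the $\Vert\cdot\Vert_{2,2}$-equivalence with the one-sided weighted bound, redo the sieve construction coordinate-wise (or otherwise in the $\Vert\cdot\Vert_{1,s,\infty}$ geometry), and accept the restriction to $\Theta_n$ in the conclusion.
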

\begin{proof}
See Appendix \ref{apx:postcontrCII}.
\end{proof}

\begin{rk}
Note the above posterior contraction theorem is for the covariance operator $\mC_{\bx|t}$ (characterized through the dynamic eigen-values $\lambda$), not for the mean function (which has already been extensively studied in \cite{Ghosal_2017}). Therefore, the non-parametric model II of \eqref{eq:gstgpm} for $\mC_{\bx|t}$ is not in the conjugate setting: because of the position of $\lambda$ inside $\mC_{\bx|t}$, the likelihood for $\lambda$ is not Gaussian.
\end{rk}

\begin{rk}
The observations $Y^{(n)}$ in model I are conditional iid and the likelihood model is determined by $\mC_{y|m}=\sigma^2_\eps \mI_\bx \otimes \mI_t$ which does not contain $\lambda$,
thus the posterior of $\lambda$ in model I cannot contract to the correct value.
\end{rk}

Although the above theorem dictates that the posterior of $\lambda$ contracts to the true value $\lambda_0$ at certain rate $\eps_n$, it does not provide the details of $\eps_n$.
The following theorem specifies the contraction rate, which depends on the regularity of both the truth and the prior used.
\begin{restatable}{thm}{contrateCII}
[Posterior Contraction Rate of $\mC_{\bx|t}$ in model II]
\label{thm:contrateCII}
Let $\lambda$ be a Gaussian random element satisfying Assumption \ref{asmp:randeig}-\eqref{eq:randcoeff} and $\tr(\mC_u)=1$. 
The rest settings are the same as in Theorem \ref{thm:postcontrCII}. 
If the true value $\lambda_0\in \ell^{2,s}(L^2(\mT))$, then we have the rate of posterior contraction $\eps_n = \Theta(n^{-(\frac{\kappa-1}{2}\wedge s)/\kappa})$.
\end{restatable}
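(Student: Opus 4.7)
The plan is to invoke Theorem \ref{thm:postcontrCII} and translate its abstract rate equation $\varphi_{\lambda_0}(\eps_n)\leq n\eps_n^2$ into an explicit polynomial rate by estimating each of the two summands in \eqref{eq:contrate1} separately, then balancing the resulting bounds against $n\eps_n^2$. This follows the program in Section 11.4 of \cite{Ghosal_2017} for Gaussian priors on a Hilbert-type Banach space: control the \emph{decentering penalty} $\inf_{h\in\mbH:\Vert h-\lambda_0\Vert_{2,2}\leq\eps}\half\Vert h\Vert_\mbH^2$ and the \emph{centered small-ball exponent} $-\log\Pi(\Vert\lambda\Vert_{2,2}<\eps)$, and take the slower of the two rates that emerge.

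For the decentering penalty, I would use the identification of the RKHS $\mbH=\ell^{2,\kappa/2}(L^2(\mT))$ with norm $\Vert h\Vert_\mbH^2\asymp\sum_{\ell\geq 1}\ell^\kappa\Vert h_\ell\Vert_2^2$, and perform a hard truncation. Let $h^{(N)}$ be the truncation of $\lambda_0$ to its first $N$ coordinates; then $\lambda_0\in \ell^{2,s}(L^2(\mT))$ gives the tail bound $\Vert h^{(N)}-\lambda_0\Vert_{2,2}^2\leq N^{-2s}\Vert\lambda_0\Vert_{2,s,2}^2$ by discarding polynomial weight in the tail, and $\Vert h^{(N)}\Vert_\mbH^2\leq N^{(\kappa-2s)_+}\Vert\lambda_0\Vert_{2,s,2}^2$ by pulling out $\ell^{\kappa-2s}$ on the head. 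Equating the first bound to $\eps^2$ forces $N\asymp\eps^{-1/s}$, which yields a decentering bound of order $\eps^{-(\kappa-2s)/s}$ when $s<\kappa/2$ and a bounded constant when $s\geq\kappa/2$.

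For the small-ball exponent, I would exploit the double-series representation $\Vert\lambda\Vert_{2,2}^2=\sum_{\ell,m}\gamma_\ell^2\mu_m\xi_{\ell,m}^2$ obtained by expanding each $u_\ell=\sum_m\sqrt{\mu_m}\xi_{\ell,m}e_m$ in the eigenbasis of $\mC_u$, where $(\xi_{\ell,m})$ are i.i.d.\ standard Gaussian, $\sum_m\mu_m=\tr(\mC_u)=1$, and $\gamma_\ell^2=\Theta(\ell^{-\kappa})$. Thus $\lambda$ is a Gaussian element in $\ell^2(L^2(\mT))$ whose covariance operator has eigenvalues $\{\gamma_\ell^2\mu_m\}$; since the summability of $(\mu_m)$ is strictly stronger than the marginal $\ell^{-\kappa}$ decay, the sorted spectrum behaves like $\ell^{-\kappa}$ and a classical Hilbert-space Gaussian small-ball estimate (e.g.\ Theorem 30 in Section A.2.5 of \cite{Dashti_2017}) yields $-\log\Pi(\Vert\lambda\Vert_{2,2}<\eps)\asymp\eps^{-2/(\kappa-1)}$. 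If needed, this can be derived by a direct Laplace-transform calculation $-\log\bbE\exp(-\tau\Vert\lambda\Vert_{2,2}^2)=\half\sum_{\ell,m}\log(1+2\tau\gamma_\ell^2\mu_m)$ followed by optimization in $\tau$ and a Chernoff step.

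Substituting both bounds into the rate equation and writing $\eps_n=n^{-a}$ produces two constraints: the small-ball term forces $a\leq(\kappa-1)/(2\kappa)$, and the decentering term forces $a\leq s/\kappa$ whenever $s<\kappa/2$ (being vacuous otherwise, but subsumed by the small-ball constraint in that regime since $s/\kappa\geq\half>(\kappa-1)/(2\kappa)$). The binding rate is therefore $\eps_n=\Theta(n^{-((\kappa-1)/2\wedge s)/\kappa})$, matching the claim. The main obstacle is the small-ball estimate: getting the sharp exponent $2/(\kappa-1)$ on $\ell^2(L^2(\mT))$ with the tensor covariance structure requires carefully reordering the mixed eigenvalues $\gamma_\ell^2\mu_m$ to match the hypotheses of the classical polynomial-spectrum Gaussian small-ball theorem; once that is in hand, the truncation and balancing steps are routine.
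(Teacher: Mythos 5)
Your overall architecture is the same as the paper's: split $\varphi_{\lambda_0}(\eps)$ into the decentering penalty and the centered small-ball exponent, bound the former by $\eps^{-(\kappa-2s)/s}$ via hard truncation of $\lambda_0$ (your head/tail estimates $N^{(\kappa-2s)_+}\Vert\lambda_0\Vert_{2,s,2}^2$ and $N^{-2s}\Vert\lambda_0\Vert_{2,s,2}^2$ with $N\asymp\eps^{-1/s}$ are exactly the paper's), bound the latter by $\eps^{-2/(\kappa-1)}$, and balance both against $n\eps_n^2$ to read off $n^{-(\frac{\kappa-1}{2}\wedge s)/\kappa}$. The decentering and balancing steps are correct and essentially identical to the paper's proof.

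The gap is in your small-ball step. You claim that because $(\mu_m)$ is summable, the sorted mixed spectrum $\{\gamma_\ell^2\mu_m\}_{\ell,m}$ of the covariance operator of $\lambda$ behaves like $k^{-\kappa}$, so that the classical polynomial-spectrum small-ball theorem applies. This is false under the sole hypothesis $\tr(\mC_u)=\sum_m\mu_m=1$. The counting function is $N(\tau)=\#\{(\ell,m):\gamma_\ell^2\mu_m>\tau\}\asymp\tau^{-1/\kappa}\sum_{m:\mu_m\gtrsim\tau}\mu_m^{1/\kappa}$, so the sorted eigenvalues decay like $k^{-\kappa}$ if and only if $\sum_m\mu_m^{1/\kappa}<\infty$ --- a strictly stronger requirement than trace-class when $\kappa>1$ (take $\mu_m\asymp m^{-1}(\log m)^{-2}$: summable, but $\sum_m\mu_m^{1/\kappa}=\infty$, and already the slice $\ell=1$ contributes $\#\{m:\mu_m>\tau\}\gg\tau^{-1/\kappa}$ eigenvalues above level $\tau$). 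Your Laplace-transform fallback hits the same wall, since $\half\sum_{\ell,m}\log(1+2\tau\gamma_\ell^2\mu_m)\asymp\tau^{1/\kappa}$ requires the same condition. The paper's proof is built precisely to avoid decomposing $\Vert u_\ell\Vert_2^2$ into its $m$-spectrum: it works block-wise with the variables $\gamma_\ell^2\Vert u_\ell\Vert_2^2$, handling the tail $\ell>L$ by Markov's inequality (which uses only $\bbE\Vert u_\ell\Vert_2^2=1$) and the head $\ell\leq L$ by a density-ratio comparison against the rescaled law with all $\gamma_\ell$ replaced by $\gamma_L$, followed by a CLT step, giving $\Pi(\Vert\lambda\Vert_{2,2}<2\eps)\gtrsim e^{-L\kappa/2}$ with $L\asymp\eps^{-2/(\kappa-1)}$. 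To repair your route you would either need to add the hypothesis $\sum_m\mu_m^{1/\kappa}<\infty$ on $\mC_u$, or abandon the reordering and argue block-wise in $\ell$ as the paper does.
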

\begin{proof}
See Appendix \ref{apx:contrateCII}.
\end{proof}

\section{Posterior Inference and Predictions}\label{sec:infpred}
Suppose we are given the spatiotemporal data $\mD:=\{\bZ,\bY\}$, and there are $K$ independent trials in the data $\bY:=\{\bY_k\}_{k=1}^K$
with each $\bY_k=y(\bZ)=[y(\bx_i, t_j)]_{I\times J}$.
Denote $\bM_{I\times J}:=m(\bZ)$, $\bC_\bM := \mC_m(\bZ,\bZ)$ and $\bC_{\bY|\bM} := \mC_{y|m}(\bZ, \bZ)$.
We truncate the Mercer's representation of $\mC_{\bx|t}$ for $L$ terms.
Consider the model \eqref{eq:gstgpm} with hyper-parameters which are in turn given priors respectively, summarized as follows
\begin{equation}\label{eq:gstgp_full}
\begin{aligned}
\VEC(\bY_k)| \bM, \bC_{\bY|\bM} &\sim \mN( \VEC(\bM), \bC_{\bY|\bM}), \quad \text{i.i.d.\, for}\; k=1, \cdots, K \\
m(\bz) &\sim \GP(0, \mC_m), \quad \mC_{\bx|t}(\bz,\bz') = \sum_{\ell=1}^L \lambda_\ell(t) \lambda_\ell(t') \phi_\ell(\bx) \phi_\ell(\bx') \\
\lambda_\ell(t) &= \gamma_\ell u_\ell(t), \quad u_\ell(\cdot) \overset{iid}{\sim} \GP(0, \mC_u) \; \textrm{for} \; \ell=1, \cdots, L \\
\mC_* &= \sigma^2_* \exp(-0.5\Vert *-*'\Vert^s/\rho_*^s) \\
\sigma^2_* &\sim \Gamma^{-1}(a_*,b_*), \quad \log\rho_* \sim \mN(m_*,V_*), \quad * = \bx, t, \,\textrm{or}\, u
\end{aligned}
\end{equation}
where the likelihood kernel $\mC_{y|m}$ and the prior kernel $\mC_m$ are specified in \eqref{eq:gstgpm}. 
We adopt the Metropolis-Within-Gibbs scheme and use the slice samplers \citep{neal03,murray10} for the posterior inference. More details can be found in Appendix \ref{apx:postinf}.

With the posteriors we can consider the following various prediction problems at new data points $(\bx_*, t_*)$, $(\bx, t_*)$ or $(\bx_*, t)$:
\begin{equation}
m(\bx_*, t_*)|\mD, \quad m(\bx_*, t)|\mD, \quad m(\bx, t_*)|\mD, \quad \mC_{\bx|t_*}(\bx, \bx')|\mD, \quad \mC_{\bx|t}(\bx, \bx_*)|\mD
\end{equation}

\subsection{Prediction of Mean}
We only consider the prediction $m(\bx_*, t_*)|\mD$ because the other two predictions $m(\bx_*, t)|\mD$, $m(\bx, t_*)|\mD$ are sub-problems of it.
The prediction of mean has been well studied in the literature.
The following proposition gives the predictive distribution of the mean function $m(\bx, t)$ in our set-up.
\begin{restatable}{prop}{predmean}
\label{prop:predmean}
Fit the spatiotemporal data $\mD=\{\bZ,\bY\}$ with the model \eqref{eq:gstgp_full}.
Then given a new point $\bz_*=(\bx_*, t_*)$ we have
\begin{align*}
m(\bz_*)|\mD &\sim \mN(m', C') \\ 
m' &= \tp c_* (\bC_\bM + K^{-1} \bC_{\bY|\bM})^{-1} \bar{\bY},  \quad
C' = C_{m_*} - \tp c_* (\bC_\bM + K^{-1} \bC_{\bY|\bM})^{-1} c_*
\end{align*}
where we denote
\begin{equation*}
\bar{\bY}:= \frac1K \sum_{k=1}^K \VEC(\bY_k), \quad C_{m_*} := \mC_m(\bz_*,\bz_*), \quad c_* := \mC_m(\bZ,\bz_*), \quad \tp c_* := \mC_m(\bz_*,\bZ)
\end{equation*}
\end{restatable}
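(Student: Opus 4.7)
The plan is to reduce the claim to a standard Gaussian conditioning computation via three steps: (i) replace the $K$ replicates by their sample average, which is a sufficient statistic; (ii) marginalize out the latent mean $\bM$ on the training grid to obtain the joint Gaussian distribution of $(m(\bz_*),\bar{\bY})$; (iii) read off the conditional mean and variance from the standard Gaussian conditional formula.

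First, since $\bY_1,\ldots,\bY_K$ are conditionally i.i.d.\ $\mN(\bM,\mC_{\bY|\bM})$ given $\bM$, the average $\bar{\bY}=\tfrac1K\sum_{k=1}^K \VEC(\bY_k)$ is a sufficient statistic for $\bM$. In particular, the posterior of $m(\bz_*)$ given all of $\bY$ coincides with the posterior given $\bar{\bY}$. Moreover $\bar{\bY}\mid \bM \sim \mN(\VEC(\bM),K^{-1}\mC_{\bY|\bM})$, which is the only place the factor $K^{-1}$ will enter.

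Second, under the GP prior \eqref{eq:gstgp_full}, the pair $(m(\bz_*),\VEC(\bM))$ is jointly zero-mean Gaussian with
\[
\Cov(m(\bz_*))=\mC_{m_*},\qquad \Cov(\VEC(\bM))=\mC_\bM,\qquad \Cov(m(\bz_*),\VEC(\bM))=\tp c_*.
\]
Combining this with step (i), and using that the observation noise is independent of $m(\cdot)$, the triple $(m(\bz_*),\VEC(\bM),\bar{\bY})$ is jointly Gaussian, and marginalizing out $\VEC(\bM)$ gives
\[
\begin{pmatrix} m(\bz_*) \\ \bar{\bY} \end{pmatrix}
\sim \mN\!\left(\bzero,\;\begin{pmatrix} \mC_{m_*} & \tp c_* \\ c_* & \mC_\bM+K^{-1}\mC_{\bY|\bM} \end{pmatrix}\right).
\]

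Third, applying the standard formula for the conditional of a Gaussian vector to this $2\times 2$ block covariance yields $m(\bz_*)\mid \bar{\bY}\sim \mN(m',C')$ with exactly the expressions stated in the proposition. Combined with the sufficiency observation from step (i), this is also the posterior given $\mD$, which concludes the proof.

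There is no real obstacle; the only item deserving care is that the $K$ replicates collapse into $\bar{\bY}$ with an inflated precision $K\mC_{\bY|\bM}^{-1}$, so that the noise block in the joint covariance is $K^{-1}\mC_{\bY|\bM}$ rather than $\mC_{\bY|\bM}$. Everything else is bookkeeping of the Kronecker structures encoded in $\mC_m$ and $\mC_{\bY|\bM}$, which enter only through the already-defined quantities $\mC_\bM$, $\mC_{m_*}$, $c_*$.
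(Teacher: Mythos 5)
Your proof is correct, and it takes a genuinely cleaner route than the paper's. The paper writes $p(m(\bz_*)\mid\mD)\propto \int p(\bM,m(\bz_*))\,p(\mD\mid\bM)\,d\bM$, integrates out $\bM$ by completing the square, and first obtains $m'$ and $(C')^{-1}$ in terms of the conditional covariances $\mC_{m_*|\bM}$, $\mC_{\bM|m_*}$ and an intermediate posterior covariance; it then needs a chain of Sherman--Morrison--Woodbury manipulations to collapse those expressions into the stated form. You instead observe that $\bar{\bY}=\VEC(\bM)+\bar{\vect\eps}$ with $\bar{\vect\eps}\sim\mN(0,K^{-1}\mC_{\bY|\bM})$ independent of the GP $m$, that $\bar{\bY}$ is sufficient so conditioning on $\mD$ is the same as conditioning on $\bar{\bY}$, and that $(m(\bz_*),\bar{\bY})$ is therefore jointly Gaussian with lower-right block $\mC_\bM+K^{-1}\mC_{\bY|\bM}$ appearing directly; the claimed formulas then drop out of the standard Gaussian conditioning identity with no matrix-inversion-lemma bookkeeping. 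The two arguments are equivalent---your joint-covariance computation is precisely what the paper's SMW chain reconstructs after the fact---but yours isolates the only two facts that matter (conditional independence of $m(\bz_*)$ and the data given $\bM$, and the $K^{-1}$ shrinkage of the likelihood covariance under averaging) and makes the final form of $m'$ and $C'$ transparent rather than the endpoint of an algebraic simplification. One minor point, common to both proofs: the statement is implicitly conditional on the hyper-parameters, so that $\mC_m$ and $\mC_{y|m}$ are treated as fixed kernels; your argument, like the paper's, is valid at that conditional level.
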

\begin{proof}
See Appendix \ref{apx:predmean}.
\end{proof}

\subsection{Prediction of Covariances}
Now we consider two types of prediction for covariances of particular interest.
The first one, $\mC_{\bx|t_*}(\bx, \bx')|\mD$, evolves TESD among existing locations to other (future) time(s) $t_*$.
The second one, $\mC_{\bx|t}(\bx, \bx_*)|\mD$, extends TESD to new (neighboring) location(s) $\bx_*$.
Both predictions have practical meaning and useful applications. For example, the former could predict how the brain connection evolves during some memory process, or in the progression of brain degradation of Alzheimer's disease. With the latter we could extend our knowledge of climate change from observed regions to unobserved territories.

Note, the prediction of TESD to new locations is exclusive to the proposed fully nonparametric model. The semi-parametric methods of dynamic covariance modeling \citep{wilson11,fox15,lan_2019} with a covariance matrix (instead of a kernel) for the spatial dependence do not have this feature because the discrete spatial size (the size of covariance matrix) has been fixed.

\subsubsection{Evolve Spatial Dependence to Future Time}\label{sec:TESD2future}
Note from the definition \eqref{eq:spatkern_t1}, we know that $\mC_{\bx|t}$ is a function of dynamic eigenvalues $\{\lambda_\ell(t)\}$ with fixed spatial basis $\{\phi_\ell(\bx)\}$.
Therefore, the prediction of the kernel $\mC_{\bx|t}$ in the time direction can be reduced to predicting $\lambda_\ell(t_*)|\mD$ as follows. Denote $\vect\lambda_\ell:=\lambda_\ell(\bt)$.
\begin{equation*}
p(\lambda_\ell(t_*)| \mD) = \int p(\lambda_\ell(t_*), \vect\lambda_\ell| \mD) d \vect\lambda_\ell = \int p(\lambda_\ell(t_*)| \vect\lambda_\ell) p(\vect\lambda_\ell | \mD) d \vect\lambda_\ell \\
\end{equation*}
where $p(\lambda_\ell(t_*)| \vect\lambda_\ell)$ is the standard GP predictive distribution.
We can use the standard GP predictive mean and covariance to predict and quantify the associated uncertainty for $(\lambda_\ell(t_*)| \vect\lambda_\ell^{(s)})^2$ with $\vect\lambda_\ell^{(s)} \sim p(\vect\lambda_\ell | \mD)$,
and then take average over all the posterior samples to get an approximation of $\lambda_\ell^2(t_*)| \mD$.
Therefore, $\mC_{\bx|t_*}(\bx, \bx')|\mD$ can be obtained/approximated by substituting $\lambda_\ell^2(t)$ with $\lambda_\ell^2(t_*)| \mD$ in \eqref{eq:spatkern_t1}
\begin{equation*}
\begin{aligned}
\mC_{\bx|t_*}(\bx, \bx')|\mD &= \sum_{\ell=1}^\infty (\lambda_\ell^2(t_*)|\mD)\, \phi_\ell(\bx) \phi_\ell(\bx') \\
&\approx \frac1S \sum_{s=1}^S \sum_{\ell=1}^L (\lambda_\ell(t_*)| \vect\lambda_\ell^{(s)})^2 \phi_\ell(\bx;\eta_\bx^{(s)}) \phi_\ell(\bx';\eta_\bx^{(s)}), \quad \vect\lambda_\ell^{(s)} \sim p(\vect\lambda_\ell | \mD)
\end{aligned}
\end{equation*}
where $\lambda_\ell(t_*)| \vect\lambda_\ell^{(s)}=\mC_u(t^*,\bt) \mC_u(\bt,\bt)^{-1} \vect\lambda_\ell^{(s)}$.

\subsubsection{Extend Evolution of Spatial Dependence to Neighbors}\label{sec:TESD2neighbor}
Recall that in the definition of $\mC_{\bx|t}$, the fixed basis $\{\phi_\ell(\bx)\}$ is taken from the eigenfunctions of the spatial kernel $\mC_\bx$.
To extend $\mC_{\bx|t}$ as a function of time to other locations based on existing knowledge informed by data, 
one could predict the basis at a new position, namely $\phi_\ell(\bx_*)$, using its known values $\vect\phi_\ell:=\phi_\ell(\bX)$ as in the conditional Gaussian:
\begin{equation*}
\phi_\ell(\bx_*)| \vect\phi_\ell = \mC_\bx(\bx_*,\bX) \mC_\bx(\bX,\bX)^{-1} \vect\phi_\ell = \mC_\bx(\bx_*,\bX) \lambda_\ell^{-2} \vect\phi_\ell, \quad \forall \ell=1, \cdots, L
\end{equation*}
Then, $\mC_{\bx|t}(\bx, \bx_*)|\mD$ can be predicted/approximated by substituting $\phi_\ell(\bx')$ with $\phi_\ell(\bx_*)| \vect\phi_\ell$ in \eqref{eq:spatkern_t1} as follows
\begin{equation*}
\begin{aligned}
\mC_{\bx|t}(\bx,\bx_*)| \mD &= \sum_{\ell=1}^\infty \lambda_\ell^2(t)\phi_\ell(\bx) (\phi_\ell(\bx_*)| \mD) \\
&\approx \frac1S \sum_{s=1}^S \sum_{\ell=1}^L (\lambda_\ell^{(s)}(t))^2 \phi_\ell(\bx;\eta_\bx^{(s)}) (\phi_\ell(\bx_*)| \vect\phi_\ell), \quad \lambda_\ell^{(s)}(t) \sim p(\lambda_\ell(t) | \mD)
\end{aligned}
\end{equation*}

\section{Numerical Experiments}\label{sec:numerics}
In this section, we compare the proposed generalized STGP models \eqref{eq:gstgpm} against a complete spectrum of models with different combinations in stationarity and separability including:
\begin{enumerate}\setcounter{enumi}{-1}
    \item classical stationary separable (stat-sep) model \eqref{eq:sepkern}; 
    \item stationary non-separable model \citep{Gneiting_2002}:
\begin{equation}\label{eq:stat-nonsep}
\textrm{stat-nonsep}: \qquad C(\bz,\bz') = \frac{\sigma^2}{(a|t-t'|^{2\alpha}+1)^{\beta d/2}} \exp\left( -\frac{c\Vert \bx-\bx'\Vert^{2\gamma}}{(a|t-t'|^{2\alpha}+1)^{\beta \gamma}}\right)
\end{equation}
where for simplicity we fix $a=d=1$, $\alpha=\gamma=\half$, $\beta=2$, and let $c=\frac{1}{2\rho}$.
    \item non-stationary separable model \citep{Paciorek_2003}:
\begin{equation}\label{eq:nonstat-sep}
\textrm{nonstat-sep}: \qquad C(\bz,\bz') = \sigma^2 A(\bx, \bx') \exp\left(-Q(\bx, \bx')\right) * A(t, t) \exp\left(-Q(t, t')\right)
\end{equation}
where $A(r,r')=|\Sigma_r(\rho)|^\frac14 |\Sigma_{r'}(\rho)|^\frac14 |(\Sigma_r(\rho)+\Sigma_{r'}(\rho))/2|^{-\half}$ and $Q(r,r')=\tp{(r-r')} [(\Sigma_r(\rho)+\Sigma_{r'}(\rho))/2]^{-1} (r-r')$ with $r$ being $\bx$ or $t$. $\Sigma_r(\rho)$ is the covariance of Gaussian kernel centered at $r$ with correlation length $\rho$.
    \item non-stationary non-separable model \citep{Wang_2020}:
\begin{equation}\label{eq:nonstat-nonsep}
\begin{aligned}
\textrm{nonstat-nonsep}: \qquad C(\bz,\bz') =& \sigma^2 A(\bx, \bx') A(t, t) \left(1+Q(\bx, \bx')+Q(t, t')\right)^{s_0} \cdot\\ &\left(1+Q(\bx, \bx')\right)^{s_1} \left(1+Q(t, t')\right)^{s_2}
\end{aligned}
\end{equation}
\end{enumerate}
Table \ref{tab:models2compare} lists all the models for comparison with their authors and properties.

\begin{table}[ht]\tiny
\begin{center}
\begin{tabular}{l|ccccc}
  \toprule
 Model & Author(s) & Non-stationary & Non-separable & Non-parametric & Sparse \\ 
  \midrule
stat-sep & -- & \xmark & \xmark & \cmark (mean) \xmark (covariance) & \xmark \\
stat-nonsep & \cite{Gneiting_2002} & \xmark & \cmark & \cmark (mean) \xmark (covariance) & \xmark \\
nonstat-sep & \cite{Paciorek_2003} & \cmark & \xmark & \cmark (mean) \xmark (covariance) & \xmark \\
nonstat-nonsep & \cite{Wang_2020} & \cmark & \cmark & \cmark (mean) \xmark (covariance) & \xmark \\
\midrule
qKron-prod (I) & S.Lan & \cmark & \cmark & \cmark (mean) \cmark (covariance) & \xmark \\ 
qKron-sum (II) & S.Lan & \cmark & \cmark & \cmark (mean) \cmark (covariance) & \cmark \\ 
  \bottomrule
\end{tabular}
\caption{Spatiotemporal models for comparison.} 
\label{tab:models2compare}
\end{center}
\end{table}

We evaluate the performance of all the above models in fitting and predicting mean and covariance as functions of time using a simulated spatiotemporal process and an analysis of real brain imaging data.
As a Bayesian non-parametric method for covariance modeling, model II with the quasi Kronecker sum structure (qKron-sum) is shown to be the best in characterizing TESD.
All the numerical codes are publicly available at 
\href{https://github.com/lanzithinking/TESD_gSTGP}{\underline{GitHub repository}}.

\subsection{Simulation} \label{sec:simulation}
In this section, we study a simulated example of non-stationary and non-separable spatiotemporal process.
One can find a similar study for a simulated stationary and non-separable process in Appendix \ref{apx:stat-nonsep}.

\subsubsection{Data Generation}
\begin{figure}[t] 
   \centering
   \includegraphics[width=1\textwidth,height=.35\textwidth]{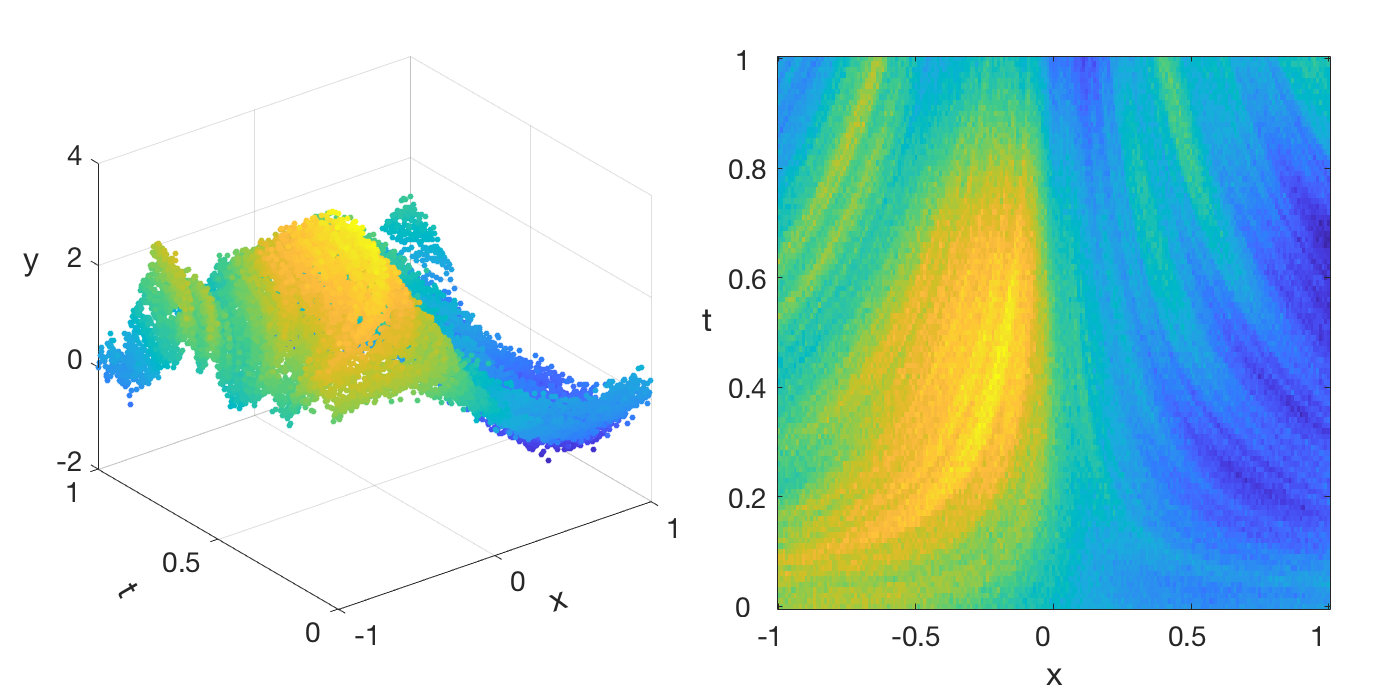} 
   \caption{Simulated spatiotemporal data over $[-1,1]\times[0,1]$, viewed in 3d (left) and projected in space-time domains (right).}
   \label{fig:sim_STproc}
\end{figure}

We consider the following spatiotemporal process with a non-stationary, non-separable covariance for spatial dimension $d=1$:
\begin{equation}\label{eq:simSTP}
\begin{aligned}
y(x,t) &\sim \GP(m, \mC_y), \quad x\in \mX=[-1,1], \; t\in \mT=[0,1] \\
m(x,t) &= \cos(\pi x) \sin(2\pi t) \\
\mC_y^\text{ns}(\bz,\bz') &= \exp\left( -\frac{|x-x'|^2}{2\ell_x} - \frac{|t-t'|^2}{2\ell_t} - \frac{|xt-x't'|}{2\ell_{xt}}\right) + \sigma^2_\eps\delta(\bz=\bz')
\end{aligned}
\end{equation}

To generate observations, we discretize the domain by dividing $\mX$ into $N_x=200$ equal subintervals and $\mT$ into $N_t=100$ equal subintervals.
Setting $\ell_x=0.5$, $\ell_t=0.3$, $\ell_{xt}=\sqrt{\ell_x\ell_t}\approx0.39$ and $\sigma^2_\eps=10^{-2}$, we generate $20301$ data points $\{y_{ij}\}$ over the mesh grid.
Such random process can be repeated for $K$ trials and we plot one of them 
in Figure \ref{fig:sim_STproc}.

\subsubsection{Model Fit}\label{sec:sim_fit}
For simplicity we use a subset of these $20301$ data points taken on an equally spaced sub-mesh with $I=5$ and $J=101$. 
Now we fit the data with STGP models \eqref{eq:gstgpm} respectively. 
We set $\gamma_\ell=\ell^{-\kappa/2}$ with $\kappa=1.2$, $a=[1,1,1]$, $m=[0,0,0]$ for all models; $b=[5,10,10]$, $V=[0.1,0.1,0.01]$ for models 0 and I and $b=[0.1,1,5]$, $V=[1,1,1]$ for model II. 
The truncation number of Mercer's kernel expansion is set to $L=I=5$.
For each experiment, we run MCMC to collect $2.4\times 10^4$ samples, burn in the first 4000, and subsample every other.
The resulting $10^4$ posterior samples are used to estimate the mean function $m|\bY$ and the covariance function $\mC_{y|t}|\bY$. 
For other parametric covariance models (the first four) listed in Table \ref{tab:models2compare} we
let $\sigma^2\sim \Gamma(a', b')$ and $\log \rho\sim \mN(m',V')$.
Though not originally designed to learn TESD, these models are set in the framework of \eqref{eq:gstgp_full} with the same hyper-parameter setting as model I.
We want to test these models in recovering mean function $m(\bx, t)$ and more importantly learning TESD $\mC_{y|t}$ which has the following truth:
\begin{equation}
\begin{aligned}
C_{y|t}^\text{ns}(x,x'):=\Cov[y(x,t), y(x',t)] &= \exp\left( -\frac{|x-x'|^2}{2\ell_x} - \frac{|x-x'|t}{2\ell_{xt}} \right) + \sigma^2_\eps\delta(x=x') \\
\end{aligned}
\end{equation}

\begin{figure}[t] 
   \begin{subfigure}[b]{.495\textwidth}
    \includegraphics[width=1\textwidth,height=.6\textwidth]{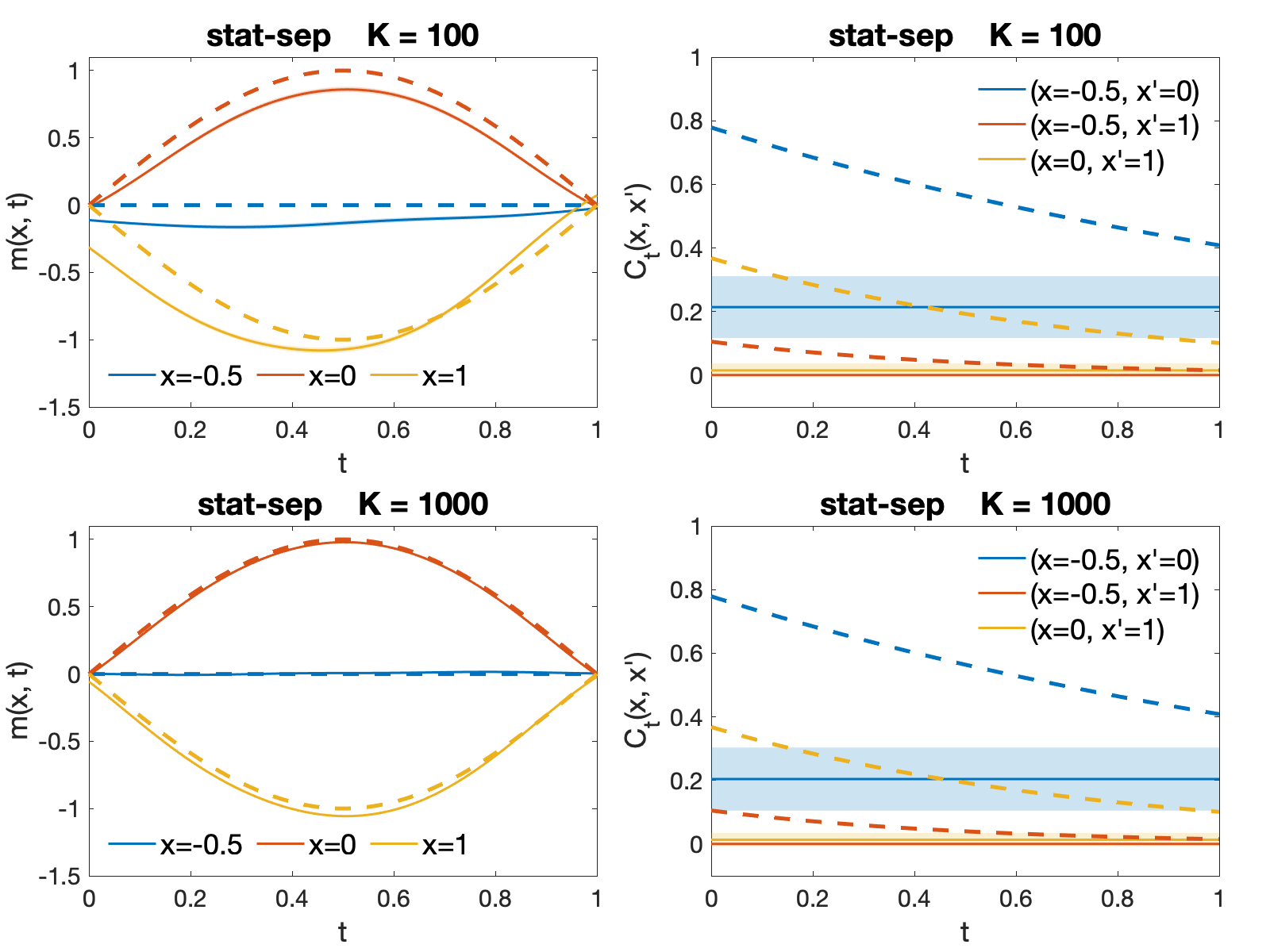} 
    \caption{Stat-sep model}
    \end{subfigure}
    \begin{subfigure}[b]{.495\textwidth}
   \includegraphics[width=1\textwidth,height=.6\textwidth]{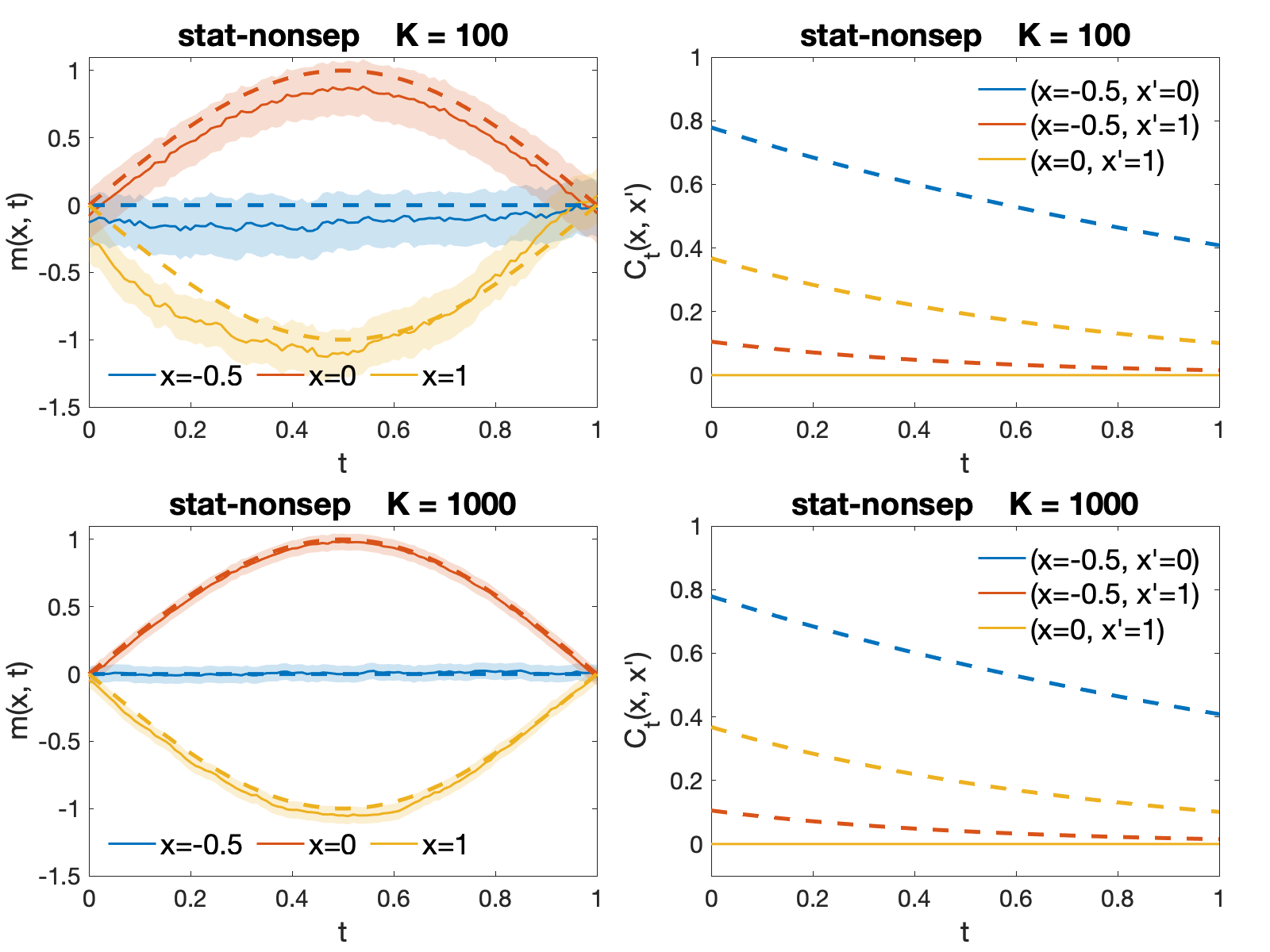}
   \caption{Stat-nonsep model}
   \end{subfigure}
   \begin{subfigure}[b]{.495\textwidth}
   \includegraphics[width=1\textwidth,height=.6\textwidth]{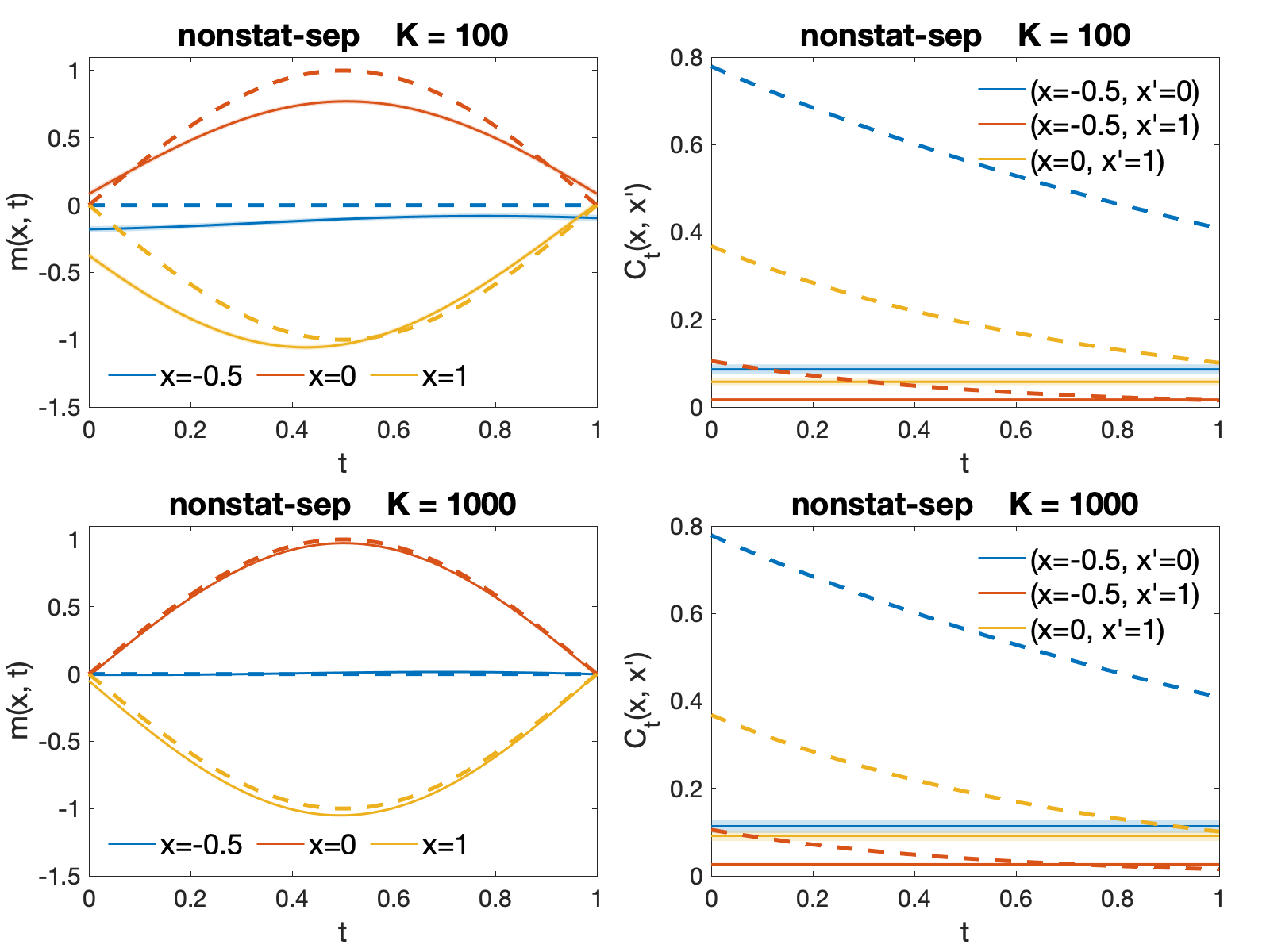}
   \caption{Nonstat-sep model}
   \end{subfigure}
   \begin{subfigure}[b]{.495\textwidth}
   \includegraphics[width=1\textwidth,height=.6\textwidth]{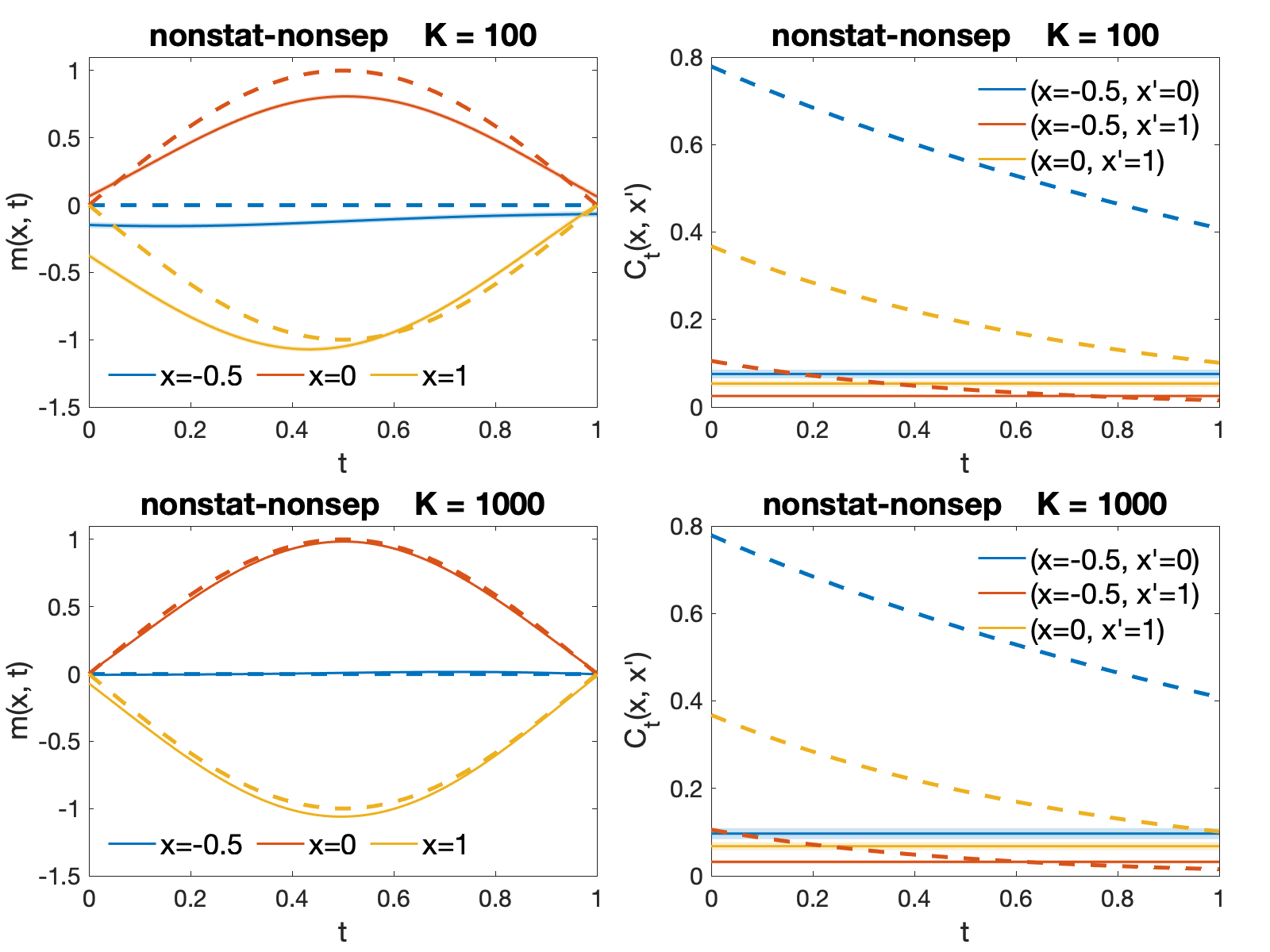}
   \caption{Nonstat-nonsep model}
   \end{subfigure}
    \begin{subfigure}[b]{.495\textwidth}
   \includegraphics[width=1\textwidth,height=.6\textwidth]{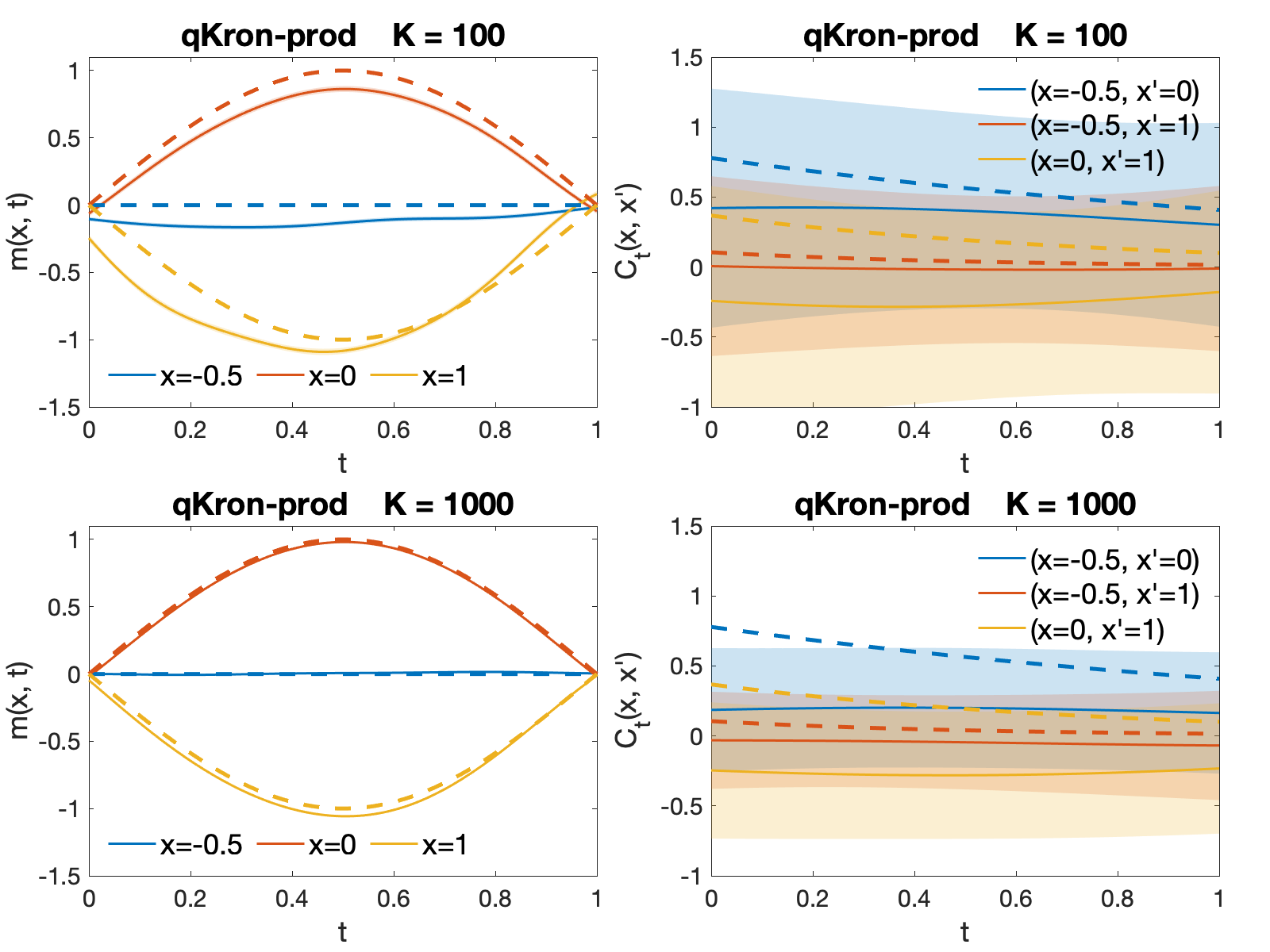} 
   \caption{qKron-prod model (I)}
   \end{subfigure}
   \begin{subfigure}[b]{.495\textwidth}
   \includegraphics[width=1\textwidth,height=.6\textwidth]{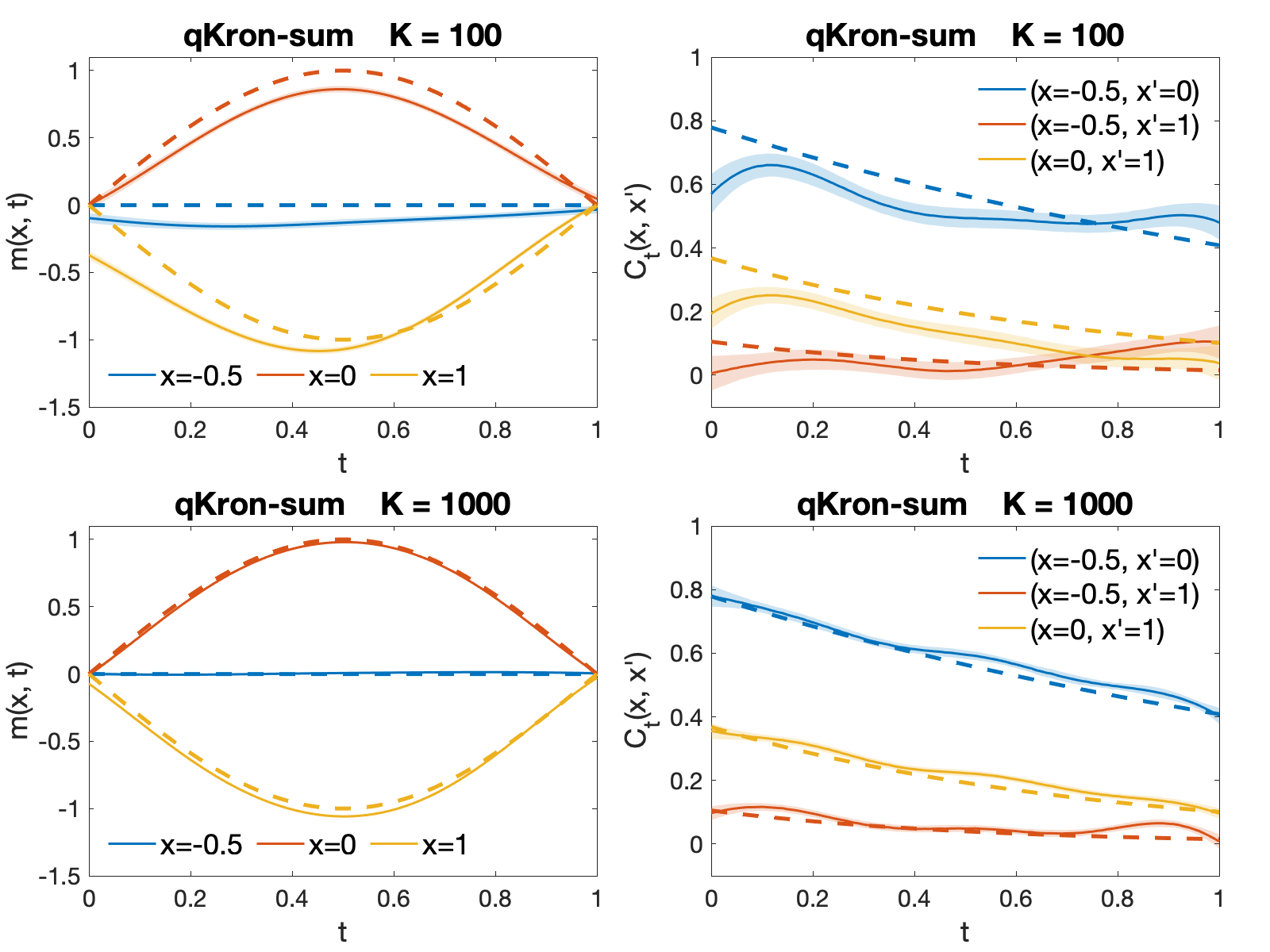}
   \caption{qKron-sum model (II)} \label{fig:fitted_STproc_nonstat_qKron_sum}
   \end{subfigure}
   \caption{A simulated process: selective mean functions $m(x,t)$ (left column) and covariance functions $C_{y|t}^\text{ns}(x,x')$ (right column) fitted by various models with $K=100$ trials of data (upper row) and $K=1000$ trials of data (lower row) on each panel. Dashed lines are true values, solid curves are estimates with shaded credible regions indicating their uncertainty.}
   \label{fig:fitted_STproc_nonstat}
\end{figure}

The posterior estimates by different models are plotted at selective locations in Figure \ref{fig:fitted_STproc_nonstat}.
With growing data information (increasing trial number $K$) \footnote{More repeated trials ($K$) can be viewed as increasing data ($n$) though they are stacked over the same discrete time points $\{t_j\}_{j=1}^J$. Better contraction (closer to truth with better credible band coverage) can be achieved by increasing $J$ distinct time points \citep[as increasing $N$ in Figure 5 of][]{lan_2019}.}, most posterior estimates contract towards some values (with decreasing width of credible bands), 
especially for mean functions (left column of each panel).
However, most of them fail to reconstruct correct TESD (right column of each panel). According to Proposition \ref{prop:essence}, model 0 yields static TESD, and the result by model I is inevitably pulled towards the wrong constant $\sigma^2_\eps$ with increasing data due to the poorly structured likelihood kernel $\mC_{y|m}^\textrm{I}$.
Model II, by contrast, characterizes TESD with reasonable accuracy (Panel \ref{fig:fitted_STproc_nonstat_qKron_sum}).
Although all the parametric covariance models recover the true mean functions, none of them generates correct time-varying TESD, even it is the non-stationary non-separable \citep{Wang_2020}.

We repeat the experiments for 10 times and summarize their model performance in terms of mean squared errors (MSE) in Table \ref{tab:mse}.
As we can see, model II renders comparable MSE in estimating mean functions as other models, but $2\sim 3$ orders of magnitude smaller MSE for estimating TESD, using much less time. 

\begin{table}[H]\tiny
\begin{center}
\begin{tabular}{l|ccc|ccc}
  \toprule
  & \multicolumn{3}{c}{$K=100$} & \multicolumn{3}{c}{$K=1000$} \\
  \cmidrule{2-4} \cmidrule{5-7}
 Model & mean & TESD & time & mean & TESD & time \\ 
  \midrule
stat-sep & 2.17e-2 (1.2e-5) & 0.143 (2e-5) & 2410 (230) & 1.14e-3 (9.3e-7) & 0.143 (2e-5) & 2450 (520) \\
stat-nonsep & 2.18e-2 (1.1e-5) & 0.142 (1.9e-05) & 2340 (350) & 1.15e-3 (6.5e-7) & 0.142 (1.3e-5) & 2330 (310) \\ 
  nonstat-sep & 2.34e-2 (1.1e-5) & 0.111 (2.5e-05) & 1770 (200) & 1.1e-3 (8.6e-7) & 9.52e-2 (1.9e-5) & 1840 (250) \\ 
  nonstat-nonsep & 2.27e-2 (1.4e-5) & 0.114 (2.1e-05) & 3230 (410) & 1.13e-3 (7.9e-7) & 0.105 (2.4e-5) & 3240 (420) \\
  \midrule
  qKron-prod (I) & 2.19e-2 (8.7e-4) & 0.15 (0.11) & 3850 (1700) & 1.14e-3 (6.1e-7) & 0.189 (2.7e-2) & 3760 (1400) \\ 
  qKron-sum (II) & 2.06e-2 (2.6e-5) & {\bf 5.14e-3} (9.5e-5) & {\bf 1720} (160) & 1.13e-3 (1.5e-6) & {\bf 3.85e-4} (4.1e-5) & {\bf 2380} (890) \\ 
   \bottomrule
\end{tabular}
\caption{Mean squared errors (MSE) for fitting mean $m(x,t)$ and TESD $C_{y|t}^\text{ns}(x,x')$.
Table values are median estimates of 10 repeated results with sample standard deviation in the parentheses.} 
\label{tab:mse}
\end{center}
\end{table}

\subsubsection{Model Prediction}
\begin{figure}[t] 
   \begin{subfigure}[b]{.495\textwidth}
   \includegraphics[width=1\textwidth,height=.7\textwidth]{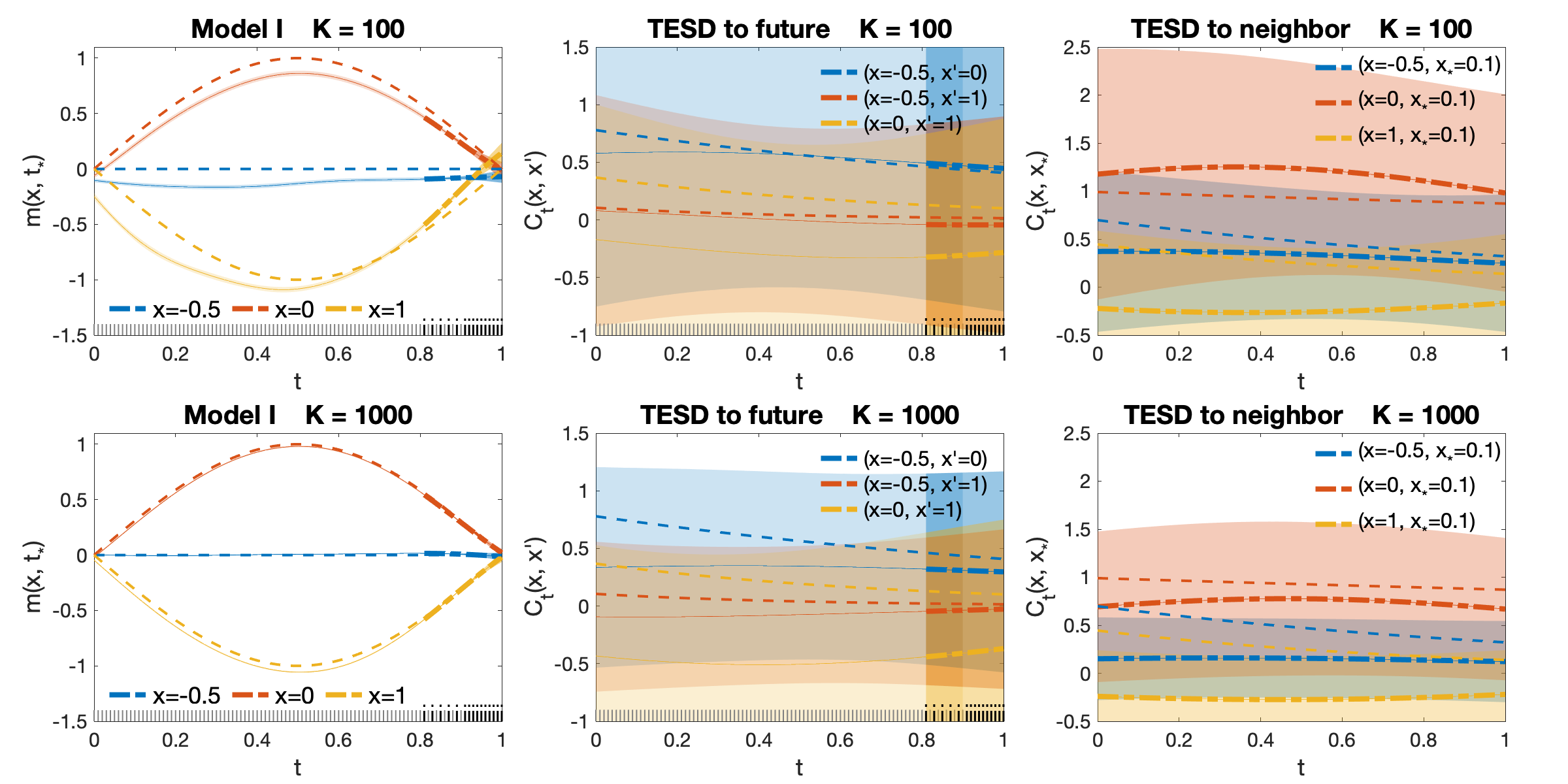} 
   \caption{Model I (qKron-prod)}
   \end{subfigure}
    \begin{subfigure}[b]{.495\textwidth}
   \includegraphics[width=1\textwidth,height=.7\textwidth]{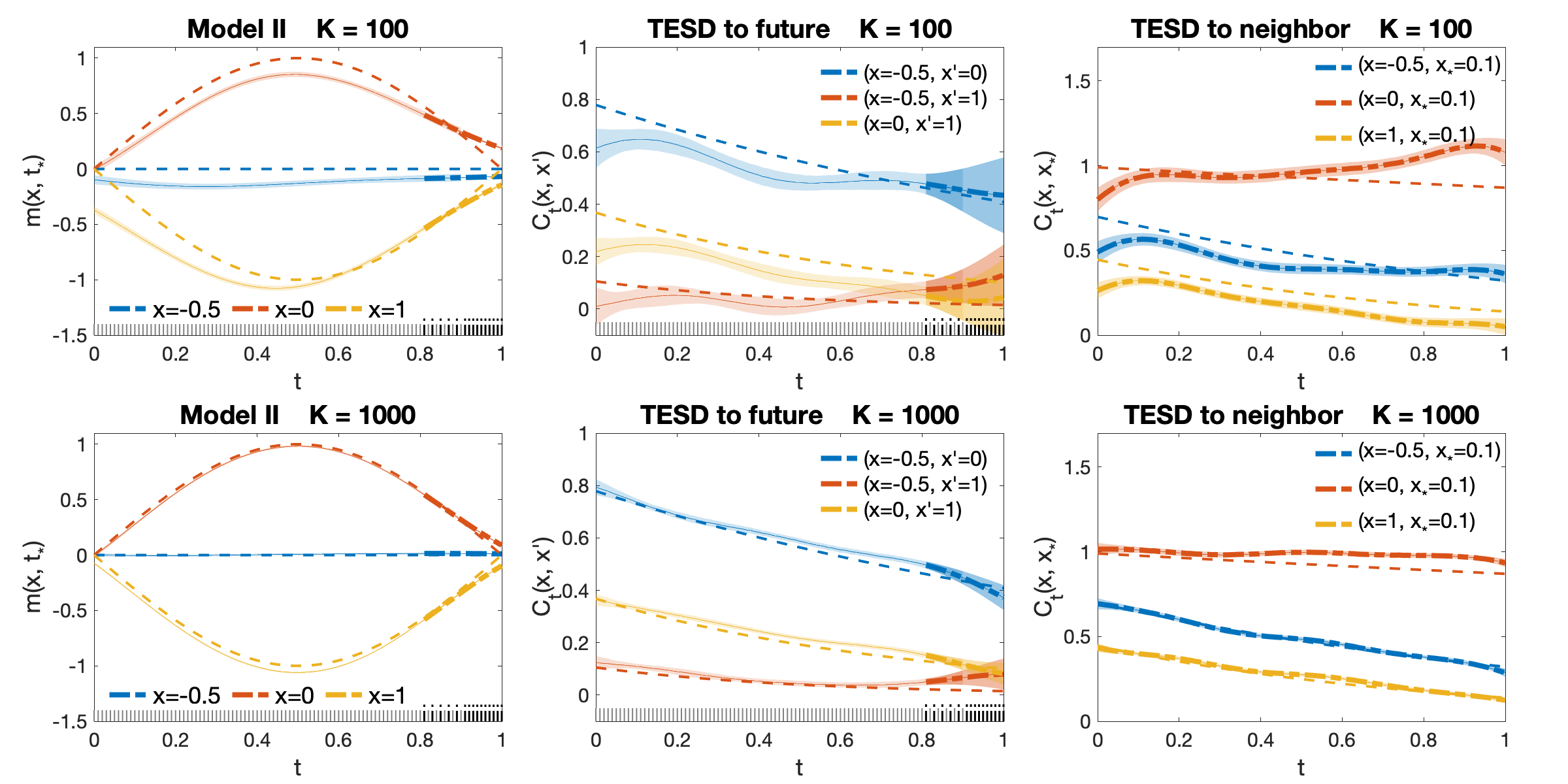}
   \caption{Model II (qKron-sum)}
   \end{subfigure}
   \caption{Prediction of mean functions $m(x,t_*)$ (left column), TESD at future times $t_*$, $C_{y|t_*}^\text{ns}(x,x')$ (middle column), and at neighboring locations $x_*$, $C_{y|t}^\text{ns}(x,x_*)$ (right column), by (a) model I and (b) model II based on $K=100$ trials (upper row) and $K=1000$ trials (lower row) of data on each panel.
   Dashed lines are true values. Solid lines with light shaded regions are results (estimates with credible bands) of the models trained on data indicated by short gray ticks. Thick dash-dot lines with dark shaded regions are results (predictions with credible bands) of the models on testing data indicated by black dash-dot ticks.}
   \label{fig:pred_STproc}
\end{figure}

Now we consider the predictions described in Section \ref{sec:infpred}. 
We train the models based on $85\%$ of the data (shorter gray ticks) and hold out the rest $15\%$ (black dash-dot ticks) for testing as illustrated in Figure \ref{fig:pred_STproc} that plots results only for models I (qKron-prod) and II (qKron-sum).
We first compare the mean prediction in the time direction, i.e. $m(x,t_*)$, which is also a well studied object.
In Figure \ref{fig:pred_STproc}, thick dash-dot lines are the predicted values of the mean function with dark shaded regions as the corresponding credible bands.
In the left columns of these two panels, we see that both models give comparable results in predicting mean functions, with better accuracy for more data (trials $K$).


Next, we consider the prediction of covariance functions, particularly in two types, namely TESD to future (Section \ref{sec:TESD2future}) and TESD to neighbor (Section \ref{sec:TESD2neighbor}).
The middle columns on the two panels of Figure \ref{fig:pred_STproc} show the first type of covariance prediction, i.e. $\mC_{y|t_*}(x,x)$.
The predicted TESD in time follows the trend of the fitted values with model II significantly better than model I (closer to truth with smaller credible bands).
Note there is higher uncertainty (wider credible bands) when such prediction comes into `no-data' (extrapolation) zone compared with that in the interpolation zone where there are still nearby training points.
Lastly, we compare the second type of covariance prediction, $\mC_{y|t}(x,x_*)$, TESD to a new neighbor $x_*=0.1$ in the right columns on two panels of Figure \ref{fig:pred_STproc}.
Note, we do not have any data at this location $x_*=0.1$, yet model II can still extend TESD (thick dash-dot lines) to new locations with decent precision, in reference to the truth (dash lines).
On the contrary, model I yields less desirable results with much higher uncertainty.

We extend the experiments to all models and repeat each for 10 times. Table \ref{tab:mspe} summarizes their mean squared prediction errors (MSPE).
Again we find model II generating significantly smaller MSPE for TESD using less time compared with the other models.
All the numeric evidences strongly support that model II is both effective and efficient in modeling and predicting TESD.

\begin{table}[ht]\scriptsize
\begin{center}
\begin{tabular}{l|l|cccc}
  \toprule
 Trials & Model & mean & TESD to future & TESD to neighbor & time \\ 
  \midrule
& stat-sep & 5.09e-3 (1.4e-6) & 7.05e-2 (1.6e-5) & 0.308 (2.4e-5) & 2280 (140) \\
& stat-nonsep & 2e-2 (3.4e-6) & 7e-2 (1.2e-3) & 0.307 (2.2e-3) & 2100 (290) \\
& nonstat-sep & 1.63e-2 (1.4e-5) & 5.07e-2 (6.2e-4) & 0.247 (2.6e-4) & 1600 (170) \\
$K=100$ & nonstat-nonsep & 1.31e-2 (7.7e-6) & 5.21e-2 (7.7e-4) & 0.257 (2.3e-4) & 2820 (370) \\
\cmidrule{2-6}
 & qKron-prod (I) & 8.32e-3 (7.3e-4) & 0.109 (0.68) & 0.327 (0.66) & 4170 (770) \\ 
 & qKron-sum (II) & 8.65e-3 (2e-4) & {\bf 1.04e-2} (0.34) & {\bf 7.06e-3} (1.3e-4) & {\bf 1800} (96) \\ 
   \midrule
 & stat-sep & 7.95e-4 (3.1e-7) & 7.05e-2 (2.2e-5) & 0.308 (2.4e-5) & 2290 (430) \\ 
 & stat-nonsep & 1.26e-2 (9.2e-7) & 6.95e-2 (8.4e-4) & 0.308 (2.4e-3) & 2150 (290) \\ 
 & nonstat-sep & 7.56e-4 (4.5e-7) & 3.5e-2 (4e-4) & 0.212 (6.4e-5) & 1670 (210) \\ 
 $K=1000$ & nonstat-nonsep & 1.45e-3 (9.3e-7) & 4.23e-2 (1.2e-3) & 0.238 (2.7e-4) & 2820 (370) \\ 
 \cmidrule{2-6}
 & qKron-prod (I) & 4.18e-4 (5.7e-6) & 0.131 (2.4e-2) & 0.187 (0.17) & 3380 (1200) \\ 
  & qKron-sum (II) & 1.6e-3 (1.3e-5) & {\bf 4.39e-3} (1.3e-2) & {\bf 1.48e-3} (4.3e-5) & {\bf 2310} (810) \\ 
  \bottomrule
\end{tabular}
\caption{Mean squared prediction errors (MSPE) for predicting mean $m(x,t_*)$ and TESD's $\mC_{y|t_*}^\text{ns}(x,x)$ and $\mC_{y|t}^\text{ns}(x,x_*)$.
Table values are median estimates of 10 repeated results with sample standard deviation in the parentheses.} 
\label{tab:mspe}
\end{center}
\end{table}

\subsection{Longitudinal Analysis of Alzheimer's Brain Images} 
 \label{sec:ADPET}
In this section we will apply the generalized STGP model to Alzheimer's neuroimaging data to study the association of brain regions in the progression of this disease.
Since other models fail to characterize TESD and involve prohibitive computation in this example, we will mainly focus on model II in the following unless stated otherwise.

Alzheimer's disease (AD) is a chronic neurodegenerative disease that affects patients' brain functions including memory, language, orientation, etc. in the elder population generally above 65.
According to the World Alzheimer Report \citep{adreport2018}, there were about 50 million people worldwide living with dementia in 2018,
and this figure is expected to skyrocket to 132 million by 2050. Yet the cause of AD is poorly understood.
Longitudinal studies have collected high resolution neuroimaging data, genetic data and clinical data in order to better understand the progress of brain degradation.
In this section, we analyze the positron emission tomography (PET) brain imaging data from the Alzheimer's Disease Neuroimaging Initiative project \citep{ADNI}
with two aims:
(i) to characterize the change of the brain structure and function over time; 
and (ii) to detect the spatial correlation between brain regions and describe its temporal evolution (TESD).

\subsubsection{Positron Emission Tomography (PET) data}
We obtain PET scans scheduled at the baseline, 6 months, 1 year, 18 months, 2 years and 3 years from the ADNI study.
There are 51 subjects in this data set, with 14 Cognitively Normal (CN), 27 Mild Cognitive Impairment (MCI) and 19 Alzheimer's Disease (AD).
Among these patients, only the MCI group has data at 18 months and the AD group is followed up until 2 years.
PET brain image scans are obtained and 
processed by co-regsitering to have the same position, averaging over 6 five-minute frames, standardizing to $160\times 160 \times 96$ voxel image grid, and smoothing to have uniform resolution.
A detailed description of PET protocol and acquisition can be found at \href{http://adni.loni.usc.edu/methods/pet-analysis-method/pet-analysis/}{http://adni.loni.usc.edu}.

We focus on a ($48$-th) slice in the middle (horizontal section) and model the images of size $160\times 160$.
For each subject $k$ at a specific time point $t$ during the study, the response function $y_k(x,t)$ represents the pixel value of location $x$ in the image being read. 
Therefore the discrete data $\{y_{ijk}\}$ have dimension $I\times J\times K$, with $I=160^2=25600$, $J\in\{5, 6, 4\}$ and $K\in\{14, 27, 19\}$.
To study the spatial dependency in these brain images, we need a kernel with discrete size $25600\times 25600$, which is enormous if it is a dense matrix.
We introduce a spatial kernel based on the graph Laplacian \citep{shen2010,ng2012,hu2015,huang2018,DUNLOP2020}. The resulting precision matrix is highly sparse (with $0.035\%$ non-zero entries) and thus amenable to an efficient learning of TESD.
See more details in Appendix \ref{apx:graphLap}.

\begin{figure}[t] 
   \centering
   \includegraphics[width=1\textwidth,height=.15\textwidth]{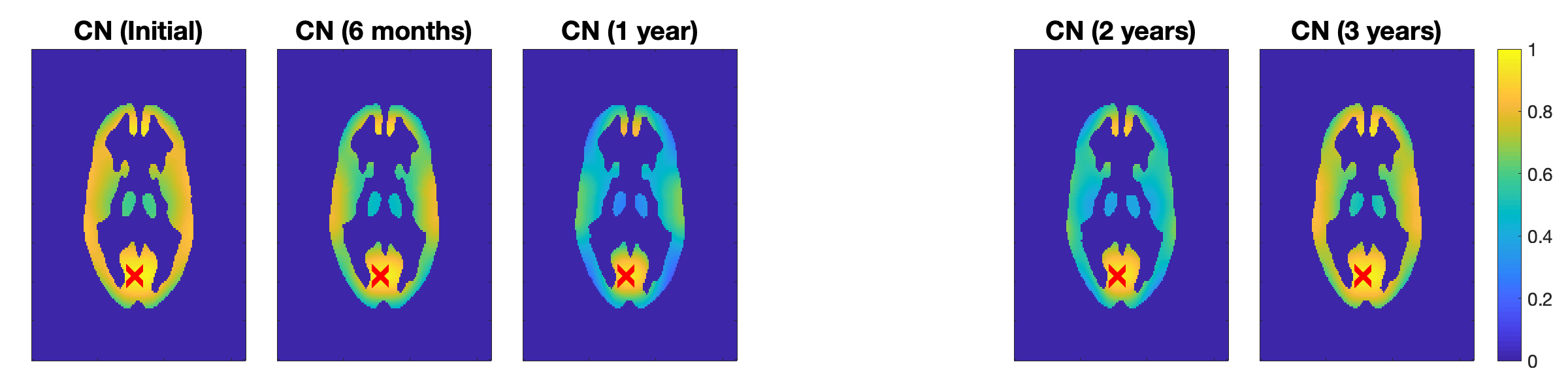} 
   \includegraphics[width=1\textwidth,height=.15\textwidth]{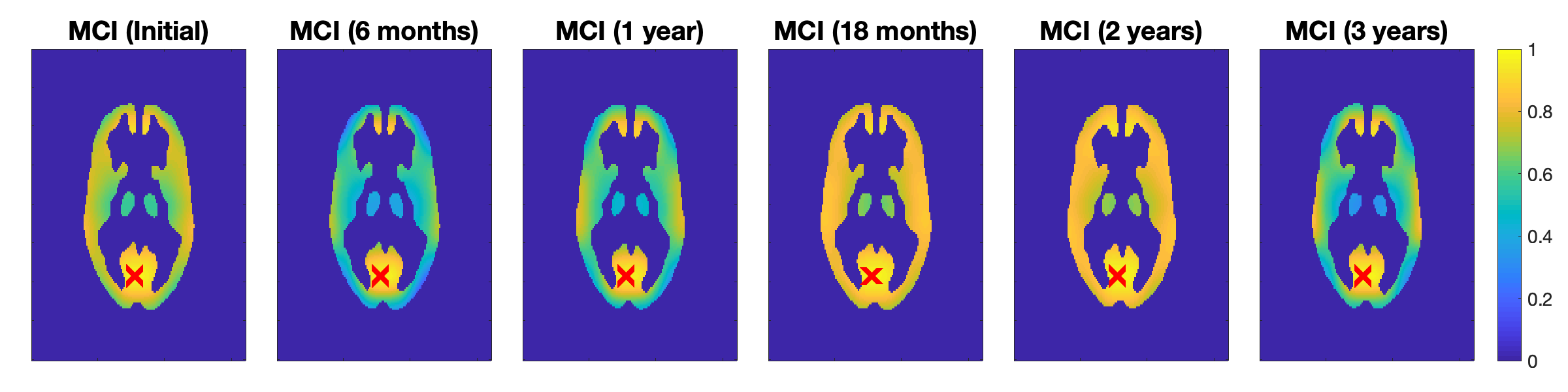} 
   \includegraphics[width=1\textwidth,height=.15\textwidth]{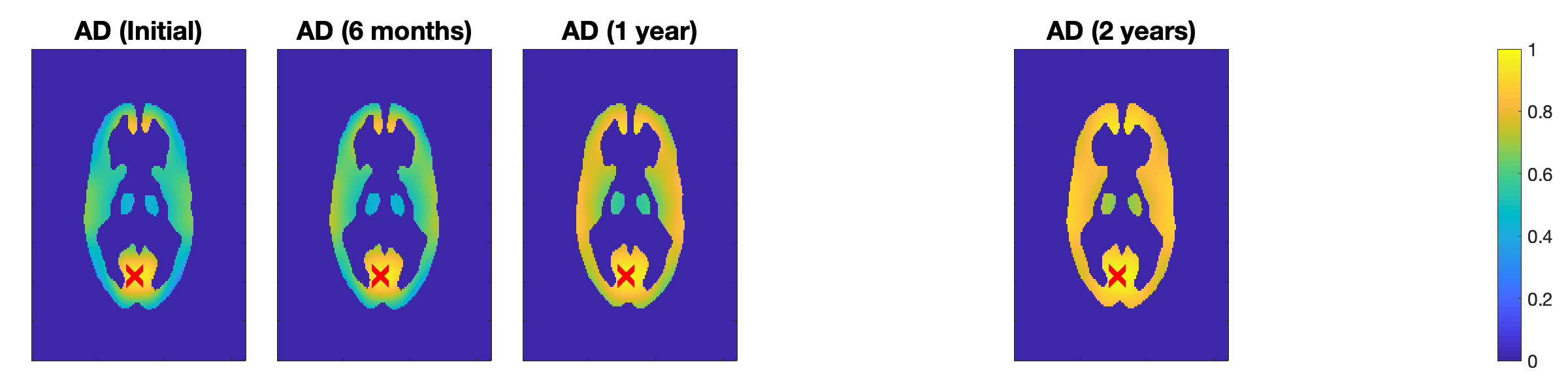} 
   \caption{Estimated correlation between the brain region of interest (ROI) and a selected point of interest (POI, red cross) for CN (top row), MCI (middle row) and AD (bottom row) respectively by the proposed model II (qKron-sum). The color at a point in ROI indicates the correlation between that point and the selected POI in the scale of $[0,1]$.}
   \label{fig:corpoi_PET}
\end{figure}
\subsubsection{Model Fit}\label{sec:PET_fit}
Now we fit model II (qKron-sum) with the graph-Laplacian based spatial kernel \eqref{eq:gLspat} to the PET brain images.
The following setting for hyper-parameters is used: $a=[1,1,1]$, $b=[0.1,1,0.1]$, $m=[0,0,0]$ and $V=[0.1,1,1]$;
however the results are not sensitive to the setting. 
The smoothness of actual time-varying spatial dependence in the brain regions is unknown, thus it is difficult to specify a prior that matches the regularity of the truth. Therefore we choose $\kappa=0$ in the prior model \eqref{eq:randcoeff}-\eqref{eq:spatkern_t1} for $\mC_{\bx|t}$. It results in an improper prior, however regularized by the likelihood (See more details in Figure \ref{fig:dyneigv_PET} and Appendix \ref{apx:moreADres}). Smoother (and more informative) priors tend to blur TESD found here (results not shown).
The truncation number of the Mercer's kernel expansion is set to $L=100$.
We run MCMC to collect $2.4\times 10^4$ samples, burn in the first 4000, and subsample every other.
The resulting $10^4$ samples are used to obtain posterior estimates of mean functions $\bM(t)$ and covariance functions $\bC_y(t)$.
Figure \ref{fig:estm_PET} shows the fitted brain images at 6 scheduled times.
The estimated brain images of patients in the control group (CN) have higher pixel values than the other two groups with bigger (blue) hollow regions.
This can be seen more clearly from the summary of their estimated pixel values in Figure \ref{fig:pixel_PET}. 

Next, we investigate TESD in the PET brain images.
TESD over this discretize field is a matrix valued function of time $t$. At each time the spatial covariance matrix is of size $25600\times 25600$, too big for a direct visualization.
Instead, we examine one row of the spatial correlation matrix by selecting a region of interest (ROI) (chosen based on the pixel values above the $83.5\%$ quantile) and a point of interest (POI) (based on the overall variance, marked as red cross) in the occipital lobe.
Figure \ref{fig:corpoi_PET} plots the ROI-POI correlations across time. 
As these spatial correlations evolve with time, the POI is highly correlated to its nearest region across all the time. It is also interesting to note the high correlation between the POI and some area in the frontal lobe.

\begin{figure}[t] 
   \centering
   \includegraphics[width=1\textwidth,height=.15\textwidth]{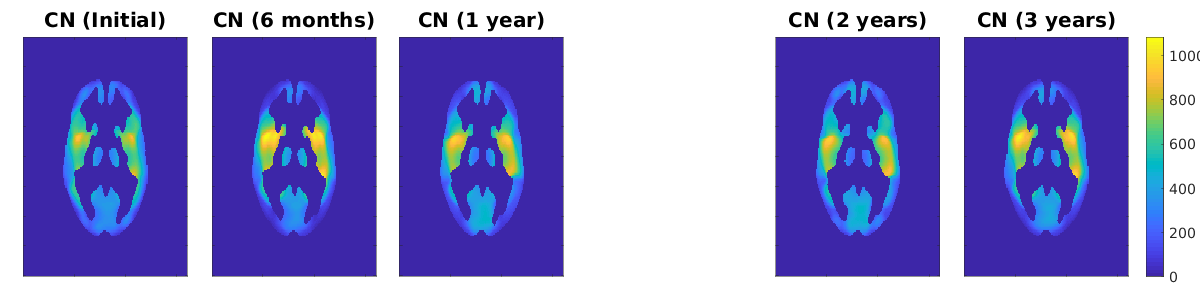} 
   \includegraphics[width=1\textwidth,height=.15\textwidth]{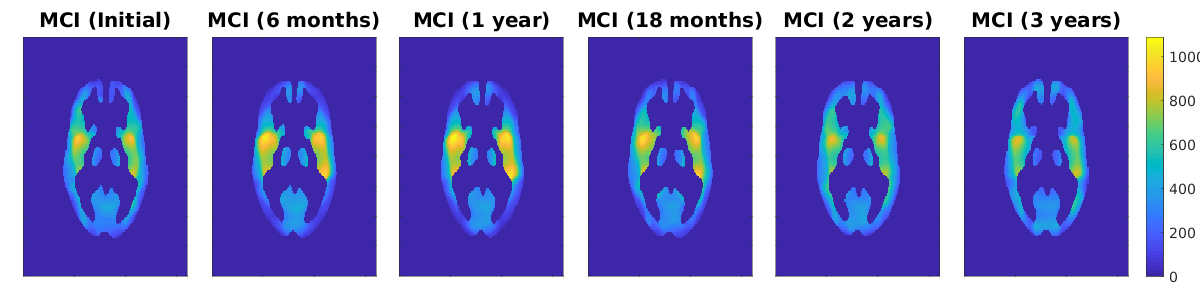} 
   \includegraphics[width=1\textwidth,height=.15\textwidth]{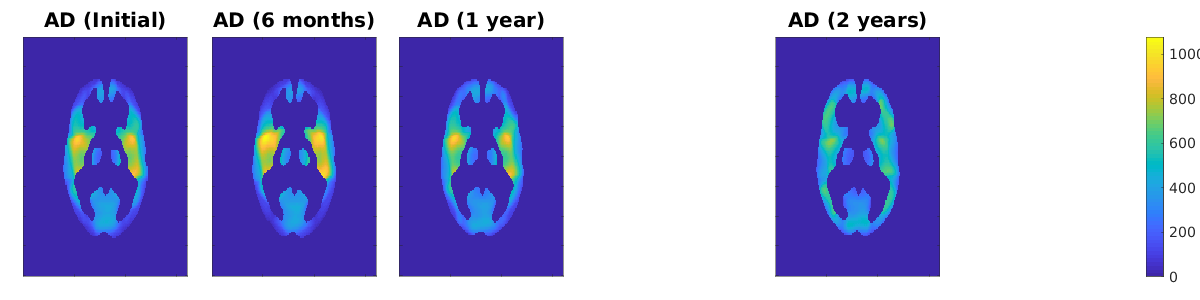} 
   \caption{Estimated graph connection of the brain images for CN (top row), MCI (middle row) and AD (bottom row) respectively by the proposed model II (qKron-sum). The color at each point represents the number of nodes connected to it, indicating the brain activeness.}
   \label{fig:estcor_PET}
\end{figure}

To better summarize TESD as full-size covariance/correlation matrices, we threshold the ($25600^2$) absolute correlation values at the top $10\%$, then we obtain the adjacency matrix ($25600\times 25600$) based on the nonzero values of the correlations. 
Finally we define the \emph{graph connection} as the diagonal of the degree matrix (row sums of the adjacency) projected back to $160\times 160$ mesh. Therefore, the value of each point on the graph connection indicates how many nodes are connected to it. Figure \ref{fig:estcor_PET} plots the graph connections of the brain ROI for different groups.
For each of these graph connections in Figure \ref{fig:estcor_PET}, the truncation at any value yields a network of connected nodes that are the most active.
As seen from Figure \ref{fig:estcor_PET}, these networks are most likely to concentrate on certain region in the temporal lobe.
We successfully characterize the dynamic changing of such connectivity network of in these brain images.
Note that the connectivity becomes weaker (thus the network of connected nodes become smaller) in the later stage for the MCI group (2 and 3 years) and the AD group (2 years),
which could serve as an indicator of brain degradation. 

To compare the generalized STGP models with other spatiotemporal models with parametric covariance listed in Table \ref{tab:models2compare}, we have to reduce the image size to $40\times 40$ for learning TESD (spatial covariance of size $1600\times 1600$) with available computing resources. Note that qKron-sum is the only sparse model. All the other dense models would otherwise require more than 200 GB memory for the original spatial covariance of size $25600\times 25600$ and take more than 1 week to gather only a few thousands of samples even on GPU.

Working with the coarsen images (by sub-sampling image pixels) for the AD patients, we conduct the similar comparison as in Section \ref{sec:sim_fit}. For each model, we collect $1.2\times 10^4$ samples, burn in the first 2000, and subsample every 5. The resulted 2000 samples are used to obtain posterior estimates.
Unlike simulation, there is no true mean or covariance in this example. Therefore, we compare the log-likelihood, weighted MSE and time consumption in Table \ref{tab:loglikmse}. The values of log-likelihood for different models are not very comparable because the log-likelihood is dominated by the log-determinant of the likelihood kernel, which scales linearly with the problem size ($IJK$). We thus consider the weighted MSE, the positive quadratic form of log-likelihood, which is also interpreted as the Mahalanobis distance between actual image and the fitted mean, weighted by the likelihood kernel. The proposed STGP model qKron-sum attains the lowest weighted MSE score using the least amount of time (almost 2\% of that for stat-nonsep and nonstat-nonsep models).

\begin{table}[ht]
\centering
\begin{tabular}{l|cccc}
  \toprule
Model & log-likelihood & weighted MSE & Time (seconds) \\ 
  \midrule
stat-sep & 247059.02 (13.04) & 59075.74 (247.80) & 57650.46 \\ 
  stat-nonsep & 247359.03 (16.73) & 57731.13 (241.00) & 270202.49 \\ 
  nonstat-sep & 143714.69 (63.98) & 60663.36 (238.87) & 177338.31 \\ 
  nonstat-nonsep & 179503.71 (151.24) & 60490.79 (234.76) & 254509.56 \\ 
  \midrule
  qKron-prod (I) & 97405.81 (45591.06) & 75654.64 (58212.91) & 163502.18 \\ 
  qKron-sum (II) & 9079.57 (64.19) & {\bf 4545.54 (75.61)} & {\bf 5087.06} \\ 
  \bottomrule
\end{tabular}
\caption{Log-likelihood, weighted mean squared error (MSE) and time for AD by various models. 
Table values are median estimates of 2000 posterior samples with sample standard deviation in the parentheses.} 
\label{tab:loglikmse}
\end{table}

To investigate the TESD results output by different spatiotemporal models, we plot ROI-POI correlations (as in Figure \ref{fig:corpoi_PET}) in Figure \ref{fig:estcorpoi_multiplemodels} and graph connections (as in Figure \ref{fig:estcor_PET}) in Figure \ref{fig:estcor_multiplemodels} by various models only for the AD patients.
Only the proposed qKron-sum model captures time-varying spatial correlations. 
Note we observe the same brain degradation in the AD group as in Figure \ref{fig:estcor_PET} illustrated by decreasing brain connection.

\begin{figure}[t] 
   \centering
   \includegraphics[width=1\textwidth,height=.15\textwidth]{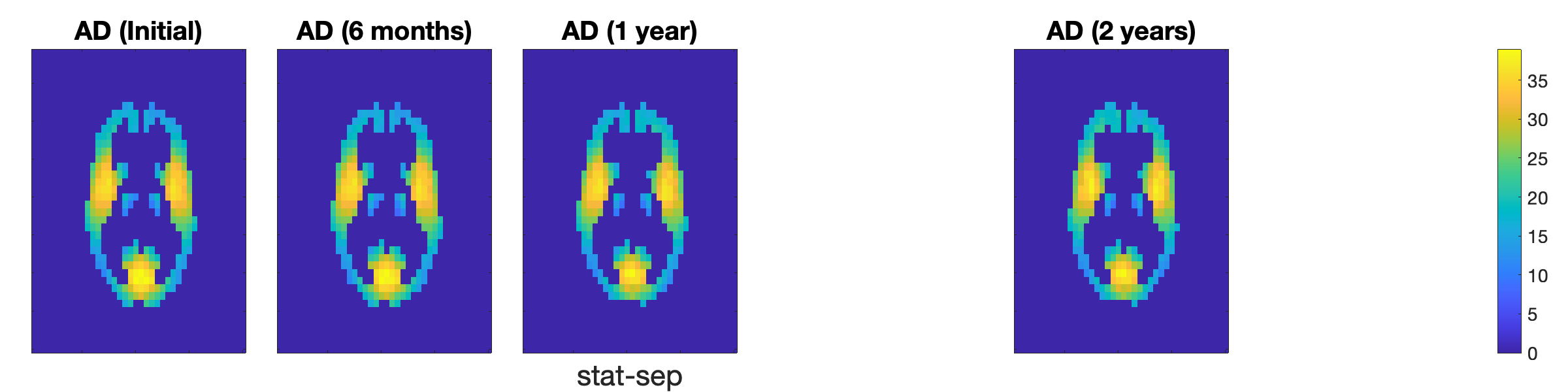}
   \includegraphics[width=1\textwidth,height=.15\textwidth]{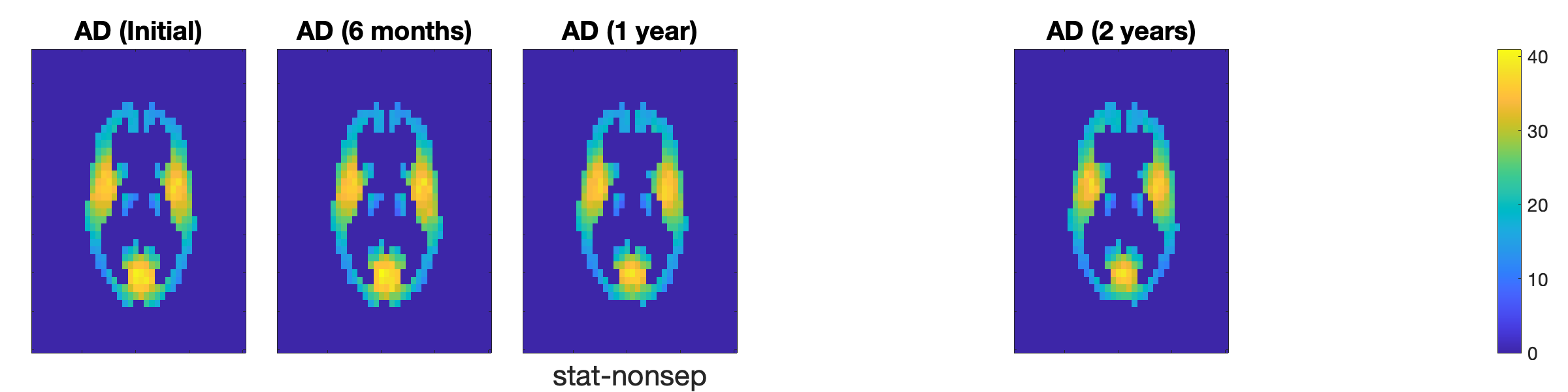}
   \includegraphics[width=1\textwidth,height=.15\textwidth]{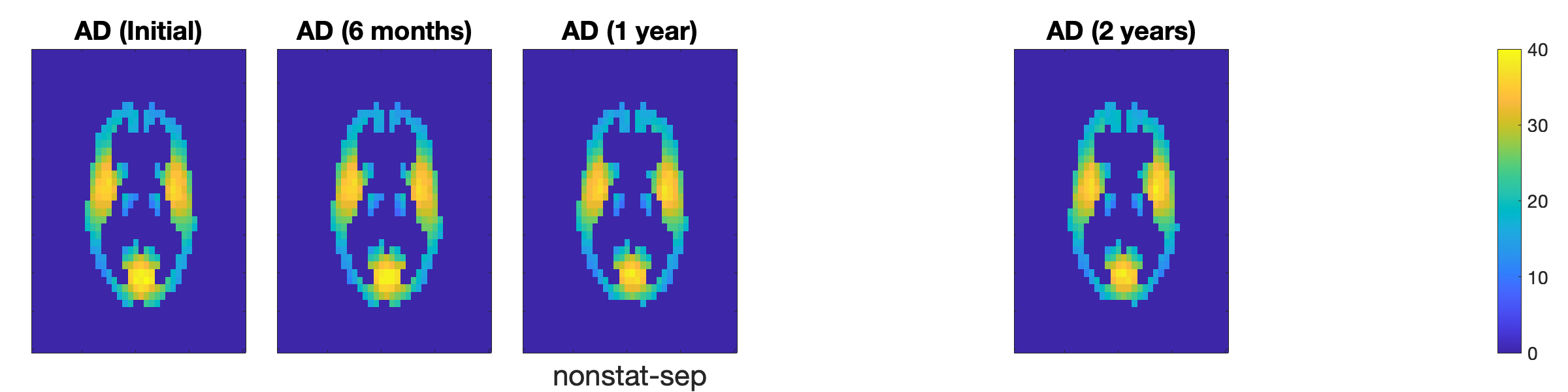}
   \includegraphics[width=1\textwidth,height=.15\textwidth]{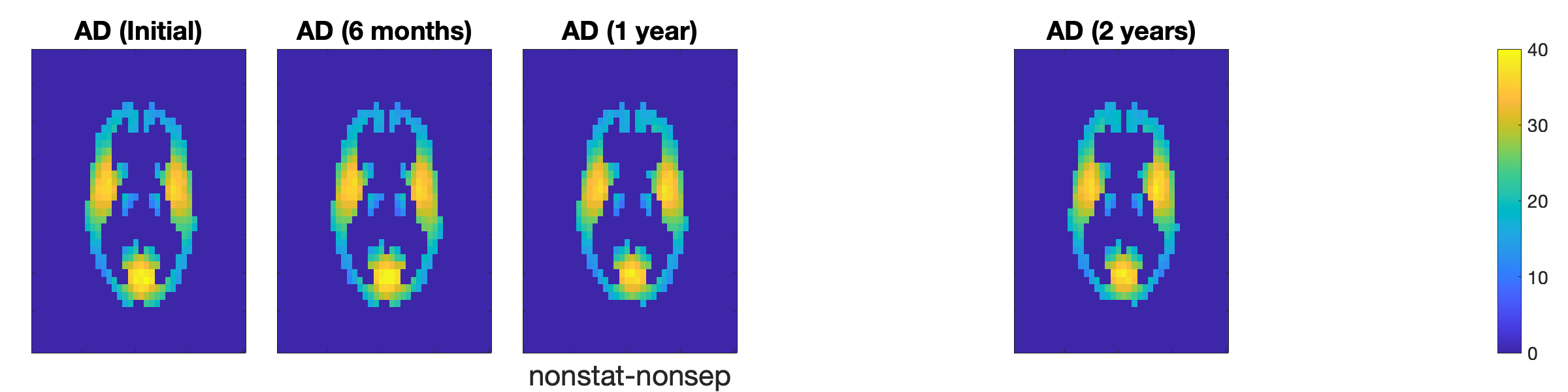}
   \includegraphics[width=1\textwidth,height=.15\textwidth]{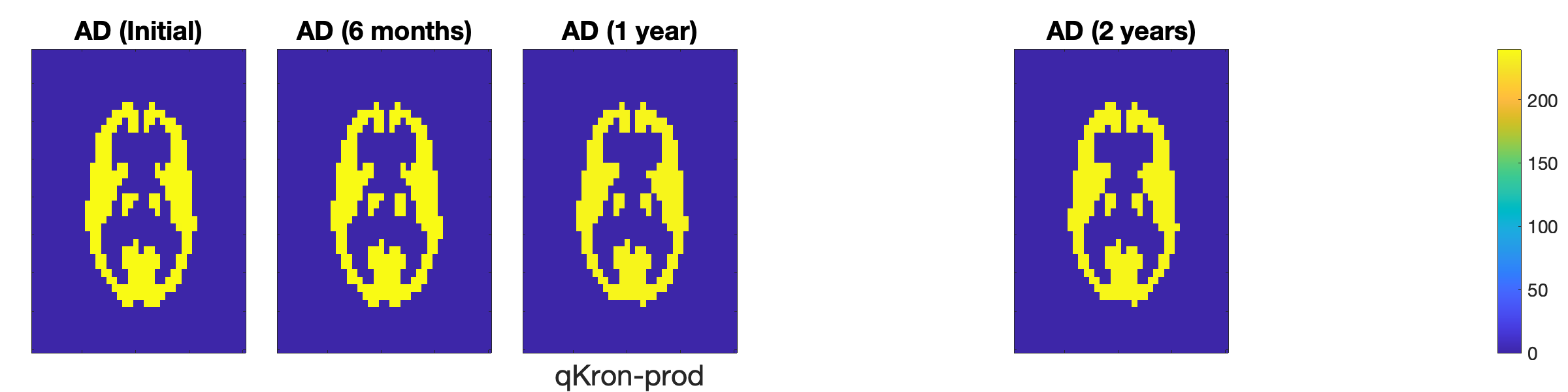}
   \includegraphics[width=1\textwidth,height=.15\textwidth]{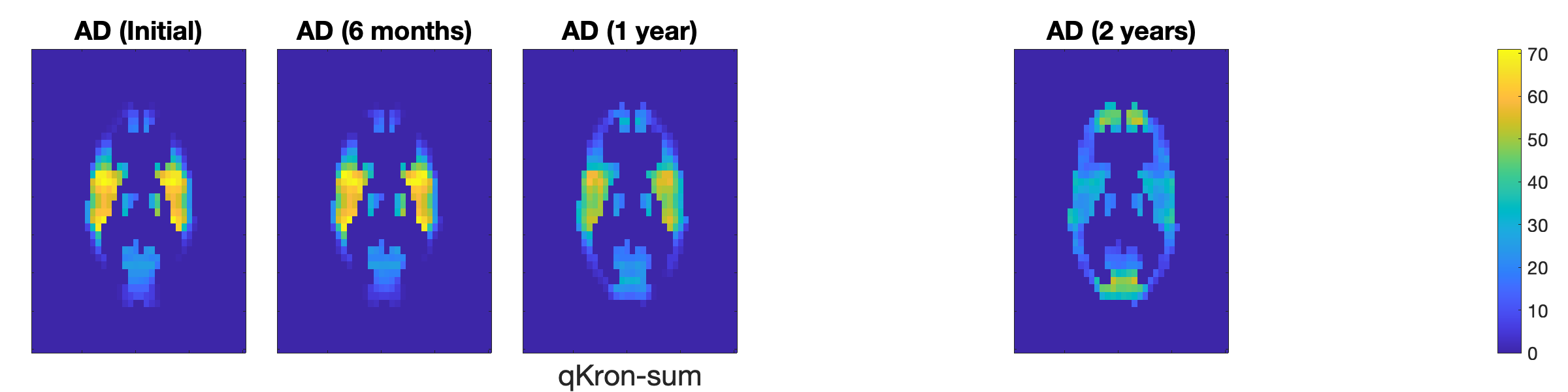} 
   \caption{Estimated graph connection of the brain images for AD by various models. The color at each point represents the number of nodes connected to it, indicating the brain activeness.}
   \label{fig:estcor_multiplemodels}
\end{figure}

\subsubsection{Model Prediction}
Next we hold out the data at the last time point for testing. For each group, the generalized STGP model II (qKron-sum) is built based on the rest of the data.
Then we predict the mean and covariance functions of the brain image at the held-out time point.
Figure \ref{fig:predm_PET} compares the actual individuals' brain images (upper row) with the predicted brain images (lower row) at the last time point. 
We can see that the prediction reflects the basic feature of the brain structure in each group.
Next, Figure \ref{fig:predcorpoi_PET} plots the correlation (TESD) between the brain ROI and the selected POI (marked as red cross) predicted at the last time point.
Note that the POI is less correlated to the middle region (thalamus), especially for the AD group. This is consistent with the fitted results shown in Figure \ref{fig:corpoi_PET}.
Readers can find more numerical results of the neuroimaging analysis in Appendix \ref{apx:moreADres}.

\begin{figure}[t] 
   \centering
   \includegraphics[width=1\textwidth,height=.3\textwidth]{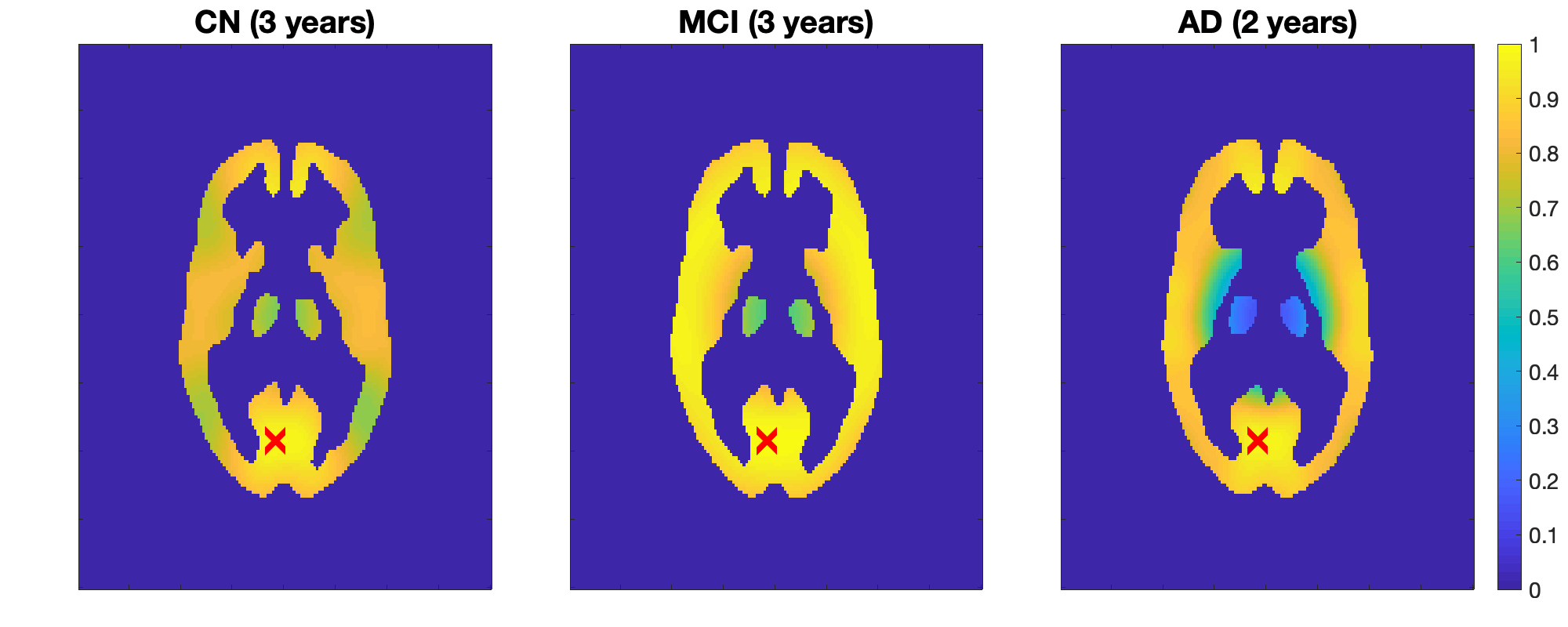}
   \caption{Predicted correlation between the brain region of interest (ROI) and a selected point of interest (POI, red cross) for CN (left), MCI (middle) and AD (right) respectively by the proposed model II (qKron-sum). The color at a point in ROI indicates the correlation between that point and the selected POI in the scale of $[0,1]$.}
   \label{fig:predcorpoi_PET}
\end{figure}

\section{Conclusion}\label{sec:conclusion}
In this paper, we generalize the separable STGP to model TESD in spatiotemporal data.
Instead of treating the space variable $\bx$ and the time variable $t$ as a joint variable $\bz=(\bx,t)$, 
we introduce time-dependence to the spatial kernel by varying its eigenvalues in the Mercer's representation and propose a novel Bayesian nonparametric non-stationary non-separable model for covariance learning.
Theoretic properties of such time-dependent spatial kernel, including the convergence, the regularity of random prior draws and the posterior contraction, have been systematically investigated.
We construct and compare two joint kernels with quasi Kronecker product and sum structures respectively.
The latter is highly sparse and well structured for learning TESD.
The advantage of the proposed (quasi Kronecker sum) model is demonstrated by a simulation study of spatiotemporal process.
It is then applied to analyze PET brain images of Alzheimer's patients 
to describe and predict the change of the brain structure in these patients and uncover TESD in their brain regions in the past and for the future.
The numerical evidences have verified the effectiveness and efficiency of the proposed model in characterizing TESD.

There are multiple future directions. For example, we can model the regularity of the priors through the decaying rate of dynamic eigenvalues \eqref{eq:randcoeff} to learn from data, e.g., $\gamma_\ell \sim\Gamma^{-1}(a_\ell, b_\ell)$ with $b_\ell/a_\ell =\mathcal O(\ell^{-\kappa/2})$ or $(\kappa-1)\sim \Gamma(a,b)$.
The proposed model uses full data on the grid of the space and time but can be readily relaxed to handle missing data.
The model can also be generalized to include covariates (regression) to explain the response variable (process) \citep{hyun2016}. Therefore, the estimation and prediction of TESD, e.g. in the brain images, can be done at individual level. We can further incorporate such information in the covariance and investigate the effect of covariates on TESD. 


The proposed scheme is designed to learn space-time interaction, more specifically, the spatial dependence conditioned commonly evolving time. The other space-time interaction, the spatial variation of temporal correction (SVTC), can be studied in the same vein by introducing space-dependence to the temporal kernel.
This is more related and comparable to the ``coregionalization" model \citep{Banerjee_2015} and could have potential applications in the study of animal migration or climate change.
More generally, we could conduct similar analysis (conditional evolution) of interaction between any two types of information that could go beyond space and time, or even among more than two types, which involves tensor analysis. We leave them to future exploration.

For the longitudinal analysis of AD patients' brain images, subjects who are followed up for the whole study are limited in number. There are more subjects dropped in the middle or missing scheduled scans from the ADNI \citep{ADNI} study thus discarded in the paper. Therefore, there could be large variance in the estimation of TESD due to the insufficient data. As ADNI continues collecting more data, they are expected to contribute a more accurate description of TESD in AD brain images that could facilitate the understanding of the mechanism behind this disease.
Another important topic is the diagnosis of AD. It would be interesting to investigate TESD of the subjects' brains before and after being diagnosed with AD, which could shed more light on the reason of such disease.

\acks{SL is supported by NSF grant DMS-2134256.
Data collection and sharing for this project was funded by the Alzheimer's Disease Neuroimaging Initiative (ADNI) (National Institutes of Health Grant U01 AG024904) and DOD ADNI (Department of Defense award number W81XWH-12-2-0012). ADNI is funded by the National Institute on Aging, the National Institute of Biomedical Imaging and Bioengineering, and through generous contributions from the following: AbbVie, Alzheimer’s Association; Alzheimer’s Drug Discovery Foundation; Araclon Biotech; BioClinica, Inc.; Biogen; Bristol-Myers Squibb Company; CereSpir, Inc.; Cogstate; Eisai Inc.; Elan Pharmaceuticals, Inc.; Eli Lilly and Company; EuroImmun; F. Hoffmann-La Roche Ltd and its affiliated company Genentech, Inc.; Fujirebio; GE Healthcare; IXICO Ltd.; Janssen Alzheimer Immunotherapy Research \& Development, LLC.; Johnson \& Johnson Pharmaceutical Research \& Development LLC.; Lumosity; Lundbeck; Merck \& Co., Inc.; Meso Scale Diagnostics, LLC.; NeuroRx Research; Neurotrack Technologies; Novartis Pharmaceuticals Corporation; Pfizer Inc.; Piramal Imaging; Servier; Takeda Pharmaceutical Company; and Transition Therapeutics. The Canadian Institutes of Health Research is providing funds to support ADNI clinical sites in Canada. Private sector contributions are facilitated by the Foundation for the National Institutes of Health (\url{http://fnih.org}). The grantee organization is the Northern California Institute for Research and Education, and the study is coordinated by the Alzheimer’s Therapeutic Research Institute at the University of Southern California. ADNI data are disseminated by the Laboratory for Neuro Imaging at the University of Southern California.}



\newpage
\appendix
\begin{center}
\textbf{\LARGE SUPPLEMENTARY MATERIAL}
\end{center}

\numberwithin{equation}{section}
\numberwithin{lem}{section}
\numberwithin{thm}{section}
\numberwithin{figure}{section}

\section{Proofs}

\wellpose*
\begin{proof}[Proof of Theorem \ref{thm:wellpose}]
\label{apx:wellpose}
We first prove both series \eqref{eq:jtkern} and \eqref{eq:likern} converge in $L^1(\mZ\times\mZ)$.
Note for \eqref{eq:jtkern} we have
\begin{align*}
&\sum_{\ell=1}^\infty \left| \int_\mZ \int_\mZ \lambda_\ell(t) \mC_t(t,t') \lambda_\ell(t') \phi_\ell(\bx) \phi_\ell(\bx') d\bz d\bz' \right| \\
&\leq \sum_{\ell=1}^\infty \left| \int_\mT \int_\mT \lambda_\ell(t) \mC_t(t,t') \lambda_\ell(t') dt dt' \right| \int_\mX \int_\mX |\phi_\ell(\bx) \phi_\ell(\bx')| d\bx d\bx' \\
&\lesssim \sum_{\ell=1}^\infty |\langle \lambda_\ell , \mC_t \lambda_\ell  \rangle| (\Vert\phi_\ell(\bx)\Vert_2^2+\Vert\phi_\ell(\bx')\Vert_2^2)/2 \\
&\leq \Vert \mC_t \Vert \sum_{\ell=1}^\infty \Vert \lambda_\ell \Vert_2^2 = \Vert \mC_t \Vert \Vert \lambda \Vert_{2,2}^2 < +\infty
\end{align*}
And for \eqref{eq:likern} we can bound
\begin{align*}
&\sum_{\ell=1}^\infty \left| \int_\mZ \int_\mZ \lambda_\ell^2(t) \delta_t(t') \phi_\ell(\bx) \phi_\ell(\bx') d\bz d\bz' \right| \\
&\leq \sum_{\ell=1}^\infty \int_\mT \lambda_\ell^2(t) dt \int_\mX \int_\mX |\phi_\ell(\bx) \phi_\ell(\bx')| d\bx d\bx'
\lesssim 
\Vert \lambda \Vert_{2,2}^2 < +\infty
\end{align*}
The convergence of series \eqref{eq:jtkern} and \eqref{eq:likern} follows by the dominated convergence theorem.

Now we prove the non-negativeness.
$\forall f(\bz)\in L^2(\mZ)$, denote $f_\ell(t):=\int_\mX f(\bz) \phi_\ell(\bx) d\bx$. Then we have
\begin{align*}
\langle f(\bz), \mC_m^\text{I} f(\bz') \rangle &= \langle f(\bz), \int_\mZ \sum_{\ell=1}^\infty \lambda_\ell(t) \mC_t(t,t') \lambda_\ell(t') \phi_\ell(\bx) \phi_\ell(\bx') f(\bz') d\bz' \rangle \\
&= \langle f(\bz), \sum_{\ell=1}^\infty \int_\mT \lambda_\ell(t) \mC_t(t,t') \lambda_\ell(t') \phi_\ell(\bx) f_\ell(t') dt' \rangle \\
&= \sum_{\ell=1}^\infty \int_\mT \int_\mT f_\ell(t)\lambda_\ell(t) \mC_t(t,t') \lambda_\ell(t')f_\ell(t') dt dt' \\
&= \sum_{\ell=1}^\infty \langle f_\ell \lambda_\ell, \mC_t \lambda_\ell f_\ell \rangle \geq 0
\end{align*}
where the convergence can be shown as above.
Similarly we have
\begin{align*}
\langle f(\bz), \mC_{y|m}^\text{II} f(\bz') \rangle &= \langle f(\bz), \int_\mZ \sum_{\ell=1}^\infty \lambda_\ell^2(t) \delta(t=t') \phi_\ell(\bx) \phi_\ell(\bx') f(\bz') d\bz' \rangle \\
&= \langle f(\bz), \sum_{\ell=1}^\infty \int_\mT \lambda_\ell^2(t) \delta(t=t') \phi_\ell(\bx) f_\ell(t') dt' \rangle
= \sum_{\ell=1}^\infty \int_\mT \lambda_\ell^2(t) f_\ell^2(t) dt \geq 0
\end{align*}
Therefore we complete the proof.
\end{proof}

\KLexpan*
\begin{proof}[Proof of Theorem \ref{thm:KLexpan}]
\label{apx:KLexpan}
Note that $\{\phi_\ell(\bx)\}_{\ell=1}\infty$ is an orthonormal basis for $L^2(\mX)$, therefore we have the series representation \eqref{eq:KLexpan} of $f(\cdot, t)$ for each $t\in \mT$.
Then we can calculate for $\mC_{\bx|t}^\half \mC_{\bx|t'}^\half \qprod \mC_t$ as in \eqref{eq:jtkern}
\begin{align*}
\bbE[f_\ell(t)] &= \bbE\left[\int_\mX f(\bx, t) \phi_\ell(\bx) d\bx\right] = \int_\mX \bbE[f(\bx, t)] \phi_\ell(\bx) d\bx = 0 \\
\bbE[f_\ell(t) f_{\ell'}(t')] &= \bbE\left[\int_\mX f(\bx, t) \phi_\ell(\bx) d\bx \int_\mX f(\bx', t') \phi_{\ell'}(\bx') d\bx' \right] \\
&= \int_\mX \int_\mX \bbE[f(\bx, t) f(\bx', t')]\phi_\ell(\bx) \phi_{\ell'}(\bx') d\bx d\bx' \\
&= \int_\mX \int_\mX \mC_{\bx|t}^\half \mC_{\bx|t'}^\half \otimes \mC_t(\bz, \bz') \phi_\ell(\bx) \phi_{\ell'}(\bx') d\bx d\bx' \\
&= \sum_{\tilde\ell=1}^\infty \lambda_{\tilde\ell}(t) \mC_t(t,t') \lambda_{\tilde\ell}(t') \int_\mX \int_\mX \phi_{\tilde\ell}(\bx) \phi_{\tilde\ell}(\bx') \phi_\ell(\bx) \phi_{\ell'}(\bx') d\bx d\bx'  \\
&= \lambda_\ell(t) \mC_t(t,t') \lambda_\ell(t') \delta_{\ell\ell'}
\end{align*}
Similarly we have $\bbE[f_\ell(t) f_{\ell'}(t')] = \lambda_\ell^2(t) \delta(t=t') \delta_{\ell\ell'}$ for $\mC_\bz=\mC_{\bx|t} \qprod \mI_t$ using the definition \eqref{eq:likern}.

Lastly, we prove the convergence of the infinite sum. Denote $F_L(\bx, t)= \sum_{\ell=1}^L f_\ell(t) \phi_\ell(\bx)$.
We have for $\mC_{\bx|t}^\half \mC_{\bx|t'}^\half \qprod \mC_t$
\begin{align*}
&\bbE[|f-F_L|^2] = \bbE[f^2] + \bbE[F_L^2] -2 \bbE[f F_L] \\
&= \mC_{\bx|t}^\half \mC_{\bx|t'}^\half \qprod \mC_t(\bz, \bz) + \bbE\left[\sum_{\ell=1}^L\sum_{\ell'=1}^L f_\ell(t) f_{\ell'}(t) \phi_\ell(\bx)\phi_{\ell'}(\bx)\right] -2 \bbE\left[f \sum_{\ell=1}^L f_\ell(t) \phi_\ell(\bx)\right]\\
&= \mC_{\bx|t}^\half \mC_{\bx|t'}^\half \qprod \mC_t(\bz, \bz) + \sum_{\ell=1}^L \lambda_\ell^2(t) \mC_t(t,t) \phi_\ell^2(\bx) - 2 \sum_{\ell=1}^L \int_\mX \bbE[f(\bx, t) f(\bx', t)] \phi_\ell(\bx') \phi_\ell(\bx) d\bx'\\
&= \mC_{\bx|t}^\half \mC_{\bx|t'}^\half \qprod \mC_t(\bz, \bz) + \sum_{\ell=1}^L \lambda_\ell^2(t) \mC_t(t,t) \phi_\ell^2(\bx) \\
&\phantom{=}- 2 \sum_{\ell=1}^L \sum_{\ell'=1}^L \lambda_{\ell'}^2(t) \mC_t(t,t) \int_\mX \phi_\ell(\bx') \phi_\ell(\bx) \phi_{\ell'}(\bx) \phi_{\ell'}(\bx') d\bx' \\
&= \mC_{\bx|t}^\half \mC_{\bx|t'}^\half \qprod \mC_t(\bz, \bz) - \sum_{\ell=1}^L \lambda_\ell^2(t) \mC_t(t,t) \phi_\ell^2(\bx) 
=\sum_{\ell=L+1}^\infty \lambda_\ell^2(t) \mC_t(t,t) \phi_\ell^2(\bx) \to 0, \; \textrm{as} \; L\to \infty
\end{align*}
The same argument (by replacing $\mC(t,t')$ with $\delta(t=t')$) yields the $L^2_\bbP$ convergence of the expansion \eqref{eq:KLexpan} for $\mC_\bz=\mC_{\bx|t} \qprod \mI_t$.
\end{proof}

\regularity*
\begin{proof}[Proof of Theorem \ref{thm:regularity}]
\label{apx:regularity}
We compute the expectation of the $(2,s,2)$-norm of $f$
\begin{align*}
\bbE[\Vert f\Vert_{2,s,2}^2] &= \sum_{\ell=1}^\infty \ell^{2s} \bbE [\Vert f_\ell\Vert_2^2] = \sum_{\ell=1}^\infty \ell^{2s} \int_\mT \bbE[f_\ell^2(t)] dt \\
&=\begin{cases}
\sum_{\ell=1}^\infty \ell^{2s} \int_\mT \mC_t(t,t) \lambda_\ell^2(t) dt \lesssim \sum_{\ell=1}^\infty \ell^{2s} \Vert \lambda_\ell\Vert_2^2, & if\; \mC_\bz=\mC_{\bx|t}^\half \mC_{\bx|t'}^\half \qprod \mC_t\\
\sum_{\ell=1}^\infty \ell^{2s} \Vert \lambda_\ell\Vert_2^2, &if\; \mC_\bz=\mC_{\bx|t} \qprod \mI_t
\end{cases} \\
&= \Vert \lambda\Vert_{2,s,2}^2 < +\infty
\end{align*}
where the equality on the second line follows from Theorem \ref{thm:KLexpan}.
This implies $f\in \ell^{2,s}(L^2(\mT))$ in probability.

Now we prove the H\"older continuity of the random function $f$ using Kolmogorov’s celebrated continuity test \citep[Theorem 3.42 of][]{Hairer2009} and \cite[Theorem 30 in section A.2.5 of][]{Dashti_2017}.
First, we consider the marginal function. By Jensen's inequality
\begin{align*}
\bbE[|f(\bx)-f(\bx')|^2] &\leq \int_\mT \bbE[|f(\bx, t)-f(\bx',t)|^2] dt = \sum_{\ell, \ell'} \int_\mT \bbE[f_\ell(t) f_{\ell'}(t)] dt \Delta\phi_\ell \Delta\phi_{\ell'} \\
&= \sum_{\ell=1}^\infty \Vert\lambda_\ell\Vert_2^2 |\phi_\ell(\bx)-\phi_{\ell}(\bx')|^2
\leq \sum_{\ell=1}^\infty \Vert\lambda_\ell\Vert_2^2 \min\{2\Vert \phi_\ell\Vert_\infty^2, \mathrm{Lip}(\phi_\ell)^2 |\bx-\bx'|^2 \} \\
&\leq 2 \sum_{\ell=1}^\infty \Vert\lambda_\ell\Vert_2^2 \Vert \phi_\ell\Vert_\infty^{2-\delta}  \mathrm{Lip}(\phi_\ell)^\delta |\bx-\bx'|^\delta
\lesssim \sum_{\ell=1}^\infty \ell^{\delta} \Vert\lambda_\ell\Vert_2^2 |\bx-\bx'|^\delta \\
&\leq \Vert\lambda\Vert_{2,s,2}^2 |\bx-\bx'|^\delta \; for \; \delta<2s
\end{align*}
where we used that $\min\{a, bx^2\}\leq a^{1-\frac{\delta}{2}} b^\frac{\delta}{2}|x|^\delta$ for $\delta\in[0,2]$.
Then by Kolmogorov’s continuity theorem there is a modification $\tilde f(\bx)$ of $f(\bx)$ in $C^{0,s'}(\mX)$ for $s'<\delta/2 <s$.

Lastly, we consider the full function $f(\bx,t)$.
\begin{align*}
&\bbE[|f(\bx,t)-f(\bx',t')|^2] = \bbE\left[\left|\sum_{\ell=1}^\infty f_\ell(t)\Delta\phi_\ell + \Delta f_\ell \phi_\ell(\bx')\right|^2\right] \\
&=\sum_{\ell, \ell'} \bbE[f_\ell(t) f_{\ell'}(t)] \Delta\phi_\ell \Delta\phi_{\ell'} + 2\sum_{\ell, \ell'} \bbE[f_\ell(t) \Delta f_{\ell'}] \Delta\phi_\ell \phi_{\ell'}(\bx') + \sum_{\ell, \ell'} \bbE[\Delta f_\ell \Delta f_{\ell'}] \phi_\ell(\bx') \phi_{\ell'}(\bx') \\
&\leq 2\left(\sum_{\ell=1}^\infty \lambda_\ell^2(t) |\Delta\phi_\ell|^2 + \sum_{\ell=1}^\infty \bbE[|\Delta f_\ell|^2] \phi_\ell^2(\bx') \right)
\leq 2\left(\sum_{\ell=1}^\infty \Vert\lambda_\ell\Vert_\infty^2 |\Delta\phi_\ell|^2 + \sum_{\ell=1}^\infty Q_{\lambda_\ell, \mC_t}(t,t') \Vert \phi_\ell\Vert_\infty^2 \right)\\
&\lesssim \sum_{\ell=1}^\infty \ell^{\delta} \Vert\lambda_\ell\Vert_\infty^2 |\bx-\bx'|^\delta + \sum_{\ell=1}^\infty \min\{\lambda_\ell^2(t)+\lambda_\ell^2(t'), C\ell^2 \Vert \lambda_\ell\Vert_\infty^2 |t-t'|^2 \} \\
&\lesssim \sum_{\ell=1}^\infty \ell^{\delta} \Vert\lambda_\ell\Vert_\infty^2 |\bx-\bx'|^\delta + \sum_{\ell=1}^\infty \ell^\delta \Vert\lambda_\ell\Vert_\infty^2 |t-t'|^\delta \\
&\lesssim \Vert\lambda\Vert_{2,s,\infty}^2 |\bz-\bz'|^\delta \; for \; \delta<2s
\end{align*}
where $\Delta f:=f(t)-f(t')$, $\Delta \phi:=\phi(\bx)-\phi(\bx')$, and the first inequality is due to Cauchy-Schwarz inequality and $2ab\leq a^2+b^2$.
The conclusion follows by Kolmogorov’s continuity theorem.
\end{proof}

\mgGP*
\begin{proof}[Proof of Corollary \ref{cor:mgGP}]
\label{apx:mgGP}
$f_\ell(t)$ can be viewed as infinite weighted sum of GP $f(\bx, t)$ thus becomes another GP.
This can be made rigorous by
approximating $\phi_\ell(\bx)$ with a sequence of simple functions $\phi_{n,\ell}=\sum_{i=1}^n a_i \bm{1}_{A_i}$ with disjoint $\{A_i\}$:
\begin{equation*}
f_\ell(t) =  \int_\mX f(\bx, t) \phi_\ell(\bx) d\bx = \lim_{n\to+\infty} \int_\mX f(\bx, t) \phi_{n,\ell}(\bx) d\bx = \sum_{i=1}^\infty a_i \int_{A_i} f(\bx, t) d\bx
\end{equation*}
Note for $\forall t\in\mT$, $f_{A_i}(t):=\int_{A_i} f(\bx, t) d\bx$ coincides with the Riemann integral.
Thus $\{f_{A_i}(t)\}$ are jointly normal as a limit of (Riemann) sum of (weighted) joint Gaussian random variables.
Therefore $f_\ell(t)$ is normal for any fixed $t\in\mT$. The same argument applies to $\bt=(t_1,\cdots, t_k)$ replacing $t$.
Thus it concludes the proof.
\end{proof}

For the dynamic spatial kernels $\mC_i = \sum_{\ell=1}^\infty \lambda_{i,\ell}^2(t)\phi_\ell \otimes \phi_\ell$,
we consider the Gaussian likelihood models $p_i\sim \mathcal N_n(\bdm_i(t), \bC_i(t))$, with $\bC_i= \sum_{\ell=1}^n \lambda_{i,\ell}^2(t)\phi_\ell(\bx) \otimes \phi_\ell(\bx')=\vect\Phi\vect\Lambda_i\tp{\vect\Phi}$, for $i=0,1$.
For $\lambda_i \in\ell^{1,s}(L^\infty(\mT))$ with some $s>0$,
we can bound the Hellinger distance $d_H$, Kullback-Leibler (K-L) divergence ($K(p_0,p_1):=\E_0 (\log (p_0/p_1))$) and K-L variation ($V(p_0,p_1):=\E_0 (\log (p_0/p_1))^2$) 
between two models with their difference in eigenvalues measured by $\Vert \cdot\Vert_{1,s,\infty}$ in the following lemma.
\begin{lem}
\label{lem:hKVbd}
Let $p_i\sim \mathcal N_n(0, \bC_i(t))$ be Gaussian models for $i=0,1$, with $\{\lambda_{i,\ell}^2(t)\}$ being the eigenvalues of $\bC_i=\vect\Phi\vect\Lambda_i\tp{\vect\Phi}$ satisfying Assumption \ref{asmp:eigbound}.
Then we have
\begin{itemize}
\item $d_H(p_0, p_1) \lesssim \Vert \lambda_0 - \lambda_1\Vert_{1,s,\infty}^\half$
\item $K(p_0, p_1) \lesssim \Vert \lambda_0 - \lambda_1\Vert_{1,s,\infty}$
\item $V(p_0, p_1) \lesssim \Vert \lambda_0 - \lambda_1\Vert_{1,s,\infty}^2$
\end{itemize}
\end{lem}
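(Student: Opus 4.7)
The plan is to reduce all three distances to sums over eigenvalues and then estimate each summand using Assumption \ref{asmp:eigbound}. Since the Hellinger distance, K-L divergence, and K-L variation are invariant under invertible affine transformations, the orthogonal change of variables $\tilde x = \tp{\vect\Phi}x$ simultaneously diagonalizes $\bC_0$ and $\bC_1$ into $\vect\Lambda_0$ and $\vect\Lambda_1$. It therefore suffices to work with a product of $n$ univariate centered Gaussians, so that each quantity factorizes over $\ell$ into an explicit function of the pair $(\alpha_\ell,\beta_\ell):=(\lambda_{0,\ell}(t),\lambda_{1,\ell}(t))$.

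Next I would write down the standard diagonal formulas: $d_H^2 = 1 - \prod_\ell\bigl(2\alpha_\ell\beta_\ell/(\alpha_\ell^2+\beta_\ell^2)\bigr)^{1/2}$; $K = \tfrac12\sum_\ell g(r_\ell)$ with $r_\ell:=\alpha_\ell^2/\beta_\ell^2$ and $g(r)=r-1-\log r$; and, from the standard formula $\V[\tp z Bz]=2\tr(B^2)$ applied to $B=\vect\Lambda_0^{1/2}\vect\Lambda_1^{-1}\vect\Lambda_0^{1/2}-I$, $V = K^2 + \tfrac12\sum_\ell(r_\ell-1)^2$. The key pointwise estimate is $|r_\ell-1|\le|\alpha_\ell-\beta_\ell|(\alpha_\ell+\beta_\ell)/\beta_\ell^2 \lesssim \ell^{s}|\alpha_\ell-\beta_\ell|$, which follows from $\beta_\ell\ge c_\ell\gtrsim\ell^{-s/2}$ and $\alpha_\ell,\beta_\ell\le C$. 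Combined with the elementary inequalities $1-e^{-x}\le x$ and $g(r)\le(r-1)^2/\min(r,1)$, everything reduces to estimates on sums of the form $\sum_\ell\ell^{s}(\alpha_\ell-\beta_\ell)^2$ and $\sum_\ell\ell^{2s}(\alpha_\ell-\beta_\ell)^2$.

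For Hellinger, taking logarithms and using $1-e^{-x}\le x$ gives $d_H^2 \lesssim \sum_\ell(\alpha_\ell-\beta_\ell)^2/(\alpha_\ell^2+\beta_\ell^2)\lesssim\sum_\ell\ell^{s}(\alpha_\ell-\beta_\ell)^2$, and the bound $(\alpha_\ell-\beta_\ell)^2\le 2C|\alpha_\ell-\beta_\ell|$ linearizes this to $\lesssim\Vert\lambda_0-\lambda_1\Vert_{1,s,\infty}$, yielding $d_H\lesssim\Vert\lambda_0-\lambda_1\Vert_{1,s,\infty}^{1/2}$. For $K$, the two key inequalities combine to $g(r_\ell)\lesssim\ell^{2s}(\alpha_\ell-\beta_\ell)^2$; since each term of the convergent series is dominated by its total, $\ell^{s}|\alpha_\ell-\beta_\ell|\le\Vert\lambda_0-\lambda_1\Vert_{1,s,\infty}$, and so $\sum_\ell\ell^{2s}(\alpha_\ell-\beta_\ell)^2\le\Vert\lambda_0-\lambda_1\Vert_{1,s,\infty}\sum_\ell\ell^{s}|\alpha_\ell-\beta_\ell|=\Vert\lambda_0-\lambda_1\Vert_{1,s,\infty}^2$, which in the contraction regime is dominated by $\Vert\lambda_0-\lambda_1\Vert_{1,s,\infty}$. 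For $V$, the identity $V=K^2+\tfrac12\sum(r_\ell-1)^2$ with the pointwise bound on $|r_\ell-1|^2$ and the elementary inequality $\sum b_\ell^2\le(\sum b_\ell)^2$ for $b_\ell\ge 0$ together yield $V\lesssim\Vert\lambda_0-\lambda_1\Vert_{1,s,\infty}^2$.

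The main obstacle is that the ratio $r_\ell$ does not sit in a compact subset of $(0,\infty)$ uniformly in $\ell$: the feasible range $[c_\ell^2/C^2,C^2/c_\ell^2]$ widens with $\ell$, so the naive convexity bound $g(r)\asymp(r-1)^2$ near $r=1$ cannot be invoked with an $\ell$-independent constant. The remedy is to carry the $\ell$-dependent factor $1/\min(r_\ell,1)\lesssim\ell^{s}$ through the estimates explicitly, absorbing it into the weight $\ell^{2s}$, after which Assumption \ref{asmp:eigbound} and summation-versus-maximum inequalities deliver the three scalings $\Vert\cdot\Vert^{1/2},\Vert\cdot\Vert,\Vert\cdot\Vert^2$ claimed in the lemma.
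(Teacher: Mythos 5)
Your overall strategy is the same as the paper's: exploit the shared eigenbasis $\vect\Phi$ to reduce all three quantities to sums over the eigenvalue pairs $(\lambda_{0,\ell}(t),\lambda_{1,\ell}(t))$, then use the two bounds of Assumption \ref{asmp:eigbound} ($\lambda_{i,\ell}\geq c_\ell\gtrsim \ell^{-s/2}$ below, $\leq C$ above) to control each summand. Your Hellinger bound is fine: $\log\frac{\alpha^2+\beta^2}{2\alpha\beta}\leq\frac{(\alpha-\beta)^2}{2\alpha\beta}\lesssim \ell^{s}(\alpha-\beta)^2\leq 2C\,\ell^{s}|\alpha-\beta|$ sums to $\Vert\lambda_0-\lambda_1\Vert_{1,s,\infty}$, matching the paper's route through $1-x\leq-\log x$ and the log-determinant inequality. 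The term $\tfrac12\sum_\ell(r_\ell-1)^2$ in $V$ is also handled correctly via $\sum_\ell b_\ell^2\leq(\sum_\ell b_\ell)^2$.

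The genuine gap is in the K--L bound, and it propagates to $V$ through $K^2$. Your chosen inequality $g(r)\leq (r-1)^2/\min(r,1)$ is \emph{quadratic} in the eigenvalue differences: it yields $K\lesssim\sum_\ell \ell^{2s}(\alpha_\ell-\beta_\ell)^2\leq\Vert\lambda_0-\lambda_1\Vert_{1,s,\infty}^2$, not the claimed first power. Your remedy --- that $\Vert\cdot\Vert^2$ is ``dominated by $\Vert\cdot\Vert$ in the contraction regime'' --- assumes $\Vert\lambda_0-\lambda_1\Vert_{1,s,\infty}\leq 1$, which is not a hypothesis of the lemma; without it the implied constant would depend on the pair $(\lambda_0,\lambda_1)$, and $K^2\lesssim\Vert\cdot\Vert^4$ then fails to give $V\lesssim\Vert\cdot\Vert^2$. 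The paper avoids this entirely by bounding $K$ with the \emph{symmetrized} divergence $\tfrac12\bigl[\tr(\bC_1^{-1}\bC_0-\bI)+\tr(\bC_0^{-1}\bC_1-\bI)\bigr]=\tfrac12\sum_\ell\bigl(r_\ell+r_\ell^{-1}-2\bigr)$ and then estimating each of $|r_\ell-1|$ and $|r_\ell^{-1}-1|$ \emph{linearly}, as $|\alpha_\ell^2-\beta_\ell^2|/\beta_\ell^2\leq 2C\,c_\ell^{-2}|\alpha_\ell-\beta_\ell|\lesssim \ell^{s}|\alpha_\ell-\beta_\ell|$, which carries only one factor of $c_\ell^{-2}$ and sums directly to $\Vert\lambda_0-\lambda_1\Vert_{1,s,\infty}$ with a universal constant. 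Your argument is repairable within your own framework --- e.g.\ use $g(r)\leq|r-1|+|\log r|$ together with $|\log r_\ell|=2|\log\alpha_\ell-\log\beta_\ell|\leq 2|\alpha_\ell-\beta_\ell|/c_\ell$ --- but as written it does not establish the stated $K$ and $V$ bounds.
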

\begin{proof}
First we calculate the Kullback-Leibler divergence
\begin{equation*}
K(p_0, p_1) = \half \left\{ \tr (\bC_1^{-1}\bC_0-\bI) + \tp{(\bdm_1-\bdm_0)} \bC_1^{-1} (\bdm_1-\bdm_0) + \log \frac{|\bC_1|}{|\bC_0|} \right\}
\end{equation*}
Consider $\vect\bdm_i\equiv 0$. By the non-negativity of K-L divergence we have for general $\bC_i>0$,
\begin{equation}\label{eq:logdetbd}
\log \frac{|\bC_0|}{|\bC_1|} \leq \tr (\bC_1^{-1}\bC_0 - \bI) 
\end{equation}
Therefore we can bound K-L divergence
\begin{equation*}
K(p_0, p_1) \leq \half \{ \tr (\bC_1^{-1}\bC_0 - \bI) + \tr (\bC_0^{-1}\bC_1 - \bI) \} \leq 2 C \Vert \lambda_0 - \lambda_1\Vert_{1,s,\infty}
\end{equation*}
where we use
\begin{equation}\label{eq:tracebd}
\tr (\bC_1^{-1}\bC_0 - \bI) = \sum_{\ell=1}^n \lambda_{1,\ell}^{-2}(t) (\lambda_{0,\ell}^2(t)-\lambda_{1,\ell}^2(t)) 
\leq 2C \sum_{\ell} c_\ell^{-2} \Vert \lambda_{0,\ell}-\lambda_{1,\ell}\Vert_\infty \leq 2C \Vert \lambda_0 - \lambda_1\Vert_{1,s,\infty}
\end{equation}

Now we calculate the following K-L variation
\begin{equation*}
V(p_0,p_1) = \half \tr ((\bC_1^{-1}\bC_0 - \bI)^2) + \tp{(\bdm_1-\bdm_0)} \bC_1^{-1} \bC_0 \bC_1^{-1} (\bdm_1-\bdm_0) + K^2(p_0, p_1)
\end{equation*}
Consider $\bdm_i\equiv 0$ and we can bound it by the similar argument as \eqref{eq:tracebd}
\begin{equation*}
V(p_0,p_1) \leq C^2 \Vert \lambda_0 - \lambda_1\Vert_{2,s,\infty}^2 + 4 C^2 \Vert \lambda_0 - \lambda_1\Vert_{1,s,\infty}^2 \lesssim \Vert \lambda_0 - \lambda_1\Vert_{1,s,\infty}^2
\end{equation*}
It is easy to see that the centered K-L variation $V_0(p_0,p_1)=\V_0 (\log (p_0/p_1))=\E_0 (\log (p_0/p_1)-K(p_0, p_1))^2$ can be bounded
\begin{equation*}
V_0(p_0,p_1) \leq C^2 \Vert \lambda_0 - \lambda_1\Vert_{2,s,\infty}^2
\end{equation*}

Lastly, the squared Hellinger distance for multivariate Gaussians can be calculated
\begin{equation*}
h^2(p_0, p_1) = 1 - \frac{|\bC_0 \bC_1|^{1/4}}{\left|\frac{\bC_0+\bC_1}{2}\right|^{1/2}} \exp\left\{-\frac18 \tp{(\bdm_0-\vect\bdm_1)}\left(\frac{\bC_0+\bC_1}{2}\right)^{-1}(\bdm_0-\bdm_1)\right\}
\end{equation*}
Consider $\bdm_i\equiv 0$. Notice that $1-x\leq -\log x$, and by \eqref{eq:logdetbd} we can bound the squared Hellinger distance using the similar argument in \eqref{eq:tracebd}
\begin{equation*}
\begin{aligned}
h^2(p_0, p_1) &\leq \log \frac{\left|\frac{\bC_0+\bC_1}{2}\right|^{1/2}}{|\bC_0 \bC_1|^{1/4}} 
\leq \half \tr (\bC_0^{-\half} \bC_1^{-\half} (\bC_0+\bC_1)/2 - \bI ) \\
& \leq \frac14 \{ \tr (\bC_1^{-\half} \bC_0^\half - {\bf I}) + \tr(\bC_0^{-\half} \bC_1^\half - {\bf I}) \}
\leq \half \Vert \lambda_0 - \lambda_1\Vert_{1,s,\infty}
\end{aligned}
\end{equation*}
\end{proof}

Following \cite{Ghosal_2017}, now we prove the following posterior contraction about $\mC_{\bx|t}$ in model II. 
For the convenience of discussion, we fix all hyper-parameters at their optimal values. One can refer to \cite{vandervaart09,vanderVaart11} for varying them to scale GP.
\postcontrCII*
\begin{proof}[Proof of Theorem \ref{thm:postcontrCII}]
\label{apx:postcontrCII}
We use Theorem 1 of \cite{ghosal2007} and it suffices to verify the following two conditions (the entropy condition (2.4), and the prior mass condition (2.5)) for some universal constants $\xi, K>0$ and sufficiently large $k\in\mathbb N$:
\begin{align}
\sup_{\eps>\eps_n} \log N(\xi\eps/2, \{\lambda\in \Theta_n: d_{n,H}(\lambda,\lambda_{n,0})<\eps\},d_{n,H}) &\leq n\eps_n^2 \label{eqa:entropy}\\
\frac{\Pi_n(\lambda\in\Theta_n: k\eps_n<d_{n,H}(\lambda,\lambda_{n,0})<2 k\eps_n)}{\Pi_n(\bar B_n(\lambda_{n,0}, \eps_n))} & \leq e^{Kn\eps_n^2 k^2/2} \label{eqa:priormass}
\end{align}
where the left side of \eqref{eqa:entropy} is called \emph{Le Cam dimension} \citep{LeCam_1973,LeCam_1975}, logarithm of the minimal number of $d_{n,H}$-balls of radius $\xi\eps/2$ needed to cover a ball of radius $\eps$ around the true  value $\lambda_{n,0}$;
$\bar B_n(\lambda_{n,0}, \eps):=\{\lambda\in\Theta: \frac1n\sum_{j=1}^n K_j(\lambda_{n,0},\lambda)\leq \eps^2, \frac1n\sum_{j=1}^n V_j(\lambda_{n,0},\lambda)\leq \eps^2\}$,
with $K_j(\lambda_{n,0},\lambda)=K(P_{\lambda_{n,0},j}, P_{\lambda,j})$ and $V_j(\lambda_{n,0},\lambda)=V(P_{\lambda_{n,0},j}, P_{\lambda,j})$.
For each $1\leq \ell \leq n$, define the coordinate rate function
\begin{equation}
\varphi_{\lambda_0,\ell}(\eps_{n,\ell}) = \inf_{h\in\mbH_\ell:\Vert h-\lambda_{0,\ell}\Vert_\infty\leq \eps_{n,\ell}} \half \Vert h\Vert_{\mbH_\ell}^2 - \log \Pi_\ell(\Vert \lambda_\ell\Vert_\infty <\eps_{n,\ell})
\end{equation}
For each Gaussian random element $\lambda_\ell \in \mbB_\ell=L^\infty(\mT)$, we have $\lambda_{0,\ell}\in\bar\mbH_\ell$
and the measurable set $B_{n,\ell}\subset \mbB_\ell$ \citep[c.f. Theorem 2.1 of][]{vanderVaart08} such that
\begin{align}
\log N(3\eps_{n,\ell}, B_{n,\ell}, \Vert \cdot \Vert_\infty) &\leq 6 Cn\eps_{n,\ell}^2 \label{eqa:gp_entropy}\\
\Pi_\ell(\lambda_\ell\notin B_{n,\ell}) &\leq e^{-Cn\eps_{n,\ell}^2} \label{eqa:gp_complement}\\
\Pi_\ell(\Vert \lambda_\ell-\lambda_{0,\ell}\Vert_\infty <2\eps_{n,\ell}) &\geq e^{-n\eps_{n,\ell}^2}  \label{eqa:gp_priormass}
\end{align}
Now let $\eps_{n,\ell}=2^{-\ell} \ell^{-s}\eps_n^2$ for $\ell=1,\cdots,n$. 
Set $\Theta_n=\{\lambda\in \Theta\cap\ell^{1,s}(L^\infty(\mT)): \lambda_\ell \in B_{n,\ell}\}\subset \Theta$, and $N(\eps_n, \Theta_n, d_{n,H}) = \max_{1\leq \ell \leq n} N(3\eps_{n,\ell}, B_{n,\ell}, \Vert \cdot \Vert_\infty)$.
By Lemma \ref{lem:hKVbd} and \eqref{eqa:gp_entropy}, we have the following global entropy bound because $d_{n,H}^2(\lambda,\lambda')\leq \Vert \lambda-\lambda'\Vert_{1,s,\infty}\leq \eps_n^2$ for $\forall \lambda, \lambda'\in\Theta_n$.
\begin{equation*}
\log N(\eps_n, \Theta_n, d_{n,H}) \leq 6Cn(2^{-\ell} \ell^{-s}\eps_n^2)^2 \leq Cn\eps_n^4 \leq n\eps_n^2
\end{equation*}
which is stronger than the local entropy condition \eqref{eqa:entropy}.
Now by Lemma \ref{lem:hKVbd} and \eqref{eqa:gp_priormass} we have
\begin{align*}
\Pi_n(\bar B_n(\lambda_{n,0}, \eps_n)) & \geq \Pi_n(\Vert \lambda_{n,0}-\lambda\Vert_{1,s,\infty}\leq \eps_n^2, \Vert \lambda_{n,0}-\lambda\Vert_{1,s,\infty}^2\leq \eps_n^2) \\
&= \Pi_n(\Vert \lambda_{n,0}-\lambda\Vert_{1,s,\infty}\leq \eps_n^2) 
\geq \exp\left\{\sum_{\ell=1}^n \log \Pi_\ell(\Vert \lambda_\ell-\lambda_{0,\ell}\Vert_\infty <2\eps_{n,\ell}) \right\} \\
&\geq e^{-n\sum_{\ell=1}^n\eps_{n,\ell}^2} = e^{-Knk^2\eps_n^4/2}, \quad with \; K=2, \; k^2=\sum_{\ell=1}^n 2^{-2\ell} \ell^{-2s}
\end{align*}
Then \eqref{eqa:priormass} is immediately satisfied because the numerator is bounded by 1.
Therefore the proof is completed.
\end{proof}
\begin{rk}
This theorem generalizes Theorem 2.2 of \cite{lan_2019} where the spatial domain has fixed size $D$. Therefore the Hellinger metric, KL divergence and variance are easier to bound (Lemma B.1).
Note we do not have the complementary assertion as in Lemma 1 of \cite{ghosal2007} thus the resulting contraction is only on $\Theta_n$, weaker than that in Theorem 2.2 of \cite{lan_2019}.
\end{rk}

\contrateCII*
\begin{proof}[Proof of Theorem \ref{thm:contrateCII}]
\label{apx:contrateCII}
First, we prove that the negative logarithm of small ball probability $\varphi_0(\eps)=-\log\Pi(\Vert \lambda\Vert_{2,2}<\eps)=\mathcal O(\eps^{-\frac{2}{\kappa-1}})$.
Apply Karhunen-Lo\'eve theorem to $u_\ell$ in model \eqref{eq:randcoeff} to get
$u_\ell(t) = \sum_{i=1}^\infty Z_{\ell,i} \xi_i \phi_i(t)$ with $Z_{\ell,i} \overset{iid}{\sim} \mathcal N(0,1)$ and $\{\xi_i^2, \phi_i\}$ being the eigen-pairs of $\mC_u$.
Note $\bbE[\Vert u_\ell\Vert_2^2] = \sum_{\i=1}^\infty\xi_i^2=\tr(\mC_u)=1$.
Because normal densities with standard deviations $\sigma\geq \tau$ satisfy $\phi_\sigma(z)/\phi_\tau(z)\geq \tau/\sigma$ for every $z\in \mathbb R$, we have
\begin{align*}
\mathrm P\left(\sum_{\ell\leq L}\gamma_\ell^2 \Vert u_\ell\Vert_2^2<\eps^2\right) &= \mathrm P\left(\sum_{\ell\leq L} \gamma_\ell^2 \sum_{i=1}^\infty Z_{\ell,i}^2 \xi_i^2  <\eps^2\right)
= \int_{\sum_{\ell\leq L} \sum_{i\in\mathbb N} z_{\ell,i}^2  <\eps^2} \prod_{\ell\leq L,i\in \mathbb N} \phi_{\gamma_\ell \xi_i}(z_{\ell,i}) dz_{\ell,i} \\
&\geq \prod_{\ell=1}^L \frac{\gamma_L}{\gamma_\ell} \mathrm P\left(\sum_{\ell=1}^L \sum_{i=1}^\infty \gamma_L^2 Z_{\ell,i}^2 \xi_i^2 <\eps^2\right)
\gtrsim \left(\frac{L!}{L^L}\right)^{\frac{\kappa}{2}} \mathrm P\left( \gamma_L^2 \sum_{\ell=1}^L \Vert u_\ell\Vert_2^2 <\eps^2\right) \\
&\geq e^{-L\kappa/2} \half
\end{align*}
for $L$ large enough such that $\gamma_L^{-2}L^{-1} \eps^2\geq 1$ by the central limit theorem.
This is satisfied when $L\gtrsim \eps^{-2/(\kappa-1)}$.
On the other hand, by Markov's inequality,
\begin{align*}
\mathrm P\left(\sum_{\ell> L}\gamma_\ell^2 \Vert u_\ell\Vert_2^2<\eps^2\right) &\geq 1-\frac{1}{\eps^2} \sum_{\ell> L} \bbE[\gamma_\ell^2 \Vert u_\ell\Vert_2^2]
\geq 1-\frac{1}{\eps^2} \int_L^\infty x^{-\kappa} dx = 1-\frac{1}{(\kappa-1)L^{\kappa-1}\eps^2} \geq \half
\end{align*}
for $L$ large enough such that $(\kappa-1)L^{\kappa-1}\eps^2\geq 2$,
i.e., $L\geq \eps^{-2/(\kappa-1)} \left(\frac{2}{\kappa-1}\right)^{\frac{1}{\kappa-1}}\geq \eps^{-2/(\kappa-1)} e^{-1/(2e)}$.
Therefore we have $\mathrm P\left(\sum_{\ell=1}^\infty\gamma_\ell^2 \Vert u_\ell\Vert_2^2<2\eps^2\right) \gtrsim e^{-L\kappa/2} \frac{1}{2^2}$.
Thus the best upper bound for $\varphi_0(\eps)\lesssim L\kappa/2 \lesssim \eps^{-2/(\kappa-1)}$.

Next, we show the de-centering function $\inf_{h\in\mbH:\Vert h-\lambda\Vert_{2,2}\leq \eps} \Vert h\Vert_\mbH^2 \leq \Vert\lambda\Vert_{2,s,2}^{\kappa/s} \eps^{-(\kappa-2s)/s}$ if $\lambda\in \ell^{2,s}(L^2(\mT))$ for $s<\kappa/2$.
For every $L\in\mathbb N$, $\lambda^L:=\{\lambda_\ell\}_{\ell=1}^L\in \mbH$. Its square $(2,2)$-distance to $\lambda$ and square RHKS-norm satisfy
\begin{align*}
\Vert \lambda^L - \lambda\Vert_{2,2}^2 &= \sum_{\ell>L} \Vert \lambda_\ell \Vert_2^2 \leq L^{-2s} \Vert \lambda\Vert_{2,s,2}^2 \\
\Vert \lambda^L \Vert_\mbH^2 &= \sum_{\ell=1}^L \gamma_\ell^{-2} \Vert \lambda_\ell \Vert_2^2 \leq \Vert \lambda\Vert_{2,s,2}^2 \max_{1\leq \ell \leq L} \gamma_\ell^{-2} \ell^{-2s} \lesssim \Vert \lambda\Vert_{2,s,2}^2 \max_{1\leq \ell \leq L} \ell^{\kappa-2s} 
\end{align*}
Choosing the minimal integer $L$ such that $L\geq \Vert \lambda\Vert_{2,s,2}^{1/s}\eps^{-1/s}$ yields the result.

Finally, when the true parameter $\lambda_0\in \ell^{2,s}(L^2(\mT))$, then we get the minimal solution to the rate equation $\varphi_{\lambda_0}(\eps_n)\leq n\eps_n^2$ by
setting both $\eps_n^{-2/(\kappa-1)}\lesssim n\eps_n^2$ and $\eps_n^{-(\kappa-2s)/s} \lesssim n\eps_n^2$, which gives the rate of posterior contraction $n^{-(\frac{\kappa-1}{2}\wedge s)/\kappa}$.
\end{proof}

\begin{rk}
Posterior contraction rate is the minimal solution to the rate equation $\varphi_{\lambda_0}(\eps_n)\leq n\eps_n^2$. Therefore any rate slower than the result given above is also `a' contraction rate. The minimax rate $n^{-s/(2s+1)}$ can be attained if and only if $(\kappa-1)/2=s$, when the prior regularity matches that of the truth. When this does not happen, we can only expect suboptimal rates.
\end{rk}

\predmean*
\begin{proof}[Proof of Proposition \ref{prop:predmean}]
\label{apx:predmean}
Compute using the following formula
\begin{equation*}
\begin{aligned}
p(m(\bz_*)|\mD) &= \int p(m(\bz_*), \bM |\mD) d \bM = \int p(m(\bz_*)| \bM) p(\bM | \mD) d \bM \\
&\propto \int p(m(\bz_*)| \bM) p(\bM)  p(\mD|\bM) d \bM
= \int p(\bM, m(\bz_*)) p(\mD|\bM) d \bM
\end{aligned}
\end{equation*}
Completing the square to integrate out $\bM$ and completing the square for $m(\bz_*)$ we have
\begin{align*}
m(\bz_*)|\mD &\sim \mN(m', C') \\ 
(C')^{-1} &= C_{m_*|\bM}^{-1} - C_{m_*}^{-1} \tp c_* C_{\bM|m_*}^{-1} \bC_\text{\tiny post} C_{\bM|m_*}^{-1} c_* C_{m_*}^{-1} , \;
m' = C' C_{m_*}^{-1} \tp c_* C_{\bM|m_*}^{-1} \bC_\text{\tiny post} \bC_{\bY|\bM}^{-1} K \bar{\bY} \\
C_{m_*|\bM} &:= C_{m_*} - \tp c_* \bC_\bM^{-1} c_*, \quad  C_{\bM|m_*} := \bC_\bM - c_* C_{m_*}^{-1} \tp c_* ,
\quad \bC_\text{\tiny post}^{-1} = C_{\bM|m_*}^{-1} + K \bC_{\bY|\bM}^{-1}
\end{align*}

By Sherman-Morrison-Woodbury formula, we further have
\begin{align*}
(C')^{-1} 
=& C_{m_*|\bM}^{-1} - C_{m_*}^{-1} \tp c_* C_{\bM|m_*}^{-1} (C_{\bM|m_*}^{-1} + K \bC_{\bY|\bM}^{-1})^{-1} C_{\bM|m_*}^{-1} c_* C_{m_*}^{-1} \\
=& C_{m_*}^{-1} + C_{m_*}^{-1} \tp c_* (\bC_\bM - c_* C_{m_*}^{-1} \tp c_*)^{-1} c_* mC_{m_*}^{-1} \\
& - C_{m_*}^{-1} \tp c_* [ C_{\bM|m_*}^{-1} - (C_{\bM|m_*} + K^{-1} \bC_{\bY|\bM})^{-1}] c_* C_{m_*}^{-1} \\
=& C_{m_*}^{-1} + C_{m_*}^{-1} \tp c_* (C_{\bM|m_*} + K^{-1} \bC_{\bY|\bM})^{-1} c_* C_{m_*}^{-1} \\
=& C_{m_*}^{-1} + C_{m_*}^{-1} \tp c_* [ (\bC_\bM + K^{-1} \bC_{\bY|\bM}) - c_* C_{m_*}^{-1} \tp c_* ]^{-1} c_* C_{m_*}^{-1} \\
=& [ C_{m_*} - \tp c_* (\bC_\bM + K^{-1} \bC_{\bY|\bM})^{-1} c_* ]^{-1}
\end{align*}
and
\begin{align*}
m' 
=& C' C_{m_*}^{-1} \tp c_* C_{\bM|m_*}^{-1} (C_{\bM|m_*}^{-1} + K \bC_{\bY|\bM}^{-1})^{-1} \bC_{\bY|\bM}^{-1} K \bar{\bY} \\
=& C' C_{m_*}^{-1} \tp c_* (C_{\bM|m_*} + K^{-1} \bC_{\bY|\bM})^{-1} \bar{\bY} \\
=& \tp c_* [ I_{m_*} - (\bC_\bM + K^{-1} \bC_{\bY|\bM})^{-1} c_* C_{m_*}^{-1} \tp c_* ] (C_{\bM|m_*} + K^{-1} \bC_{\bY|\bM})^{-1} \bar{\bY} \\
=& \tp c_* (\bC_\bM + K^{-1} \bC_{\bY|\bM})^{-1} \bar{\bY}
\end{align*}
\end{proof}


\section{Posterior Inference} \label{apx:postinf}
Discretize the spatial $\mX$ and time $\mT$ domains with $I$ and $J$ points respectively.
Denote the observations on the discrete domain as $I\times J$ matrices $\bY_k$ for $k=1,\cdots, K$ trials, and thus $\bY_{I\times J\times K}=\{{\bf Y}_1,\cdots, {\bf Y}_K\}$.
We summarize model I as follows
\begin{equation}\label{eq:model1}
\begin{aligned}
\bY_k|\bM, \sigma^2_\eps \sim \mMN(\bM, \sigma^2_\eps\bI_\bx, \bI_t), &\quad \bM_{I\times J} = m(\bX,\bt) \\
m(\bx, t) \sim \GP(0, \mC_{\bx|t} \qprod \mC_t), &\quad \mC_t(t, t') = \sigma^2_t \exp(-0.5\Vert t-t'\Vert^s/\rho_t^s) \\
 \mC_\bx(\bx, \bx') = \sigma^2_\bx \exp(-0.5\Vert \bx-\bx'\Vert^s/\rho_\bx^s), &\; \mC_{\bx|t}^\half \mC_{\bx|t'}^\half \qprod \mC_t (\bz,\bz') = \sum_{\ell=1}^\infty \lambda_\ell(t) \lambda_\ell(t') \phi_\ell(\bx) \phi_\ell(\bx')  \mC_t(t,t') \\
\lambda_\ell(t)=\gamma_\ell u_\ell(t), \; u_\ell(\cdot) \overset{iid}{\sim} \GP(0, \mC_u), &\quad \mC_u(t, t') = \sigma^2_u \exp(-0.5\Vert t-t'\Vert^s/\rho_u^s) \\
\sigma^2_* \sim \Gamma^{-1}(a_*,b_*), &\quad \log\rho_* \sim \mathcal N(m_*,V_*), \quad * = \eps, \bx, t, \,\textrm{or}\, u
\end{aligned}
\end{equation}
and model II in the following
\begin{equation}\label{eq:model2}
\begin{aligned}
\VEC(\bY_k)|\bM, \bC_{\bx|t} \sim \mN(\VEC(\bM), \bC_{\bx|t}), &\quad \bM_{I\times J} = m(\bX,\bt),\; \bC_{\bx|t} = \mC_{\bx|t}(\bX,\bX; \bt) \\
m(\bx, t) \sim \GP(0, \mI_\bx \otimes \mC_t), &\quad \mC_t(t, t') = \sigma^2_t \exp(-0.5\Vert t-t'\Vert^s/\rho_t^s) \\
\mC_\bx(\bx, \bx') = \sigma^2_\bx \exp(-0.5\Vert \bx-\bx'\Vert^s/\rho_\bx^s), &\quad \mC_{\bx|t}(\bx,\bx'; t) = \sum_{\ell=1}^\infty \lambda_\ell^2(t)\phi_\ell(\bx) \phi_\ell(\bx') \\
\lambda_\ell(t)=\gamma_\ell u_\ell(t), \; u_\ell(\cdot) \overset{iid}{\sim} \GP(0, \mC_u), &\quad \mC_u(t, t') = \sigma^2_u \exp(-0.5\Vert t-t'\Vert^s/\rho_u^s) \\
\sigma^2_* \sim \Gamma^{-1}(a_*,b_*), &\quad \log\rho_* \sim \mathcal N(m_*,V_*), \quad * = \bx, t, \,\textrm{or}\,  u
\end{aligned}
\end{equation}

Truncate the kernel expansion \eqref{eq:spatkern_t1} or \eqref{eq:jtkern} at some $L$ terms.
We now focus on obtaining the posterior probability of $\bM_{I\times J}, \vect\Lambda_{J\times L}$, $\vect\sigma^2:=(\sigma^2_\eps, \sigma^2_\bx, \sigma^2_t, \sigma^2_u)$ and $\vect\rho:=(\rho_\bx, \rho_t, \rho_u)$ in the models \eqref{eq:model1} \eqref{eq:model2}.
Transform the parameters $\vect\eta:=\log(\vect\rho)$ for the convenience of calculation.
Denote $\sigma^2_\bz=(\sigma^2_\bx,\sigma^2_t)$, and $\eta^2_\bz=(\eta^2_\bx,\eta^2_t)$.
Denote $\bC_t=\mC_t(\bt,\bt)$, and $\bC_u=\mC_u(\bt,\bt)$.
Let $\bC_\bx:=\mC_\bx(\bX,\bX)=\vect\Phi \vect\Lambda_0^2\tp{\vect\Phi}$ where $\vect\Lambda_0=\diag(\{\lambda^0_\ell\})$. 
Then $\bC_{\bx|t}^j:=\mC_{\bx|t_j}(\bX,\bX)=\vect\Phi \diag(\vect\Lambda^2_j) \tp{\vect\Phi}$ where $\vect\Lambda_j=\{\lambda_{j\ell}\}$ is the $j$-th row of $\vect\Lambda$.
Denote $\bC_*(\sigma^2_*,\eta_*) = \sigma^2_* \bC_{0*}(\eta_*)$ where $*=\bx, t, \bz, \,\textrm{or}\, u$.
Once the spatial eigen-basis $\vect\Phi$ has been calculated, it will be shared across all the following calculation.
Since only normalized eigen-basis $\vect\Phi(\eta_\bx)$ is used, we can set $\sigma^2_\bx\equiv 1$ and exclude it from posterior distributions.

Notice that $\bC_\bz = \mC_t \qprod \mC_{\bx|t}^\half \mC_{\bx|t'}^\half (\bZ,\bZ)$ for model I is a full $IJ\times IJ$ matrix;
while $\bC_{\bx|t} = \diag(\{\bC_{\bx|t}^j\}_{j=1}^J)$ for model II is a block diagonal matrix formed by $J$ blocks of $I\times I$ matrices.
Both $\bC_\bz(\vect\Lambda)$ and $\bC_{\bx|t}(\vect\Lambda)$ are defined through the Mercer's expansions with fixed eigen-basis $\vect\Phi$ and newly modeled eigenvalues $\vect\Lambda$.
We make some simplifications before proceeding the calculation of posteriors.
Due to the linear independence requirement for $\vect\Phi$, we have $L\leq I$. Therefore $\bC_\bz$ is in general degenerate, and so is $\bC_{\bx|t}^j$ if $L<I$.

\subsection{Model I}
First, for model I \eqref{eq:model1} we have
\begin{equation*}
\begin{aligned}
& \log p(\bM, \vect\Lambda, \vect\sigma^2, \vect\eta | \bY) \\
=& \log p(\bY| \bM, \sigma^2_\eps) +  \log p(\bM | \vect\Lambda, \sigma^2_t, \eta_\bz) + \log p(\vect\Lambda | \sigma^2_u, \eta_u) + \sum_{*=\eps, t, u} \log p(\sigma^2_*) + \sum_{*=\bx, t, u} \log p(\eta_*) \\
=& - \half IJK \log \sigma^2_\eps - \frac{\sigma^{-2}_\eps}{2} \sum_{k=1}^K  \tr( \tp{(\bY_k-\bM)} (\bY_k-\bM) ) \\
&- \half \log |\bC_\bz(\vect\Lambda,\sigma^2_t,\eta_\bz)| - \half \tp{\VEC(\bM)} \bC_\bz^{-1} \VEC(\bM) \\
& - J\tp{\bm{1}} \log |\vect\gamma(\eta_\bx)| -\frac{L}{2} \log |\bC_u(\sigma^2_u,\eta_u)| - \half \tr (\tp\bU \bC_u^{-1} \bU) \\
& - \sum_{*=\eps, t, u} (a_*+1) \log \sigma^2_* + b_* \sigma^{-2}_* - \sum_{*=\bx, t, u} \half (\eta_*-m_*)^2/V_* \\
\end{aligned}
\end{equation*}
where $\vect\gamma(\eta_\bx)$ may (chosen as eigenvalues of $\mC_\bx$) or may not (chosen as in \eqref{eq:randcoeff}) depend on $\eta_\bx$.

\noindent $(\vect\sigma^2)$. \quad
Note the prior for $\vect\sigma^2$ is conditionally conjugate. For $* = \eps, t, \,\textrm{or}\, u$,
\begin{equation*}
\begin{aligned}
\sigma^2_* | \cdot &\sim \Gamma^{-1}(a'_*,b'_*), \quad a'_* = a_* + \Delta^a_*, \quad b'_* = b_* + \Delta^b_* \\
\Delta^a_\eps &= \half IJK, \quad \Delta^a_t = \half IJ, \quad \Delta^a_u = \half JL\\
\Delta^b_\eps &= \half \sum_{k=1}^K \tr( \tp{(\bY_k-\bM)} (\bY_k-\bM) ),\quad \Delta^b_t = \half \tp{\VEC(\bM)} \bC_{0\bz}^{-1} \VEC(\bM), \;
\Delta^b_u = \half \tr (\tp{\bU} \bC_{0u}^{-1} \bU)
\end{aligned}
\end{equation*}

\noindent $(\vect\eta)$. \quad
Given $* = \bx, t, \,\textrm{or}\, u$, we could sample $\eta_*$ using the slice sampler \citep{neal03}, which only requires log-posterior density and works well for scalar parameters,
\begin{equation*}
\begin{aligned}
\log p(\eta_\bx| \cdot) &= - \half \log |\bC_{0\bz}(\vect\Lambda,\eta_\bz)| - \half \tp{\VEC(\bM)} \bC_{0\bz}^{-1} \VEC(\bM) \sigma^{-2}_t  - J\tp{\bm{1}} \log |\vect\gamma(\eta_\bx)| - \half (\eta_\bx-m_\bx)^2/V_\bx \\
\log p(\eta_t| \cdot) &= - \half \log |\bC_{0\bz}(\vect\Lambda,\eta_\bz)| - \half \tp{\VEC(\bM)} \bC_{0\bz}^{-1} \VEC(\bM) \sigma^{-2}_t - \half (\eta_t-m_t)^2/V_t \\
\log p(\eta_u| \cdot) &=  -\frac{L}{2} \log |\bC_{0u}(\eta_u)| - \half \tr (\tp{\vect\Lambda} \bC_{0u}^{-1} \vect\Lambda)  \sigma^{-2}_u - \half (\eta_u-m_u)^2/V_u
\end{aligned}
\end{equation*}

\noindent $(\bM)$. \quad
By the definition of STGP prior, we have $\VEC(\bM) | \vect\Lambda,\sigma^2_t,\eta_\bz \sim \mN_{IJ}(\bzero, \bC_\bz(\vect\Lambda,\sigma^2_t,\eta_\bz) )$.
On the other hand, one can write the log-likelihood function as
\begin{equation*}
\begin{aligned}
\log p(\bM ; \bY) &=
-\frac{\sigma^{-2}_\eps}{2} \sum_{k=1}^K \tr( \tp{(\bY_k-\bM)} (\bY_k-\bM) ) \\
&= -\half \sum_{k=1}^K \tp{(\VEC({\bf Y}_k)-\VEC(\bM))} (\sigma^2_\eps \bI)^{-1} (\VEC({\bf Y}_k)-\VEC(\bM))
\end{aligned}
\end{equation*}
Therefore we have the analytic posterior
\begin{equation*}
\begin{aligned}
\VEC(\bM) | \cdot &\sim \mathcal N_{ND}(\bM', \bC'), \quad \bM' = \bC' \sigma^{-2}_\eps \sum_{k=1}^K \VEC({\bf Y}_k), \\
\bC' &= \left( \bC_\bz^{-1} + K \sigma^{-2}_\eps\bI \right)^{-1} = \bC_\bz \left( \bC_\bz + K^{-1}\sigma^2_\eps\bI \right)^{-1} K^{-1}\sigma^2_\eps\bI
\end{aligned}
\end{equation*}

\noindent $(\vect\Lambda)$. \quad
Using a similar argument by 
matrix Normal prior for $\vect\Lambda$, we have $\vect\Lambda | \sigma^2_u,\eta_u \sim \mMN(\bzero, \bC_u(\sigma^2_u,\eta_u), \diag(\vect\gamma^2) )$.
Therefore, we could use the elliptic slice sampler \citep[ESS,][]{murray10}, which only requires the log-likelihood
\begin{equation*}
\log p(\vect\Lambda; \bM) = - \half \log |\bC_\bz(\vect\Lambda,\sigma^2_t,\eta_\bz)| - \half \tp{\VEC(\bM)} \bC_\bz^{-1} \VEC(\bM)
\end{equation*}

\subsection{Model II}
Now we consider model II \eqref{eq:model2}
\begin{equation*}
\begin{aligned}
& \log p(\bM, \vect\Lambda, \vect\sigma^2, \vect\eta | \bY) \\
=& \log p(\bY| \bM, \bC_{\bx|t}(\vect\Lambda,\eta_\bx) ) +  \log p(\bM | \sigma^2_t, \eta_t) + \log p(\vect\Lambda | \sigma^2_u, \eta_u) + \sum_{*= t, u} \log p(\sigma^2_*) + \sum_{*=\bx, t, u} \log p(\eta_*) \\
=& - \frac{K}{2} \log |\bC_{\bx|t}(\vect\Lambda,\eta_\bx)| - \half \sum_{k=1}^K \tp{\VEC(\bY_k-\bM)} \bC_{\bx|t}^{-1} \VEC(\bY_k-\bM) \\
&- \frac{I}{2} \log |\bC_t(\sigma^2_t,\eta_t)| - \half \tr ( \bC_t^{-1} \tp{\bM} \bM) \\
& - J\tp{\bm{1}} \log |\vect\gamma(\eta_\bx)| -\frac{L}{2} \log |\bC_u(\sigma^2_u,\eta_u)| - \half \tr (\tp\bU \bC_u^{-1} \bU) \\
& - \sum_{*=t, u} (a_*+1) \log \sigma^2_* + b_* \sigma^{-2}_* - \sum_{*=\bx, t, u} \half (\eta_*-m_*)^2/V_* \\
\end{aligned}
\end{equation*}

\noindent $(\vect\sigma^2)$. \quad
Note the prior for $\vect\sigma^2$ is conditionally conjugate. For $* = t \,\textrm{or}\, u$,
\begin{equation*}
\begin{aligned}
\sigma^2_* | \cdot &\sim \Gamma^{-1}(a'_*,b'_*), \quad a'_* = a_* + \Delta^a_*, \quad b'_* = b_* + \Delta^b_* \\
\Delta^a_t &= \half IJ, \quad \Delta^a_u = \half JL, \quad
\Delta^b_t = \half \tr ( \bC_{0t}^{-1} \tp{\bM} \bM), \quad \Delta^b_u = \half \tr (\tp{\bU} \bC_{0u}^{-1} \bU)
\end{aligned}
\end{equation*}

\noindent $(\vect\eta)$. \quad
Given $* = \bx, t, \,\textrm{or}\, u$, we could sample $\eta_*$ using the slice sampler \citep{neal03}, which only requires log-posterior density and works well for scalar parameters,
\begin{equation*}
\begin{aligned}
\log p(\eta_\bx| \cdot) =& - \frac{K}{2} \log |\bC_{\bx|t}(\vect\Lambda,\eta_\bx)| - \half \sum_{k=1}^K \tp{\VEC(\bY_k-\bM)} \bC_{\bx|t}^{-1} \VEC(\bY_k-\bM) \\
& - J\tp{\bm{1}} \log |\vect\gamma(\eta_\bx)| - \half (\eta_\bx-m_\bx)^2/V_\bx \\
\log p(\eta_t| \cdot) =& - \frac{I}{2} \log |\bC_{0t}(\eta_t)| - \half \tr ( \bC_{0t}^{-1} \tp{\bM} \bM) \sigma^{-2}_t - \half (\eta_t-m_t)^2/V_t \\
\log p(\eta_u| \cdot) =&  -\frac{L}{2} \log |\bC_{0u}(\eta_u)| - \half \tr (\tp{\vect\Lambda} \bC_{0u}^{-1} \vect\Lambda)  \sigma^{-2}_u - \half (\eta_u-m_u)^2/V_u
\end{aligned}
\end{equation*}

\noindent $(\bM)$. \quad
By the definition of STGP prior, we have $\VEC(\bM) | \sigma^2_t,\eta_t \sim \mN_{IJ}(\bzero, \bC_t(\sigma^2_t,\eta_t) \otimes \bI_\bx )$.
On the other hand, one can write the log-likelihood function as
\begin{equation*}
\begin{aligned}
\log p(\bM ; \bY) &= - \half \sum_{k=1}^K \tp{\VEC(\bY_k-\bM)} \bC_{\bx|t}^{-1} \VEC(\bY_k-\bM) \\
&= -\half \sum_{k=1}^K \tp{(\VEC(\bM)-\VEC({\bf Y}_k))} \bC_{\bx|t}^{-1} (\VEC(\bM)-\VEC({\bf Y}_k))
\end{aligned}
\end{equation*}
Therefore we have the analytic posterior
\begin{equation*}
\begin{aligned}
\VEC(\bM) | \cdot &\sim \mathcal N_{ND}(\bM', \bC'), 
\quad \bM' = \bC' \bC_{\bx|t}^{-1} \sum_{k=1}^K \VEC({\bf Y}_k), \\
 \bC' &= \left( \bC_t^{-1}\otimes \bI_\bx + K \bC_{\bx|t}^{-1} \right)^{-1} =(\bC_t\otimes \bI_\bx) \left( \bC_t\otimes \bI_\bx + K^{-1} \bC_{\bx|t} \right)^{-1} K^{-1} \bC_{\bx|t}
\end{aligned}
\end{equation*}

\noindent $(\vect\Lambda)$. \quad
Using a similar argument by 
matrix Normal prior for $\vect\Lambda$, we have $\vect\Lambda | \sigma^2_u,\eta_u \sim \mMN(\bzero, \bC_u(\sigma^2_u,\eta_u), \diag(\vect\gamma^2) )$.
Therefore, we could use the elliptic slice sampler \citep[ESS,][]{murray10}, which only requires the log-likelihood
\begin{equation*}
\log p(\vect\Lambda; \bY) = - \frac{K}{2} \log |\bC_{\bx|t}(\vect\Lambda,\eta_\bx)| - \half \sum_{k=1}^K \tp{\VEC(\bY_k-\bM)} \bC_{\bx|t}^{-1} \VEC(\bY_k-\bM)
\end{equation*}

\subsection{Computational Advantage of Model II}\label{apx:compadv}

The most intensive computation as above involves the inverse and determinant of the posterior covariance kernel $\bC^*$ for two models:
\begin{equation*}
\bC^*_\text{I}:=\bC_\bz(\vect\Lambda,\sigma^2_t,\eta_\bz) + K^{-1} \sigma^2_\eps\bI,\qquad
\bC^*_\text{II}:=\bC_t(\sigma^2_t,\eta_t)\otimes \bI_\bx + K^{-1} \bC_{\bx|t}(\vect\Lambda,\eta_\bx)
\end{equation*}
Their structure dictates different amount of computation required. Actually, we can show that the kernel of model II, $\bC^*_\text{II}$, has computational advantage over that for model I.

Note, according to the definition of the dynamic spatial kernel \eqref{eq:spatkern_t1}, we can rewrite
\begin{equation*}
\bC_{\bx|t} = \diag(\{\bC_{\bx|t}^j\}_{j=1}^J) = \diag( \{\vect\Phi \diag(\vect\Lambda^2_j) \tp{\vect\Phi}\} )
= (\bI_t\otimes \vect\Phi) \diag(\tp\VEC (\vect\Lambda^2)) (\bI_t\otimes \tp{\vect\Phi})
\end{equation*}
where $\tp\VEC(\cdot)$ is row-wise vectorization.
Then by the Sherman-Morrison-Woodbury formula we have
\begin{align*}
(\bC^*_\text{II})^{-1} &= (\bC_t^{-1}\otimes \bI_\bx) - (\bC_t^{-1}\otimes \bI_\bx) (\bI_t\otimes \vect\Phi) \cdot \\ 
& \phantom{= \quad}[ K\diag(\tp\VEC (\vect\Lambda^{-2})) + (\bI_t\otimes \tp{\vect\Phi}) (\bC_t^{-1}\otimes \bI_\bx) (\bI_t\otimes \vect\Phi) ]^{-1} 
 (\bI_t\otimes \tp{\vect\Phi}) (\bC_t^{-1}\otimes \bI_\bx)\\
&= (\bC_t^{-1}\otimes \bI_\bx) - (\bC_t^{-1}\otimes \vect\Phi)  [ K\diag(\tp\VEC (\vect\Lambda^{-2})) + (\bC_t^{-1}\otimes \bI_L) ]^{-1} 
(\bC_t^{-1}\otimes \tp{\vect\Phi})\\
&= \bC_t^{-1}\otimes (\bI_\bx- \vect\Phi \tp{\vect\Phi}) + (\bI_t\otimes \vect\Phi) [ K^{-1}\diag(\tp\VEC (\vect\Lambda^2)) + (\bC_t \otimes \bI_L) ]^{-1} 
(\bI_t\otimes \tp{\vect\Phi})
\end{align*}
Similarly we have
\begin{align*}
 \bC' &= \bC_t\otimes (\bI_\bx- \vect\Phi \tp{\vect\Phi}) + (\bI_t\otimes \vect\Phi) [ K\diag(\tp\VEC (\vect\Lambda^{-2})) + (\bC_t^{-1} \otimes \bI_L) ]^{-1} 
(\bI_t\otimes \tp{\vect\Phi}) \\
(\bC')^\half &= \bC_t^\half\otimes (\bI_\bx- \vect\Phi \tp{\vect\Phi}) + (\bI_t\otimes \vect\Phi) [ K\diag(\tp\VEC (\vect\Lambda^{-2})) + (\bC_t^{-1} \otimes \bI_L) ]^{-\half} (\bI_t\otimes \tp{\vect\Phi})
\end{align*}
where we use the following calculation that is numerically more stable
\begin{align*}
&[ K\diag(\tp\VEC (\vect\Lambda^{-2})) + (\bC_t^{-1} \otimes \bI_L) ]^{-1} 
=  K^{-1}\diag(\tp\VEC (\vect\Lambda^2) \\ &- K^{-1}\diag(\tp\VEC (\vect\Lambda^2) [K^{-1}\diag(\tp\VEC (\vect\Lambda^2)) + (\bC_t \otimes \bI_L) ]^{-1} K^{-1}\diag(\tp\VEC (\vect\Lambda^2)
\end{align*}
Based on the matrix determinant lemma we can calculate
\begin{align*}
\det(\bC^*_\text{II}) &= \det [ K\diag(\tp\VEC (\vect\Lambda^{-2})) + (\bI_t\otimes \tp{\vect\Phi}) (\bC_t^{-1}\otimes \bI_\bx) (\bI_t\otimes \vect\Phi) ]\\
&\phantom{=\;} \det (K^{-1}\diag(\tp\VEC (\vect\Lambda^2))) \det (\bC_t\otimes \bI_\bx) \\
&= \det [ \diag(\tp\VEC (\vect\Lambda^{-2})) + K^{-1}(\bC_t^{-1}\otimes \bI_L) ] \prod_{j,\ell} \lambda_{j\ell}^2 \det (\bC_t)^I \\
&= \det [ \bC_t\otimes \bI_L + K^{-1} \diag(\tp\VEC (\vect\Lambda^{2})) ] \det (\bC_t)^{I-L}
\end{align*}

However in model I, we note that $\bC_\bz=\bC_{\bx|t}^\half \bC_{\bx|t'}^\half \odot (\bC_t\otimes \bm{1}_{I\times I})$, where $\odot$ is element-wise multiplication, and $\bm{1}_{I\times I}$ is an $I\times I$ matrix with all elements $1$.
According to \eqref{eq:jtkern}, we have
\begin{equation*}
\bC_{\bx|t}^\half \bC_{\bx|t'}^\half = [\vect\Phi \diag(\vect\Lambda_j) \diag(\vect\Lambda_{j'}) \tp{\vect\Phi}] = (\bI_t\otimes \vect\Phi) \VEC \{\diag(\vect\Lambda_j)\}\tp\VEC \{\diag(\vect\Lambda_{j'}) \} (\bI_t\otimes \tp{\vect\Phi})
\end{equation*}
where $\VEC\{\cdot\}$ and $\tp\VEC\{\cdot\}$ are column/row wise vectorization of block matrices.
Applying the above equation to the inverse or determinant of $\bC^*_\text{I}$ does not simplify computation in general.


\subsection{Spatial Kernel Based On Graph Laplacian}\label{apx:graphLap}
\emph{Graph Laplacian}, also known as discrete Laplace operator, is a matrix representation of a graph.
It is a popular tool for image processing, clustering and semi-supervised/unsupervised learning on graphs \citep{chung1997,smola2003}.
For a weighted graph $G=(Z,W)$ with $Z$ being the vertices $\{x_i\}_{i=1}^n$ of the graph and $W$ being the edge weight matrix,
the graph Laplacian $\mathrm{L}$ is defined as follows
\begin{equation}\label{eq:gLap}
\mathrm{L}=D-W, \quad W=[w_{ij}], \; w_{ij}=\eta_\eps(|x_i-x_j|), \quad D=\diag\{d_{ii}\}, \; d_{ii} = \sum_{x_j\sim x_i} w_{ij}
\end{equation}
where $\eta_\eps$ is some distance function, e.g. Euclidean distance, $D$ is called degree matrix, and $x_i\sim x_j$ means two vertices $x_i, x_j$ connected with an edge.
When $w_{ij}\equiv 1$, $W$ is also called adjacency matrix, denoted as $A$.
If we assume $x_j\in \Omega$ are sampled i.i.d from a probability measure $\mu$ supported on the graph domain $\Omega$ with smooth Lebesgue density $\rho$ bounded above and below by positive constants, then $\mathrm{L}$ can be viewed as an approximation of the Laplace operator $\mL$ in the following PDE:
\begin{equation*}
\mL u = -\frac{1}{\rho} \nabla \cdot (\rho^2\nabla u),\quad x\in \Omega, \qquad \frac{\pa u}{\pa n} = 0, \quad  x\in \pa \Omega.
\end{equation*}

Based on the graph Laplacian, we can define the following discrete spatial kernel for the brain images \citep{DUNLOP2020}
\begin{equation}\label{eq:gLspat}
C_\bx = (s_nL + \tau^2 I)^{-s}, \quad s_n=o\left(\frac{1}{n^{1-2/d}\log^{\delta(d=2)/2+2/d}n}\right)
\end{equation}
where $d$ is the spatial dimension, i.e. $d=2$ for the chosen slice of brain images.
We choose $s_n=\frac{1}{n^{1-2/d}\log^{1+2/d}n}$ in this experiment.
Further assuming conditions (open, connected, and with smooth boundary) on the graph domain $\Omega$, \cite{DUNLOP2020} prove that
for $s>d/2$ and $\tau\geq0$, Gaussian measure $\mN(0,\mC)$ with $\mC=(\mL+\tau^2\mI)^{-s}$ is well-defined on the weighted Hilbert space $L^2_\mu$.
In another word, the graph-Laplacian based spatial kernel \eqref{eq:gLspat} is well-behaved for large graphs including the brain images we investigate with $n=25600$ nodes.

To obtain the spatial kernel \eqref{eq:gLspat} for the analysis of PET scans, we first construct the graph Laplacian.
On the $160\times 160$ mesh grid, each node is connected to its $(2w+1)^2-1$ neighbors, where we can choose $w=1$ for example.
Depending on the location, some nodes may have $2w(w+1)+w$ neighbors (on the edge) or $w(w+1)+w$ neighbors (at the corner).
The resulting graph Laplacian matrix $\mathrm{L}$ has the size $25600\times 25600$ but is highly sparse (with the density of non-zero entries $3.4864\times 10^{-4}$).
We also assume a hyper-prior for $\tau^2\sim \log-\mN(m_\bx,V_\bx)$ and fix $s=2$ in this experiment.
Then for given $\tau^2$, we calculate the precision matrix $C_\bx^{-1}$ based on \eqref{eq:gLspat}. Hence the dense covariance matrix $C_\bx$ of size $25600\times 25600$ is not directly calculated in the inference procedure.

\begin{figure}[t] 
   \begin{subfigure}[b]{.495\textwidth}
    \includegraphics[width=1\textwidth,height=.6\textwidth]{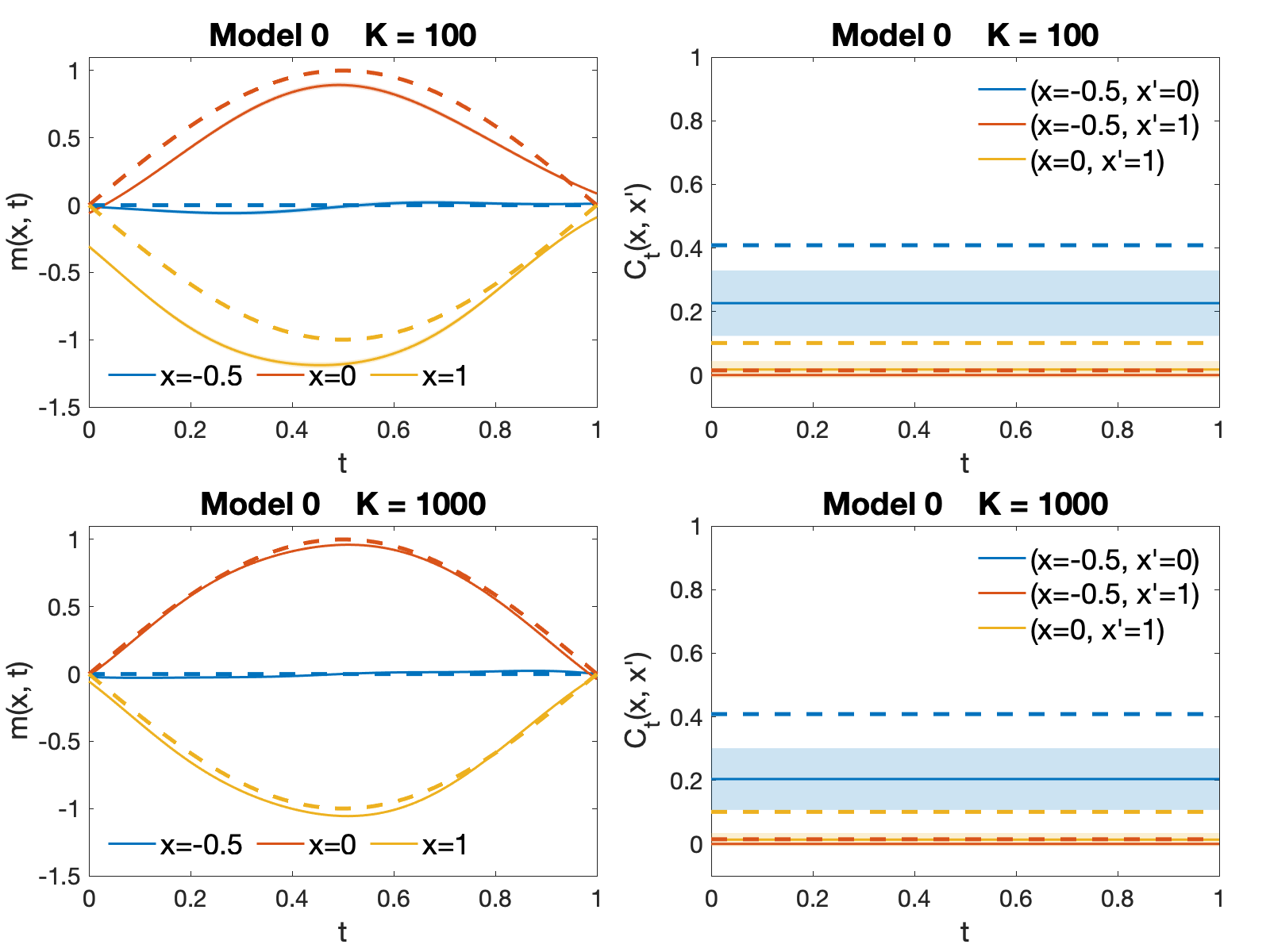}
    \caption{Model 0 (stat-sep)}
    \end{subfigure}
    \begin{subfigure}[b]{.495\textwidth}
   \includegraphics[width=1\textwidth,height=.6\textwidth]{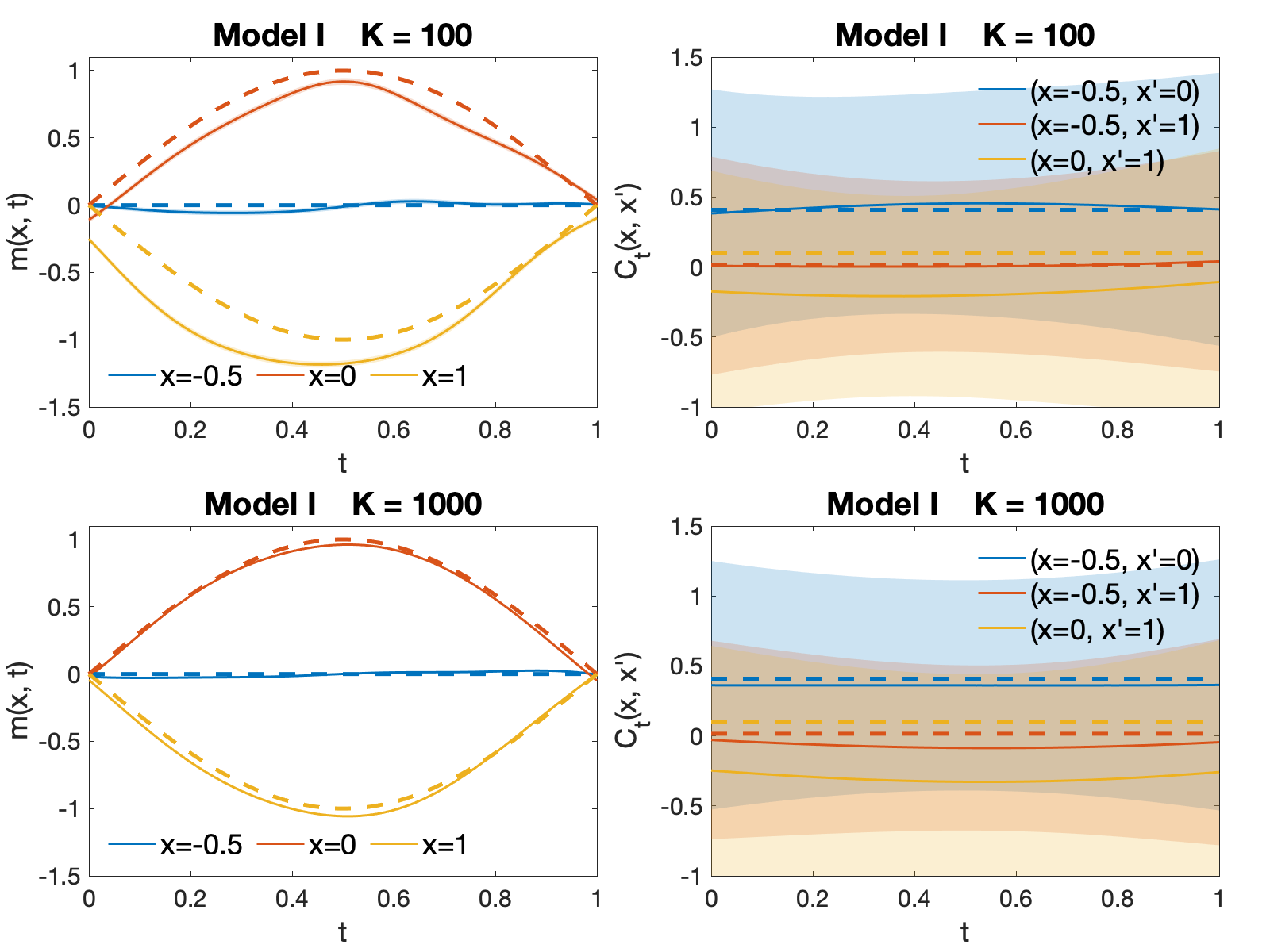}
   \caption{Model I (qKron-prod)}
   \end{subfigure}
   \begin{subfigure}[b]{.495\textwidth}
   \includegraphics[width=1\textwidth,height=.6\textwidth]{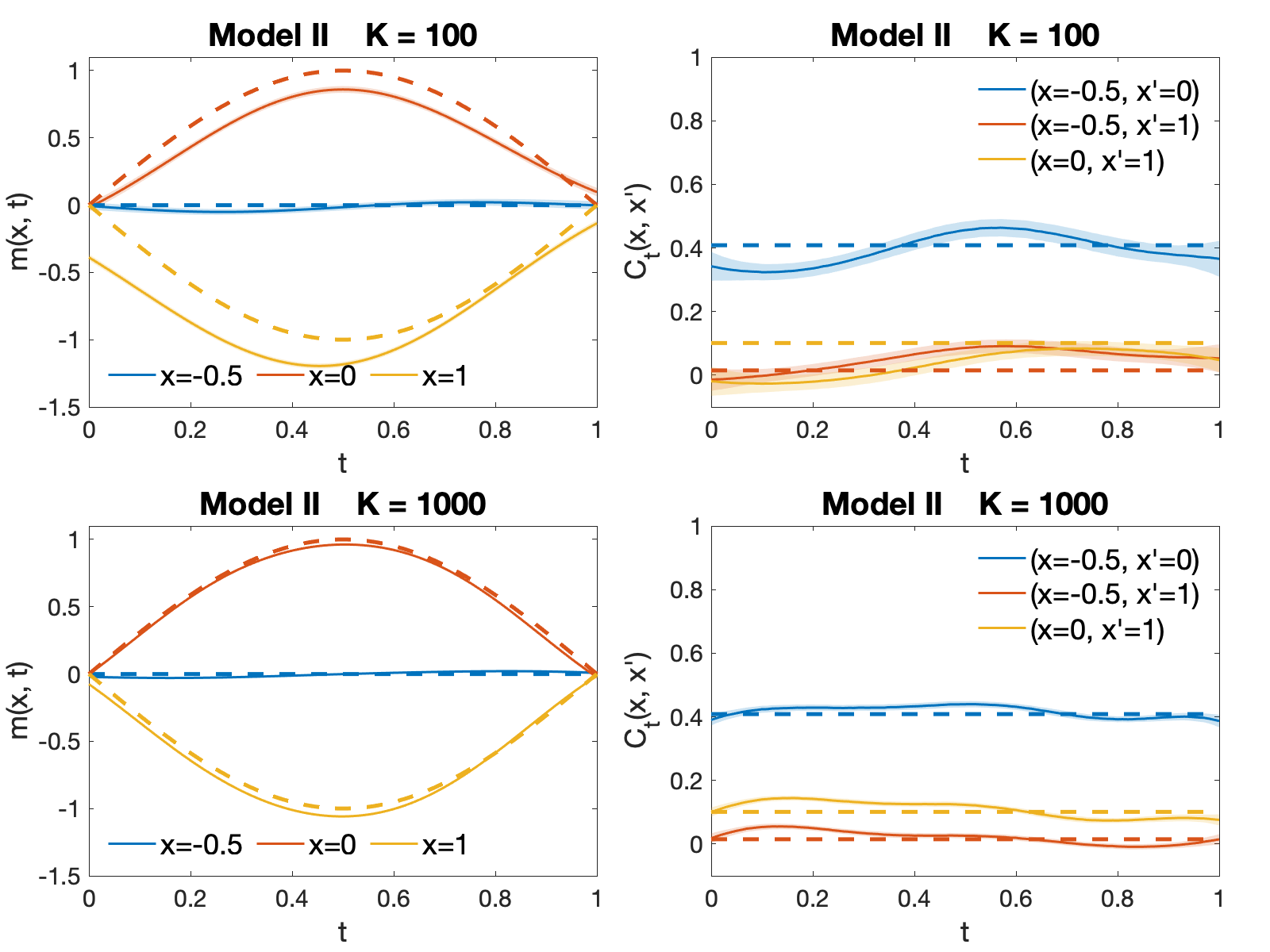}
   \caption{Model II (qKron-sum)}
   \end{subfigure}
   \begin{subfigure}[b]{.495\textwidth}
   \includegraphics[width=1\textwidth,height=.6\textwidth]{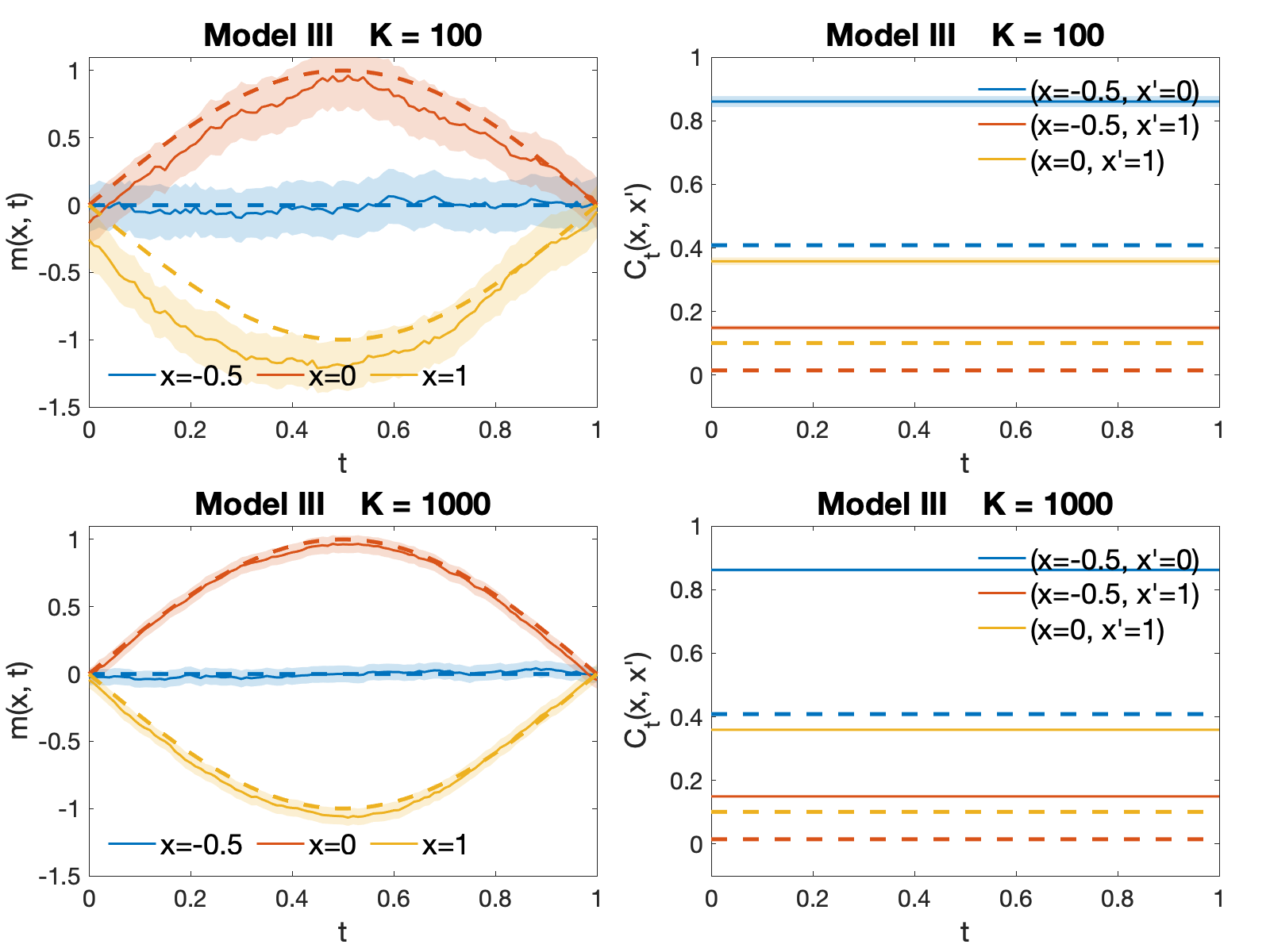}
   \caption{Model III (stat-nonsep)}
   \end{subfigure}
   \caption{Stationary Process: selective mean functions $m(x,t)$ (left column) and covariance functions $C_{y|t}^\text{s}(x,x')$ (right column) fitted by (a) model 0, (b) model I, (c) model II and (d) model III with $K=100$ trials of data (upper row) and $K=1000$ trials of data (lower row) on each panel. Dashed lines are true values, solid curves are estimates with shaded credible regions indicating their uncertainty.}
   \label{fig:fitted_STproc_stat}
\end{figure}
\section{More Numerical Results} \label{apx:more_numerics}

\subsection{Simulated Stationary Non-separable Process}\label{apx:stat-nonsep}
For the covariance $\mC_y$ in the model \eqref{eq:simSTP}, we specify 
the following stationary and non-separable kernel:
\begin{align}
\mC_y^\text{s}(\bz,\bz') &= \exp\left( -\frac{|x-x'|^2}{2\ell_x} - \frac{|t-t'|^2}{2\ell_t} - \frac{|x-x'|}{2\ell_{xt}(|t-t'|+1)}\right) \frac{1}{|t-t'|+1}+ \sigma^2_\eps\delta(\bz=\bz') \label{eq:stat_cov} \\
\end{align}
where the stationary non-separable covariance is modified from \cite{Gneiting_2002}.
We have the following true TESD constant over time:
\begin{equation}
C_{y|t}^\text{s}(x,x'):=\Cov[y(x,t), y(x',t)] = \exp\left( -\frac{|x-x'|^2}{2\ell_x} - \frac{|x-x'|}{2\ell_{xt}} \right) + \sigma^2_\eps\delta(x=x')
\end{equation}


Estimates generated by MCMC samples are plotted at selective locations in Figure \ref{fig:fitted_STproc_stat}.
All models produce estimates mean functions contracting to the truth.
However, only model II gives faithful estimates of covariance functions (TESD). 
Note, despite of the first two separable terms in \eqref{eq:stat_cov}, the stationary data are generated mostly according to stat-nonsep model \eqref{eq:stat-nonsep} with $\sigma^2=1$ and $c=\frac{1}{2\ell_{xt}}$, yet stat-nonsep model is still not flexible enough to correctly estimate constant TESD.

\subsection{Longitudinal Analysis of Brain Images} \label{apx:moreADres}

\begin{figure}[t] 
   \centering
   \includegraphics[width=1\textwidth,height=.15\textwidth]{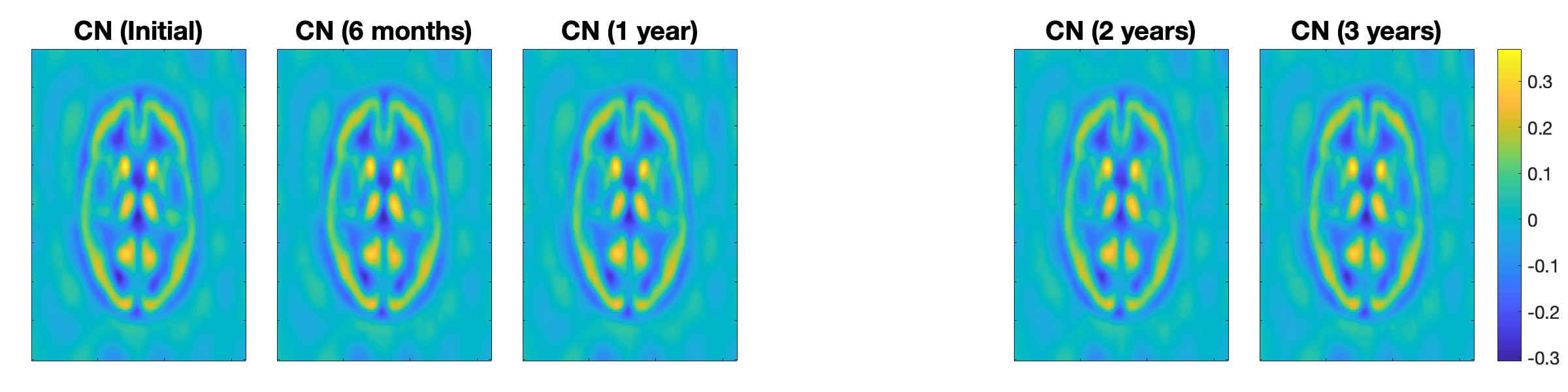} 
   \includegraphics[width=1\textwidth,height=.15\textwidth]{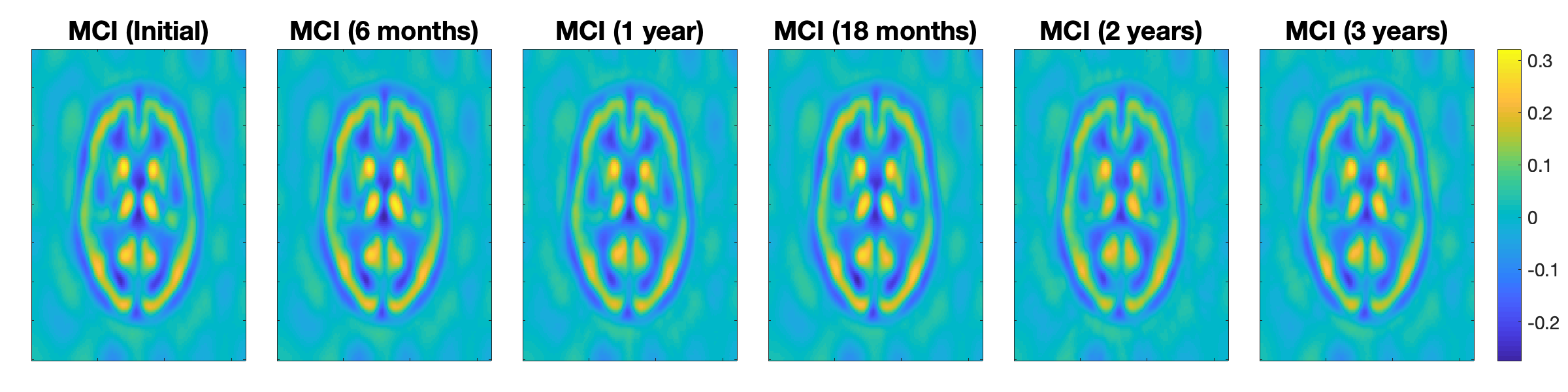} 
   \includegraphics[width=1\textwidth,height=.15\textwidth]{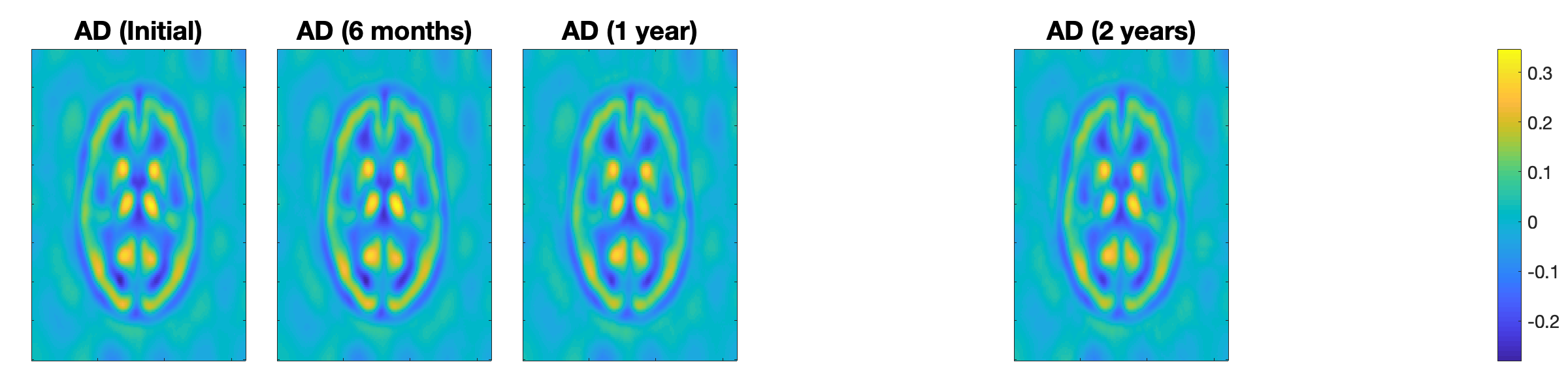} 
   \caption{Estimated brain images of ADNI patients as a function of time for CN (top row), MCI (middle row) and AD (bottom row) respectively by the proposed model II (qKron-sum).}
   \label{fig:estm_PET}
\end{figure}

\begin{figure}[t] 
   \centering
   \includegraphics[width=1\textwidth,height=.3\textwidth]{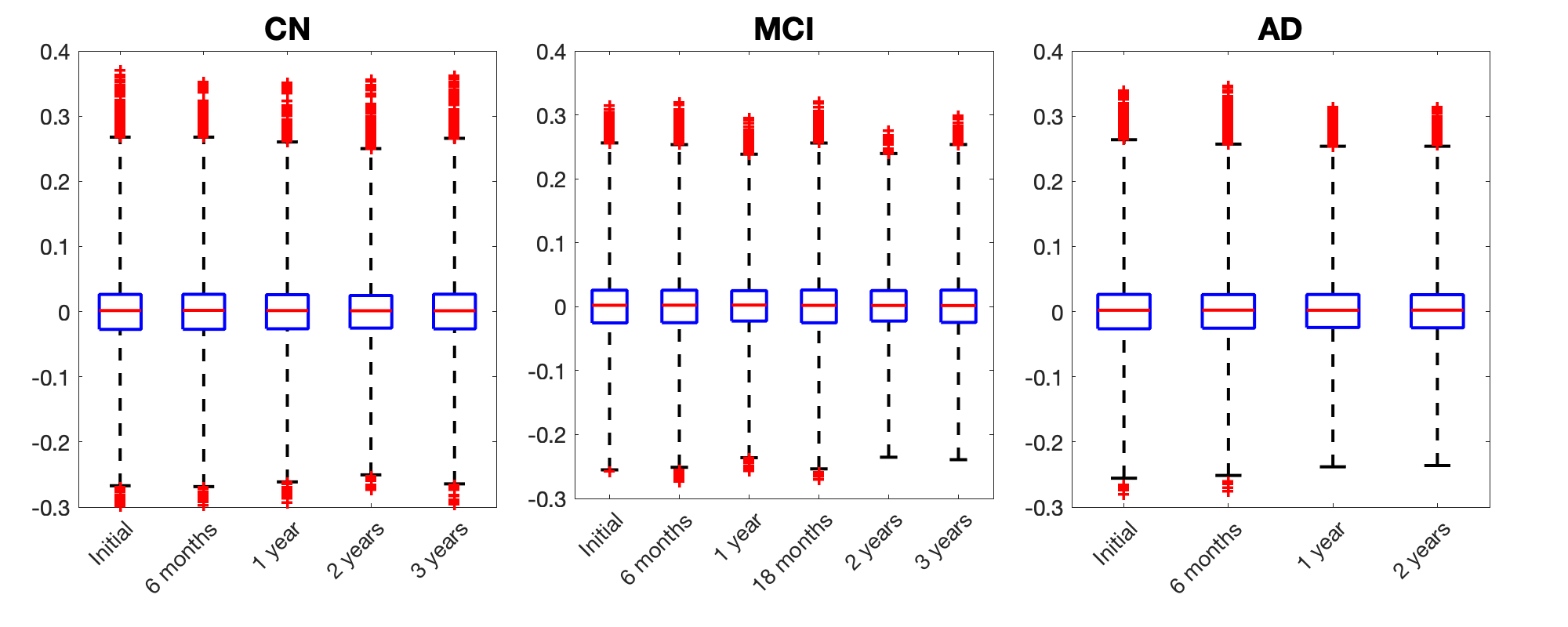}
   \caption{Pixel values of the estimated brain images for CN (left), MCI (middle) and AD (right) respectively indicate more hollow area increased with time in the latter two groups.}
   \label{fig:pixel_PET}
\end{figure}

Figure \ref{fig:pixel_PET} shows that the highest quantiles (horizontal bars) of AD patients decrease with time.
This means there are increasing `hollow' area in these brain images (especially in the MCI and AD groups) as time goes by, indicating the brain shrinkage.

\begin{figure}[t] 
   \centering
   \includegraphics[width=1\textwidth,height=.3\textwidth]{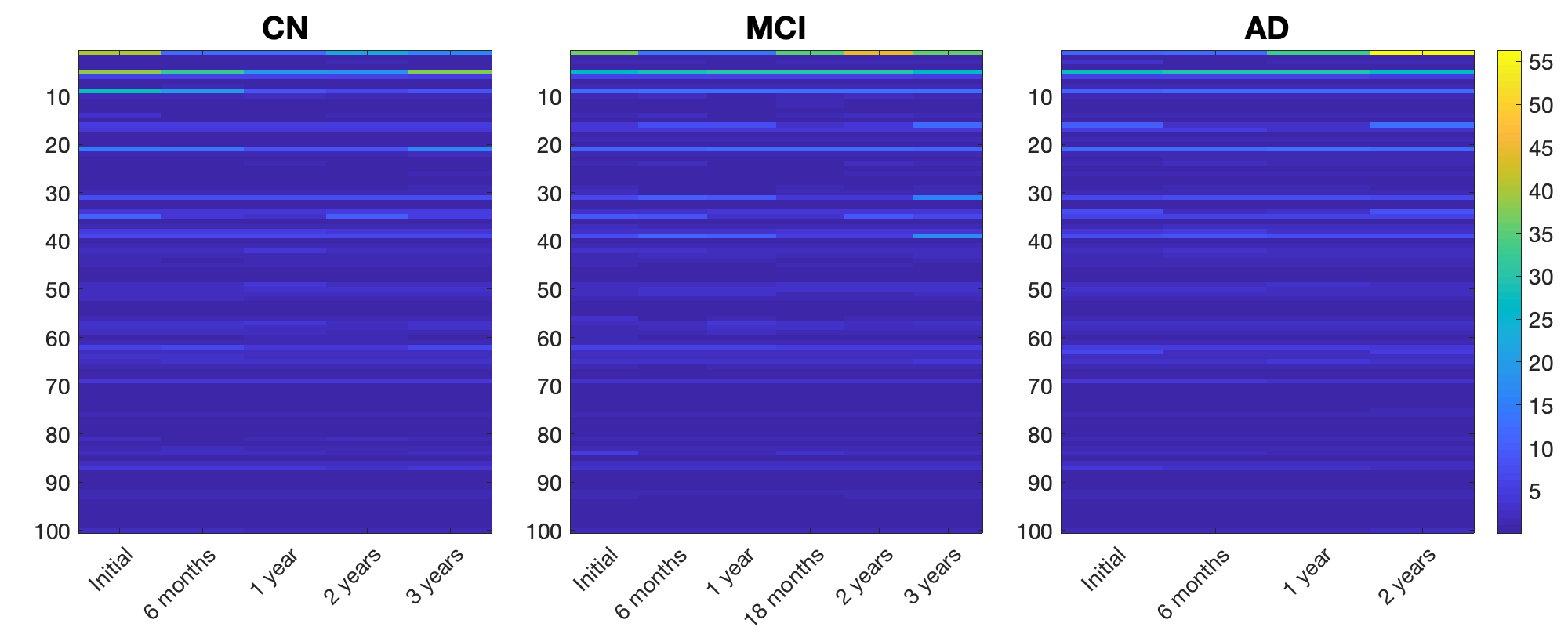}
   \caption{Dynamic eigenvalues in the generalized STGP model analyzing PET brain images for CN (left), MCI (middle) and AD (right) respectively by the proposed model II (qKron-sum).}
   \label{fig:dyneigv_PET}
\end{figure}

Figure \ref{fig:dyneigv_PET} compares the dynamic eigenvalues $\lambda_\ell^2(t)$ for different groups.
Interestingly, they do not decrease monotonically in $\ell$ (on y-axis) but rather damp out as $\ell$ becomes larger.
When $\ell$ gets close to $L=100$, the magnitude of $\lambda_\ell^2(t)$ becomes small enough to be negligible.

\begin{figure}[H] 
   \centering
   \includegraphics[width=1\textwidth,height=.125\textwidth]{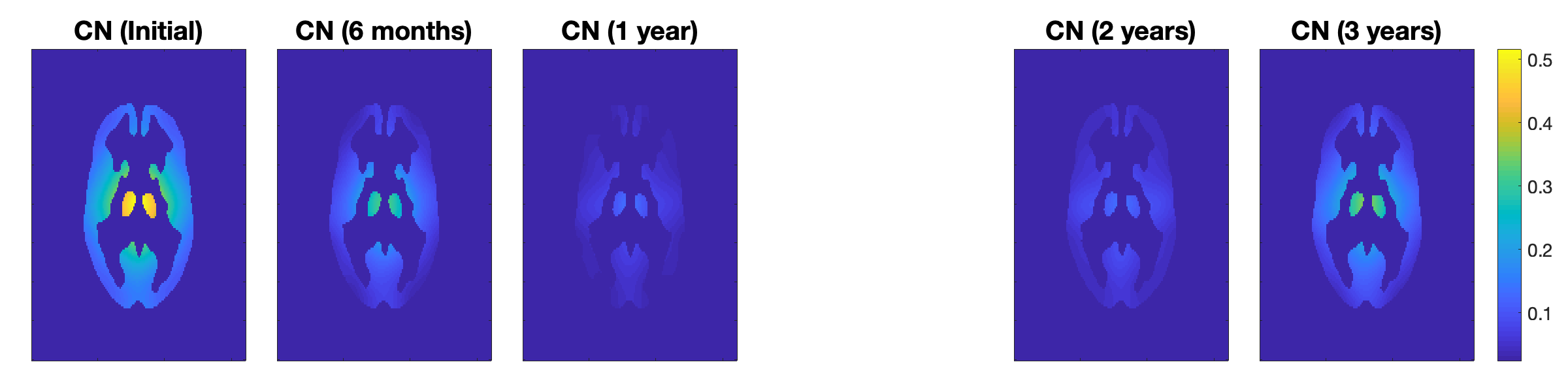} 
   \includegraphics[width=1\textwidth,height=.125\textwidth]{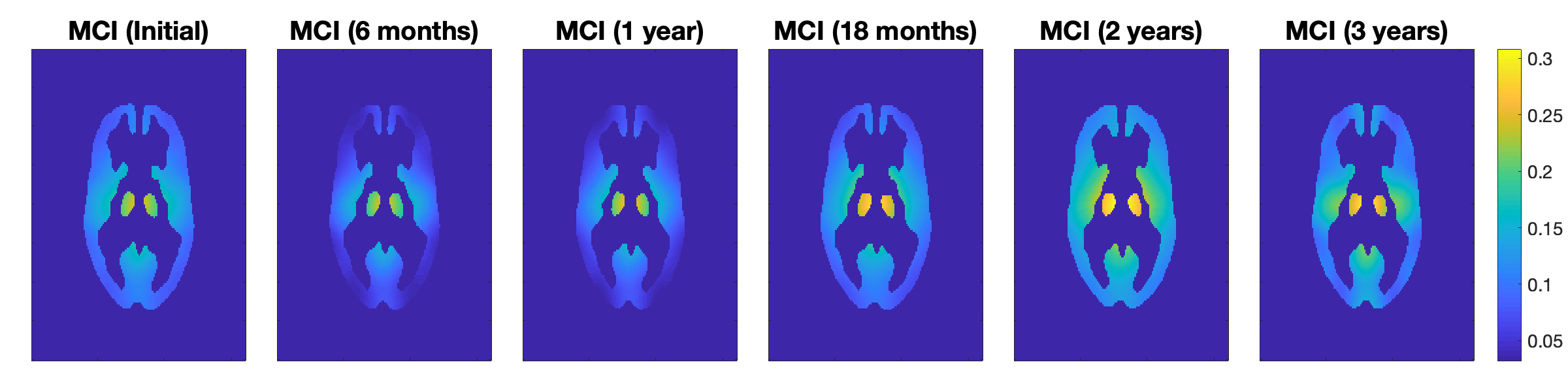} 
   \includegraphics[width=1\textwidth,height=.125\textwidth]{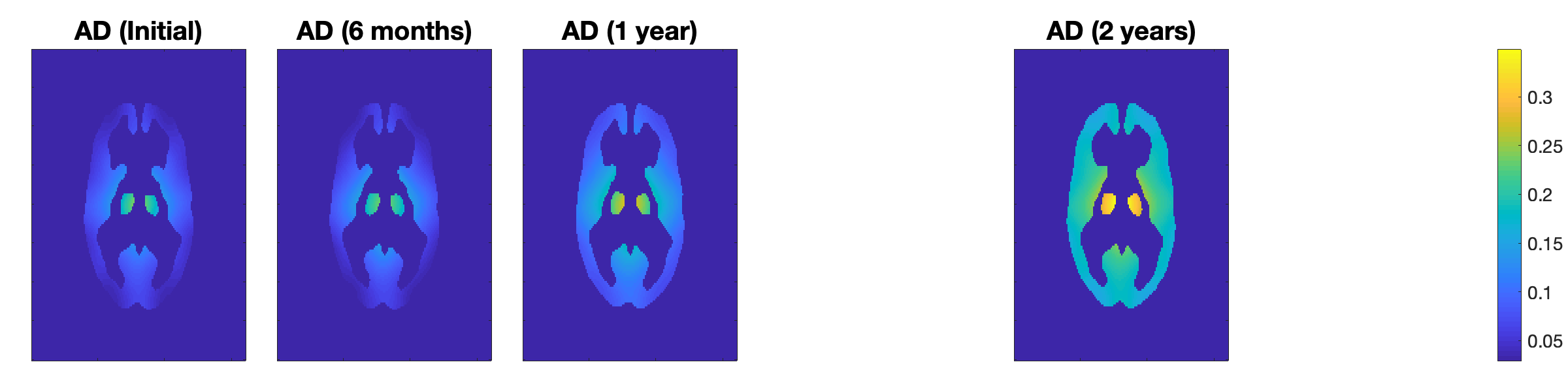} 
   \caption{Estimated variance of the brain images for CN (top row), MCI (middle row) and AD (bottom row) respectively by the proposed model II (qKron-sum).}
   \label{fig:estvar_PET}
\end{figure}

In Section \ref{sec:PET_fit} we summarize the correlation between the brain ROI and POI.
In fact, we have more results regarding TESD presented in different forms.
Figure \ref{fig:estvar_PET} shows the estimated variances of the brain images as functions of time.
They are all small across different groups with small variation along the time. Comparatively, the thalamus and some part of the temporal lobe are more active than the rest of the brain.

\begin{figure}[t] 
   \centering
   \includegraphics[width=1\textwidth,height=.15\textwidth]{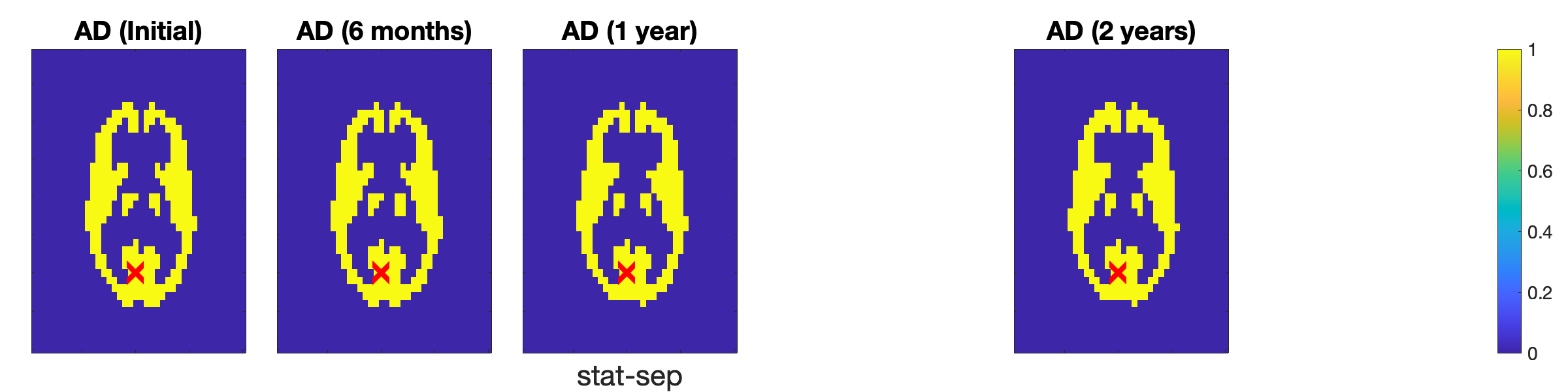}
   \includegraphics[width=1\textwidth,height=.15\textwidth]{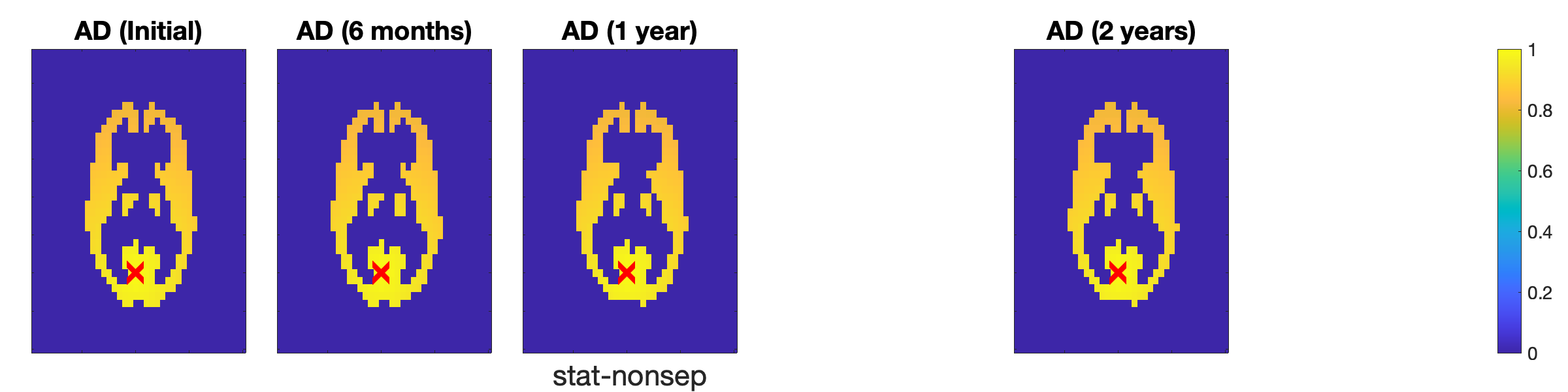}
   \includegraphics[width=1\textwidth,height=.15\textwidth]{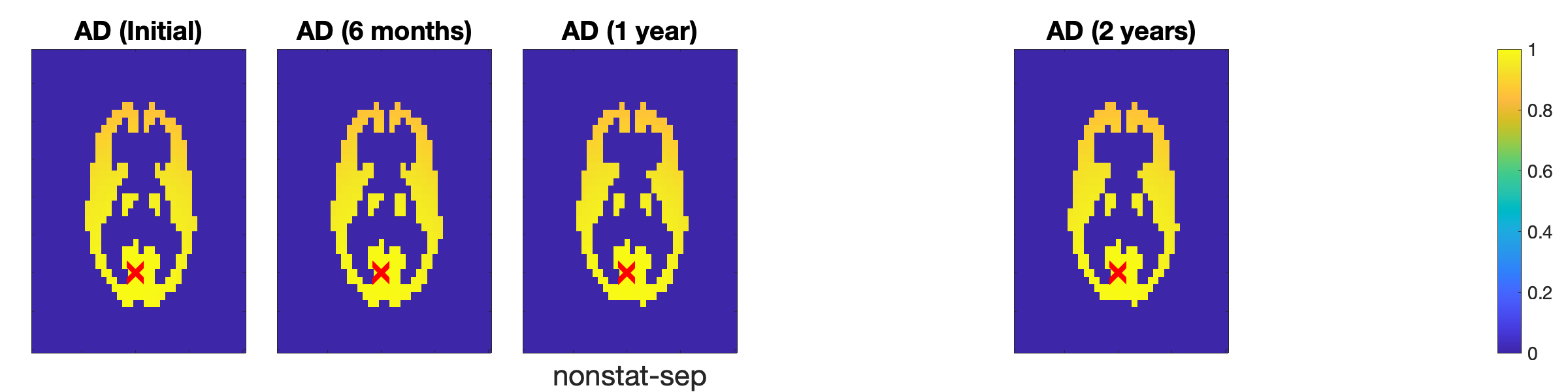}
   \includegraphics[width=1\textwidth,height=.15\textwidth]{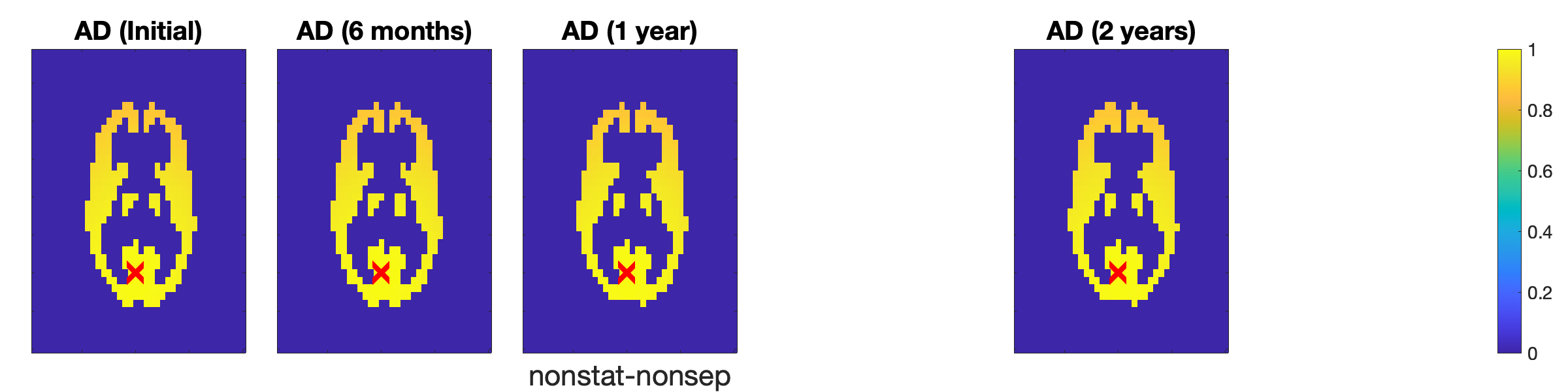}
   \includegraphics[width=1\textwidth,height=.15\textwidth]{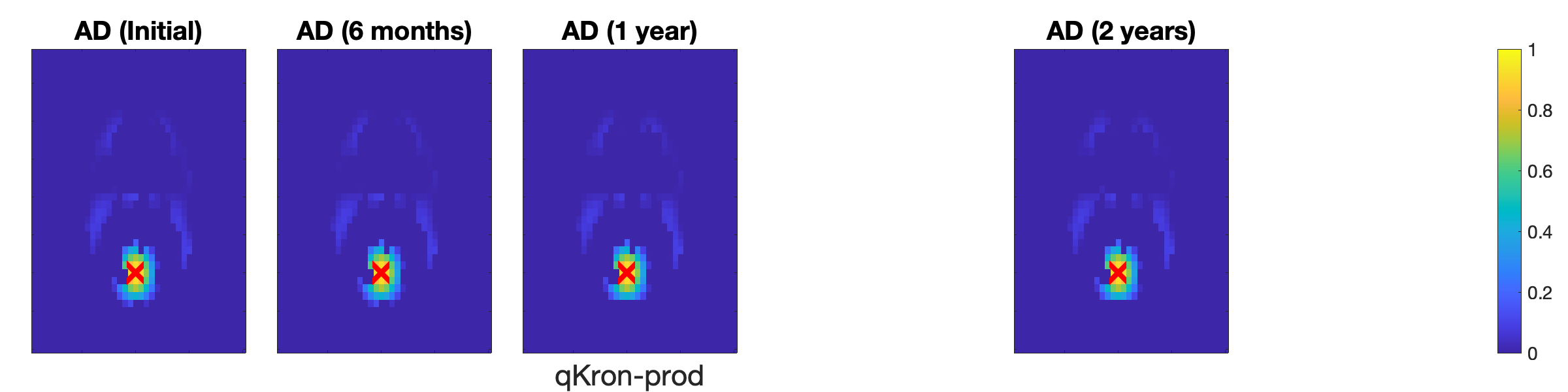}
   \includegraphics[width=1\textwidth,height=.15\textwidth]{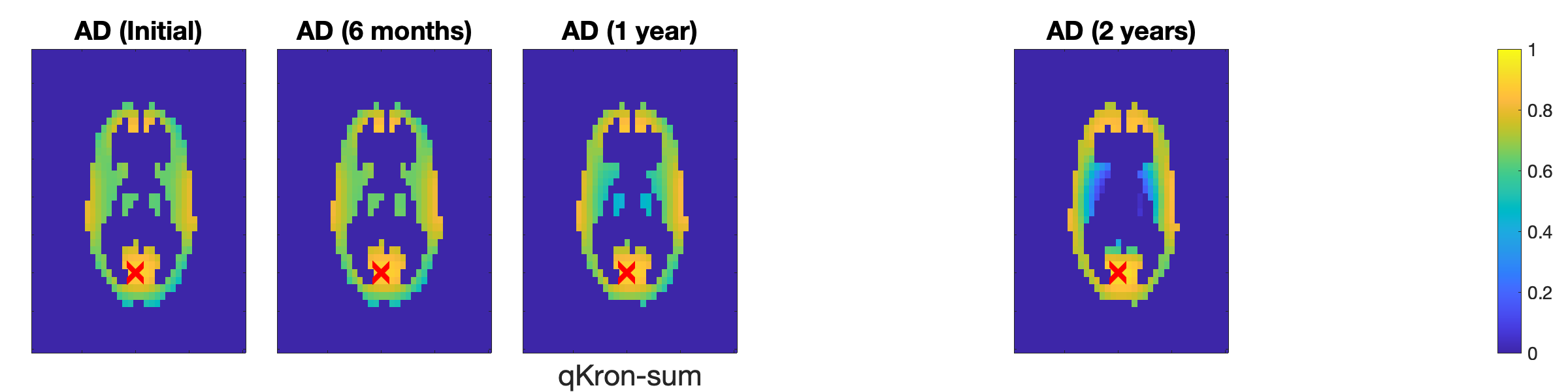} 
   \caption{Estimated correlation between the brain region of interest and a point of interest of the brain images for AD by various models.}
   \label{fig:estcorpoi_multiplemodels}
\end{figure}

In Figure \ref{fig:estcorpoi_multiplemodels}, we compare the estimated ROI-POI correlation on a $40\times 40$ mesh by various spatiotemporal models. All fail to capture the time evolution of such correlations except the proposed qKron-sum model.


\begin{figure}[t] 
   \centering
   \includegraphics[width=1\textwidth,height=.4\textwidth]{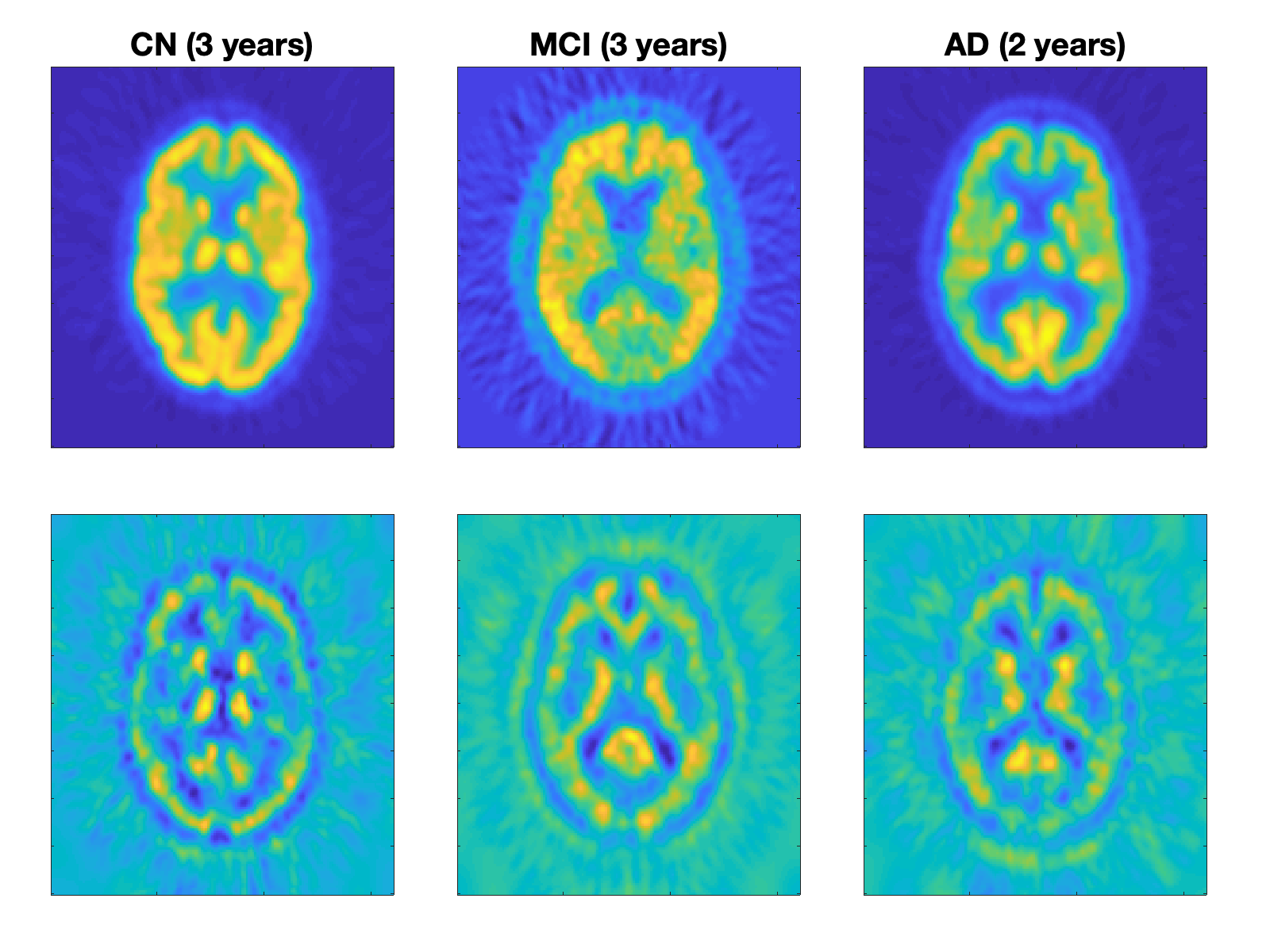}
   \caption{Prediction of the brain images at the last time point for CN (left column), MCI (middle column) and AD (right column) respectively by the proposed model II (qKron-sum). 
   The upper row shows individuals' brain images; the model outputs are displayed in the lower row.}
   \label{fig:predm_PET}
\end{figure}

\begin{figure}[t] 
   \centering
   \includegraphics[width=1\textwidth,height=.2\textwidth]{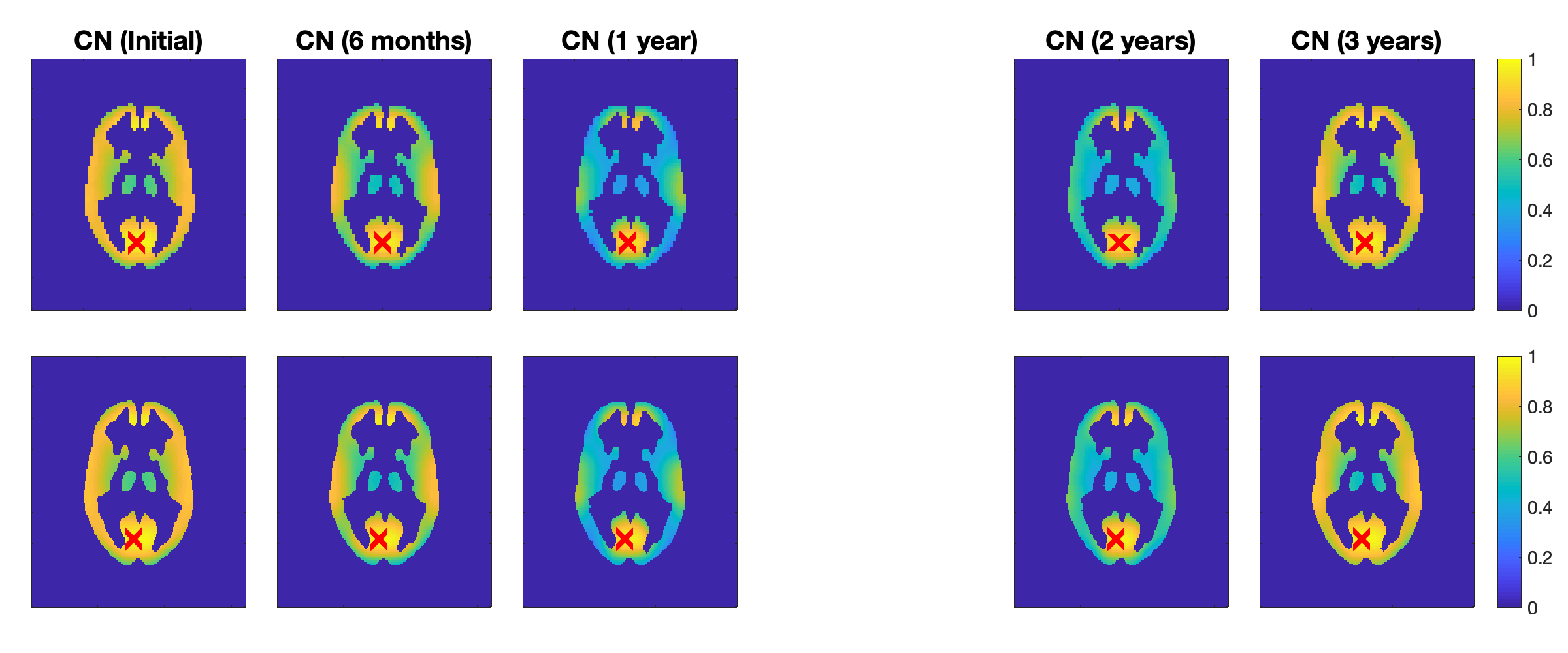} 
   \includegraphics[width=1\textwidth,height=.2\textwidth]{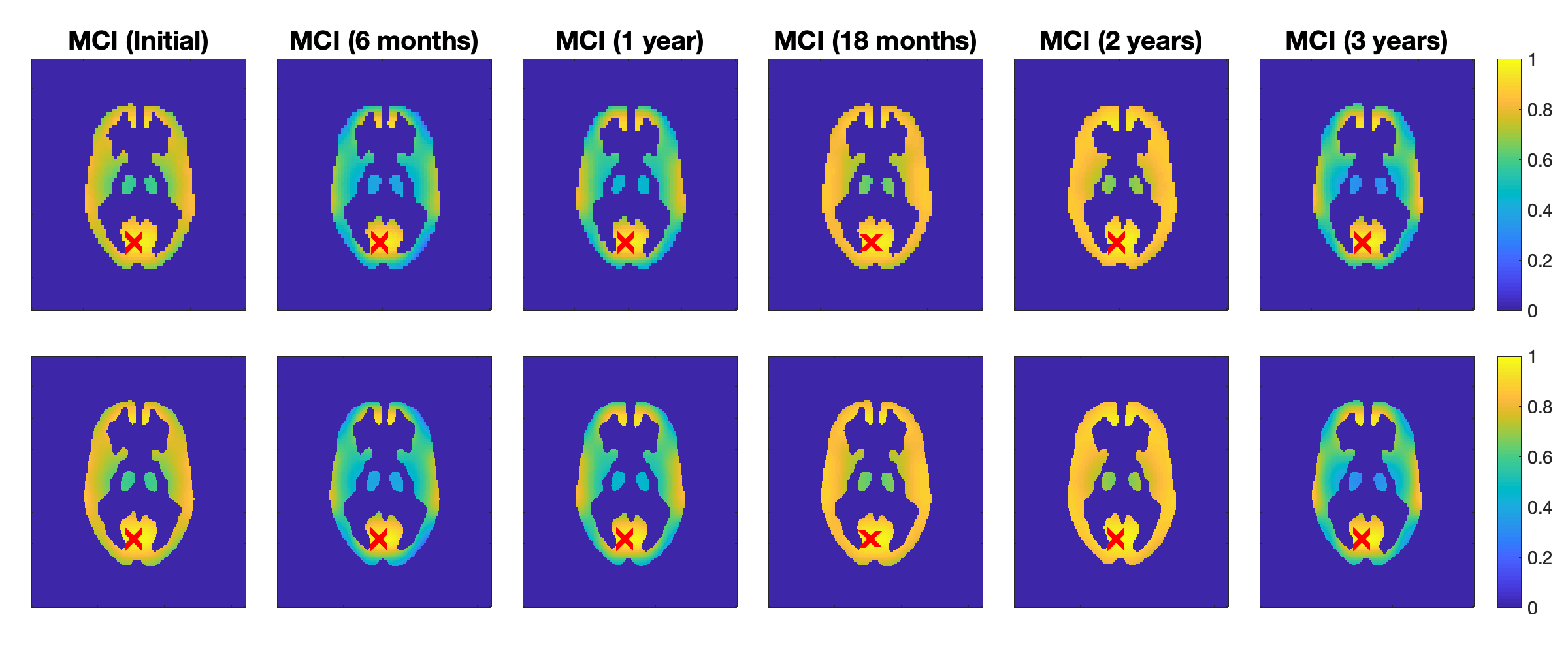} 
   \includegraphics[width=1\textwidth,height=.2\textwidth]{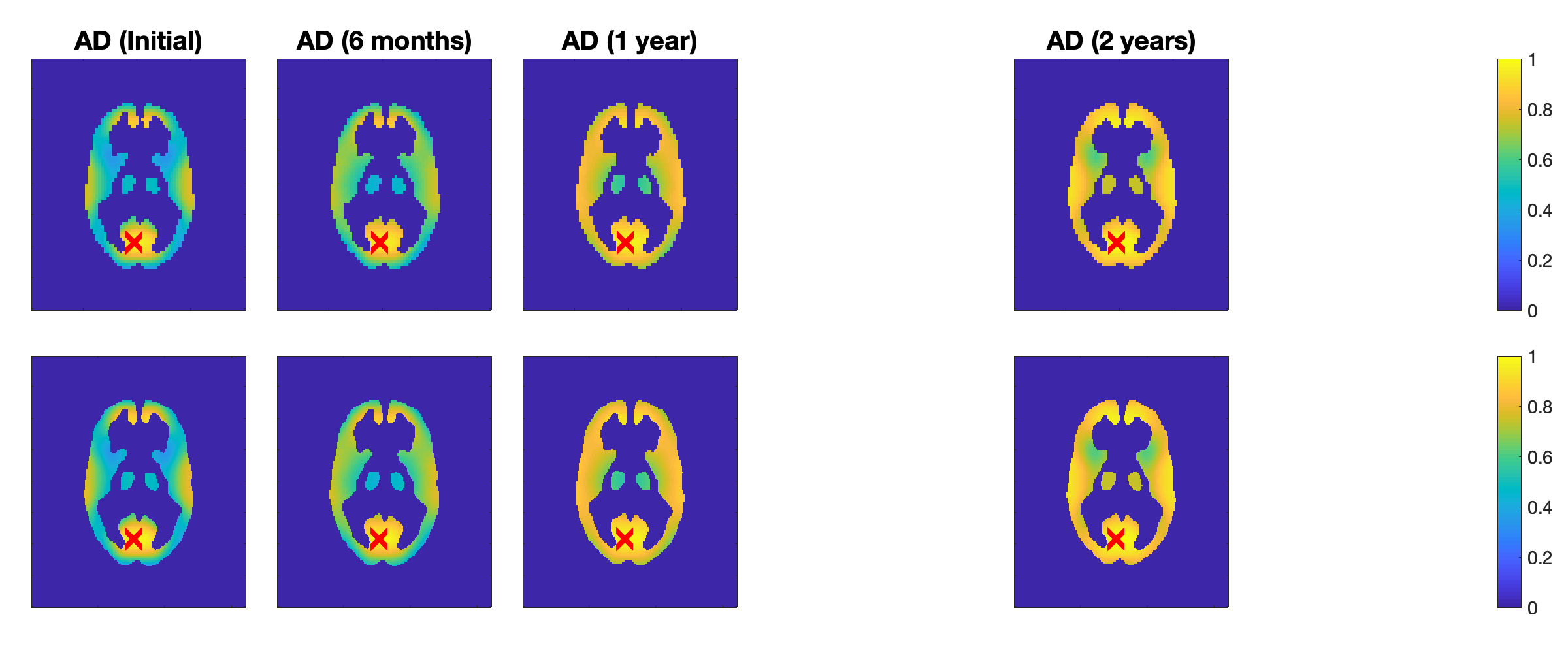} 
   \caption{Extended correlation between the brain region of interest and a point of interest from the coarse mesh (upper) to the fine mesh (lower) in each of CN, MCI and AD groups by the proposed model II (qKron-sum).}
   \label{fig:extdcorpoi1_PET}
\end{figure}

Finally, we consider the problem of extending TESD to new locations, infeasible in dynamic covariance models.
We coarsen the mesh by using every other pixel and build our model based on the resulted $80\times 80$ images.
Figure \ref{fig:extdcorpoi1_PET} compares the estimated ROI-POI correlation on the coarse mesh (upper row) and the prediction to the original $160\times 160$ mesh (lower row) which is consistent with the estimation result in Figure \ref{fig:corpoi_PET}. 
Such extension provides more fine details of TESD at new locations without data.
They all illustrate the benefit of a fully nonparametric approach in modeling TESD in the spatiotemporal data.

\clearpage
\bibliography{references}

\end{document}